\def\expandafter\quotation\expandafter{\quote\singlespacing}
\DeclareMathOperator*{\argmax}{arg\,max}
\DeclareMathOperator*{\plim}{plim}
\newcommand{\dd}{\,\mathrm{d}}
\newtheorem{assumption}{Assumption}
\newtheorem{theorem}{Theorem}
\newtheorem{lemma}{Lemma}
\theoremstyle{remark}
\newcolumntype{L}[1]{>{\RaggedRight\arraybackslash}p{#1}}
\newcolumntype{C}[1]{>{\Centering\arraybackslash}p{#1}}
\newcolumntype{R}[1]{>{\RaggedLeft\arraybackslash}p{#1}}
\begin{document}


\begin{titlepage}
	\begin{center}
		{\Large \textbf{Functional Factor Regression with an Application to Electricity Price Curve Modeling}\par}
		
		\vspace{1cm}
		
		\begin{tabular}{cc}
			{\large Sven Otto\textsuperscript{†}} & {\large Luis Winter\textsuperscript{‡}} \\
			\textit{University of Cologne} & \textit{University of Cologne}
		\end{tabular}
		
		\vspace{1cm}
		
		\begin{minipage}{\textwidth}
			\begin{abstract}
	\indent We propose a function-on-function linear regression model for time-dependent curve data that is consistently estimated by imposing factor structures on the regressors. An integral operator based on cross-covariances identifies two components for each functional regressor: a predictive low-dimensional component, along with associated factors that are guaranteed to be correlated with the dependent variable, and an infinite-dimensional component that has no predictive power. In order to consistently estimate the correct number of factors for each regressor, we introduce a functional eigenvalue difference test. While conventional estimators for functional linear models fail to converge in distribution, we establish asymptotic normality, making it possible to construct confidence bands and conduct statistical inference. The model is applied to forecast electricity price curves in three different energy markets. Its prediction accuracy is found to be comparable to popular machine learning approaches, while providing statistically valid inference and interpretable insights into the conditional correlation structures of electricity prices.\\
	\vspace{0.5cm}\\
	\noindent\textbf{Keywords:} Functional linear regression, Factor model, Asymptotic normality, Eigenvalue difference test, Electricity price modeling \\
	\noindent\textbf{JEL Codes:} C32, C38, C55, Q47\\
	
	\bigskip
\end{abstract}
		\end{minipage}
		
		\vspace{\fill}
	
		\begin{minipage}{\textwidth}
			\rule{0.5\textwidth}{0.5pt}\\
			\scriptsize
			\textsuperscript{†}Institute of Econometrics and Statistics—University of Cologne, \href{mailto:sven.otto@uni-koeln.de}{sven.otto@uni-koeln.de}\\
			\textsuperscript{‡}Corresponding author: Institute of Econometrics and Statistics—University of Cologne, \href{mailto:l.winter@uni-koeln.de}{l.winter@uni-koeln.de}\\
			We thank Jörg Breitung, Frédéric Ferraty, Simon Hirsch, Siegfried Hörmann, Dominik Liebl, Alexander Mayer, Oliver Ruhnau, Nazarii Salish, and Fabian Telschow for their valuable comments and suggestions. This work was supported by the
			Deutsche Forschungsgemeinschaft (DFG) under project number 511905296.
			
		\end{minipage}
	
	\end{center}
	
	\setcounter{page}{0}
	\thispagestyle{empty}
\end{titlepage}

\newpage
\pagenumbering{Roman} 
\selectlanguage{english}

\newpage 
\pagenumbering{arabic}
\setcounter{page}{1}
\section{Introduction}

Empirical economic research increasingly relies on curve-valued data to capture rich, time-varying heterogeneity in economic variables. Recent macroeconomic studies represent heterogeneous household earnings or consumption as entire functions to reflect distributional dynamics (\cites{Chang.2024}{Bayer.2025}), and model term structures of interest rates and inflation expectations as curves (\cites{Aruoba.2019}{Inoue.2021}). In energy economics, daily electricity price curves are similarly treated as continuous functions (\cites{Liebl.2013}{Chen.2017}{Gonzalez.2018}). Observing entire curves repeatedly over time requires econometric methods flexible enough to handle infinite-dimensional objects and parsimonious enough to support statistical inference.\\
\indent When both regressors and dependent variables are curves, linear function-on-function regression presents a natural modeling framework. Despite its intuitive appeal, a key limitation has hindered its widespread use in empirical economics: classical estimators achieve consistency but do not yield asymptotic normality for the functional slope operator. The most common estimation approaches are based on principal components spectral truncation (\cite{Imaizumi.2018}) or Tikhonov/ridge regularization (\cite{Benatia.2017}). As demonstrated by \textcite{Mas.2007}, \textcite{Crambes.2013}, and \textcite{Babii.2020}, when regressors truly reside in an infinite-dimensional space, the finite-dimensional truncated part may converge at a parametric rate while the infinite-dimensional remaining truncation tail converges only at a slower, nonparametric rate. This discrepancy prevents weak convergence in the underlying operator norm, making the development of inferential methods based on standard errors and confidence bands impossible through conventional central limit theorems. Asymptotic normality has only been developed for the biased ridge estimator with fixed shrinkage parameter (\cite{Benatia.2017}) and for the predicted variable from principal components estimation (\cite{Crambes.2013}), but these limited results do not enable the construction of valid confidence bands for the regression slope coefficient functions. Some workarounds for constructing confidence sets involve relaxing coverage requirements to ensure bands contain the true function at most points in the domain rather than everywhere (\cite{Imaizumi.2019}) or are conservative relying on concentration inequality bounds (\cite{Babii.2020}). Hypothesis tests have only been developed for the lack of dependence in the i.i.d.\ case (\cites{Cardot.2003}{Kokoszka.2008}{Benatia.2017}) or for the operator norm itself (\cite{Kutta.2022}). However, functional extensions of standard central limit theorem-based regression diagnostic tools, such as standard errors and confidence bands, are ruled out by this fundamental impossibility result.\\
\indent This paper overcomes these limitations by developing a novel functional factor regression framework that preserves the flexibility of unrestricted function‐on‐function regression while allowing for statistical inference known from multiple linear regression. We argue that the predictive information contained in typical macroeconomic functional regressors (e.g., yield curves, inflation expectation curves, energy spot price curves) is confined to a finite‐dimensional subspace, while the remaining infinite‐dimensional component does not contribute to the regression relationship. For instance, \textcite{Otto.2025} provide empirical evidence that the functional autoregression operator of yield and mortality curves is indeed of finite rank.\\
\indent This premise leads to an approximate factor structure in which a finite-dimensional factor component of $K<\infty$ factors drives the regression relationship, while the remainder component can be disregarded as it is uncorrelated with all variables in the system. Consequently, the functional slope operator has finite rank and can be identified using the cross-covariance operator between regressor and regressand. This setup naturally extends to multiple functional regressors and accommodates additional scalar covariates and lagged dependent functional variables, making it particularly suitable for functional time series applications.\\
\indent The core of our estimation approach is the product of the cross‐covariance operator with its adjoint, denoted as $D$, that identifies precisely which directions in the regressor have predictive power for the response variable. Unlike functional principal component analysis, which captures directions of maximum variance in the regressor, our method isolates factors guaranteed to be correlated with the dependent variable. The number of positive eigenvalues of $D$ equals the number of factors $K$ and its eigenfunctions reveal the directions in which the regressor exhibits predictive power. This structure allows us to estimate the corresponding factor loadings from the sample counterpart of $D$. A similar operator for the purely autoregressive case has been considered by \textcite{Lam.2012} and \textcite{Zhang.2019} in the multivariate literature, and by \textcite{Bathia.2010} and \textcite{Otto.2025} in the context of functional data. \textcite{Cho.2013} identify the predictive components in a univariate function-on-function regression by singular value decomposition of the cross-covariance operator. However, their theoretical framework does not address the issue of statistical inference. Moreover, to consistently estimate the number of factors, we propose a functional equivalent of the eigenvalue difference test proposed by \textcite{Wu.2018}, which exploits the property that the population eigenvalues of $D$ are exactly zero beyond the $K$-th eigenvalue.\\
\indent Building on this identification strategy, we develop a consistent estimation procedure that enables valid inference. To estimate the functional slope parameter, we perform an auxiliary regression of the functional dependent variable on the estimated factors using ordinary least squares. Our theoretical results establish that the functional factor regression estimator is asymptotically normal, allowing for the construction of valid confidence bands and hypothesis tests. The asymptotic variance accounts for the uncertainty introduced by the generated regressors problem and allows for heteroskedasticity in the functional error term.\\
\indent We demonstrate the practical value of our method through an application to electricity market data to model and forecast daily price curves using demand forecasts and renewable energy generation as functional regressors. Our empirical results reveal economically interpretable patterns in how these variables influence electricity prices throughout the day, visualized through three-dimensional coefficient plots with corresponding p-value heatmaps. The interpretation of the estimated bivariate slope coefficient functions is in line with economic theory and extends our knowledge of prevalent correlation structures in power markets. In out-of-sample forecasting performance, our functional factor regression outperforms both expert and naive models while achieving accuracy comparable to machine learning approaches like LASSO regression, but with the significant advantage of providing interpretable coefficients and valid statistical inference. Our methodology can be applied using our accompanying \textsf{R} package "ffr".\\
\indent Function-on-function linear regression models have a long tradition in the mathematical statistics literature (see \cites{Ramsay.1991} {Bosq.2000}{Yao.2005}{Crambes.2013}{Hormann.2014}{Happ.2018}). While these general frameworks impose fewer restrictions than our approach, they fundamentally cannot permit the construction of valid confidence bands through functional standard errors. Our key insight is that accepting a finite-dimensional predictive component enables asymptotic normality. In contrast, if the predictive component were infinite-dimensional, the impossibility result discussed above would apply. The eigenvalue difference test we propose not only consistently estimates $K$ when finite, but also serves to validate this finite-dimensionality assumption empirically, as the criterion would diverge if $K$ were infinite. Our application demonstrates that just 3 to 5 factors suffice to capture the predictive information of functional energy market data.\\
\indent In contrast to the general infinite-dimensional approach, the simpler case assumes finite-dimensional functional regressors (see, e.g., \cite{Hormann.2023}). Under this stronger restriction, the functional model can be expressed as a well-posed regression with estimators that achieve parametric rates of convergence. While our framework encompasses this as a special case, such models require the covariance operator of the regressors to have finite rank—an overly restrictive assumption that constrains regressors to a finite-dimensional subspace of the function space. Instead, our model requires only the pairwise cross-covariance operators between the functional regressor and response to be of finite rank, while the covariance operator of the regressor itself may have infinite rank.\\[7pt]
\textbf{Notations}\\
Let $H=L^2([a,b])$ be the space of functions $y: [a,b] \rightarrow \mathbb{R}$ with $\int_{a}^{b}(y(r))^2 \,dr < \infty$. Together with the inner product $\langle x,y \rangle = \int_{a}^{b} x(r)y(r) \,dr$, $x,y \in H$, and the norm $\|y\| = \langle y,y \rangle^{1/2}$, the space $H$ is a Hilbert space. Every square-integrable bivariate function $\tau: [a,b] \times [a,b] \rightarrow \mathbb{R}$ defines an integral operator $\mathcal T: H \rightarrow H$ such that $\mathcal T(y)(r) = \int_{a}^{b}\tau(r,s)y(s)\,ds$ for all $y \in H$.
The Hilbert-Schmidt operator norm of $\mathcal T$ is $\|\mathcal T\|_{\mathcal S} = (\int_a^b \int_a^b \tau(r,s)^2 \,ds \,dr)^{1/2}$.
If $\kappa$ and $v$ satisfy the eigenequation $\int_{a}^{b}\tau(r,s)v(s) \,ds = \kappa v(r)$, then $\kappa$ is called eigenvalue and $v$ a corresponding eigenfunction of $\mathcal T$.

\section{Function-on-function regression problem}\label{sec:regression}

Consider the covariance stationary time series of curves $Y_1(r), \ldots, Y_T(r)$, defined on the closed domain $r \in [a,b ]$, to be our dependent random variable of interest. Moreover, there is a set of $j=1, \ldots, J$ time-dependent covariance stationary functional regressors $X_{jt}(s)$, $s \in [a,b]$, available. Besides that, consider an $\mathbb{R}^N$ valued vector $w_t = (1, w_{2t}, \ldots, w_{Nt})'$, including an intercept and covariance stationary covariates. A fully functional linear regression model thus takes the form
\begin{equation}\label{eq:basemodel}
	Y_t(r) = w_t'\alpha(r) + \sum_{j=1}^J \int_a^b \beta_j(r,s) X_{jt}(s) \,ds + u_t(r),
\end{equation}
where $r,s\in [a,b]$ and $t=1, \ldots, T$. As standard in the regression framework, $\alpha(r) = (\alpha_1(r), \ldots, \alpha_N(r))'$ and $\beta_1(r,s), \ldots, \beta_J(r,s)$ are deterministic coefficient functions while $u_t(r)$ is an error term. We assume exogenous regressors with a martingale difference sequence error term:
\begin{assumption}\label{as:exo}
	\textbf{\textup{(Exogeneity).}}\\
	Define the information set $\mathcal{F}_{t-i}=\sigma\left(u_{t-1-i}, w_{t-i}, X_{1(t-i)}, \ldots, X_{J(t-i)}\right).$ Then
	\begin{equation*}
		E[u_t(r) \mid \{\mathcal{F}_{t-i}: i \geq 0 \}] = 0,\quad \text{for all} \ r \in [a,b].
	\end{equation*}
\end{assumption}

Our setup includes the functional autoregressive model (\cite{Bosq.2000}), where the regressor functions $X_{jt}$ are lags of $Y_t$.
A common identifiability assumption is that the covariance function $Cov(X_{jt}(r), X_{jt}(s))$ is strictly positive definite (\cites{Mas.2007}{Imaizumi.2018}), which implies that $Var(\langle X_{jt}, h \rangle) > 0$ for all $\|h\| > 0$. This condition defines the functional equivalent of the positive definite design matrix condition imposed in classical regression models.
However, if the regressor function lives in some subspace of $H$, then $\beta_j(r,s)$ remains unidentified in the directions where the regressor has zero variance. 
Therefore, we adopt a weaker condition which restricts the domain of the regression operator to the subspace where the regressor has positive variance (\cite{Caponera.2022}). This condition allows for both infinite- and finite-rank covariance operators.
\begin{assumption}\label{as:identifiability}
	\textbf{\textup{(Identification I).}}\\
	Let $\Gamma_j$ denote the covariance operator of $X_{jt}$ mapping any function $h \in H$ to the function $\int_a^b Cov(X_{jt}(\cdot), X_{jt}(s)) h(s) \,ds \in H$. Then,
	$$\beta_j(r,\cdot) \in \overline{Im(\Gamma_j)} \subseteq H,\quad \text{for all} \ r \in [a,b].\footnote{The bar indicates the closure of the set, which is required because $Im(\Gamma_j)$ is not necessarily closed if $\Gamma_j$ has infinite rank. An equivalent condition, not requiring operator theory, is: $\int_a^b \beta_j(r,s) h(s) \,ds = 0$ for all $h \in H$ with $Var(\langle X_{jt}, h \rangle) = 0$, and all $r \in [a,b]$.}$$
\end{assumption}

To illustrate the challenges of estimating model \eqref{eq:basemodel}, consider, for simplicity, the case $N=0$ and $J=1$ with one functional regressor $X_t$ and centered variables, i.e., $E[Y_t(r)] = E[X_{t}(s)] = 0$ for all $r,s \in [a,b]$. 
The model equation becomes
\begin{equation} \label{eq:modelsimple}
	Y_t(r) = \int_a^b \beta(r,s) X_{t}(s) \,ds + u_t(r).
\end{equation}
It is useful to represent the model in terms of the functional principal components (FPCs) $\phi_l$ of $X_t$, which are a sequence of orthonormal eigenfunctions of the covariance operator $\Gamma$ of $X_t$, ordered descendingly by their corresponding positive eigenvalues. The FPCs form an orthonormal basis of $\overline{Im(\Gamma)}$, the subspace where the regression operator is defined according to Assumption \ref{as:identifiability}.\\
\indent In the following subsections, we discuss the two conventional approaches that are usually taken to estimate the simplified model (2). The general setup, discussed in Section \ref{sec:ill-posed}, allows for infinite rank $\Gamma$ but suffers from an ill-posed inverse problem whereas the restricted case, discussed in Section \ref{sec:well-posed}, requires a finite rank $\Gamma$ and implies a well-posed regression framework.
In Section \ref{sec:simpleFFR}, we introduce our novel method which serves as a middle ground between the alternatives and allows for an infinite rank $\Gamma$ but requires a finite-rank cross-covariance operator.

\subsection{Infinite-dimensional regressor with infinite factor structure} \label{sec:ill-posed}

Firstly, consider the case where the covariance operator $\Gamma$ has infinite rank, so that $\phi_l$ forms an infinite sequence of orthonormal eigenfunctions. The regression coefficient can be represented as
\begin{equation} \label{eq:betaKLbasis}
	\beta(r,s) = \sum_{l=1}^\infty b_{l}(r) \phi_l(s), \quad b_{l}(r) = \int_a^b \beta(r,s) \phi_l(s) \,ds,
\end{equation}
and the basis representation of the regressor, known as the Karhunen-Loève decomposition, is given by
\begin{equation} \label{eq:KL}
	X_{t}(s) = \sum_{l=1}^\infty x_{lt} \phi_l(s), \quad x_{lt} = \langle X_{t}, \phi_l \rangle.
\end{equation}
Equation \eqref{eq:KL} indicates the infinite factor structure of the regressor $X_t$, where the scores $x_{lt}$ can be interpreted as FPC factors, and the FPCs $\phi_l$ as the corresponding orthonormal loading functions. By combining \eqref{eq:KL} and \eqref{eq:betaKLbasis}, model \eqref{eq:modelsimple} becomes
\begin{equation} \label{eq:modelsimpleinfinite}
	Y_t(r) = \sum_{l=1}^\infty b_{l}(r) x_{lt} + u_t(r),
\end{equation}
which is a regression on infinitely many regressors $x_{lt}$.
Each coefficient $\beta_{l}(r)$ is identified by the moment condition
$$
E[x_{lt} Y_t(r)] = b_l(r) E[x_{lt}^2],
$$
and, by equation \eqref{eq:betaKLbasis}, the coefficient function has the solution 
$$\beta(r,s) = \sum_{l=1}^\infty \frac{E[x_{lt} Y_t(r)]}{E[x_{lt}^2]} \phi_l(s).$$
Estimating $\beta(r,s)$ is challenging because it relies on infinitely many moment conditions, which results in an ill-posed inverse problem that requires regularization (see \cite{Carrasco.2007}).\\
\indent The most common solution is spectral cut-off, where the infinite sum over the FPCs is truncated after $\tilde L$ components.
The resulting FPC estimator is given by 
\begin{align}
	\widehat \beta_{\text{FPC}}(r,s) = \sum_{l=1}^{\tilde L} \widehat b_l(r) \widehat \phi_l(s), \quad \widehat b_l(r) = \frac{\sum_{t=1}^T \widehat x_{lt}  Y_t(r) }{\sum_{t=1}^T \widehat x_{lt}^2}, \label{eq:FPC}
\end{align}
where $\widehat \phi_l$ are the sample FPCs and $\widehat x_{lt} = \langle X_t, \widehat \phi_l \rangle$ the sample FPC scores. Under an appropriate choice of the truncation parameter $\tilde L$ as an increasing function of the sample size, consistency results for $\widehat \beta_{\text{FPC}}(r,s)$ are established by \textcite{Crambes.2013}, \textcite{Hormann.2014}, and \textcite{Imaizumi.2018}, with non-parametric rates of convergence slower than $\sqrt T$. Other regularization approaches include double FPC truncation (see \cite{Yao.2005}) and Tikhonov regularization (see \cites{Benatia.2017}{Caponera.2022}). However, it is impossible to prove a central limit theorem for spectral cut-off and Tikhonov regularized estimators for $\beta(r,s)$ in the Hilbert-Schmidt topology (\cites{Mas.2007}{Babii.2020}).

\subsection{Finite-dimensional regressor with exact factor structure} \label{sec:well-posed}

The regression problem \eqref{eq:modelsimple} becomes well-posed if the covariance operator $\Gamma$ has finite rank.
This condition has been considered, for instance, in \textcite{Li.2013}, \textcite{Li.2022}, and \textcite{Hormann.2023}.
If $rank(\Gamma) = L < \infty$, then the Karhunen-Loève expansion \eqref{eq:KL} becomes
\begin{equation} \label{eq:Lfactormodel}
	X_{t}(s) = \sum_{l=1}^L x_{lt} \phi_l(s),
\end{equation}
which describes an exact factor model, where the loadings $\phi_l$ span a finite-dimensional factor space.
Equation \eqref{eq:modelsimpleinfinite} becomes
\begin{equation} \label{eq:simplefunctionalfactorregression}
	Y_t(r) = \sum_{l=1}^L b_{l}(r) x_{lt} + u_t(r),
\end{equation}
which is a well-posed regression of $Y_t(r)$ on finitely many FPC factors $x_{1t}, \ldots, x_{Lt}$. \textcite{Hormann.2023} show that the spectral cut-off estimator has parametric rates of convergence in this case, and methods from multivariate statistics can be used to conduct statistical inference.\\
\indent In the functional models commonly used in applied econometrics such as the Nelson-Siegel model for yield curves or inflation expectations (\cites{Diebold.2006}{Aruoba.2019}), it is standard to assume that the curves have a finite-dimensional underlying factor structure, especially when the curves are smooth. However, not all FPC scores may be relevant for explaining the linear regression relationship with $Y_t$, and the number of components $L$ could still be large relative to the sample size $T$, which still requires regularization to avoid overfitting. Additionally, the FPC basis may not be optimal for functional linear regression, as it indicates directions with the largest variance in the regressor, rather than those with the largest covariance with the dependent variable. As discussed in \textcite{Otto.2025}, it is more appropriate to identify factors that are directly correlated with the response variable $Y_t$.

\subsection{Infinite-dimensional regressor with approximate factor structure}\label{sec:simpleFFR}

We propose an approximate factor model that decomposes $X_t$ into a finite number of predictive factors plus a non-predictive, possibly infinite-dimensional error component.
Concretely, we extend equation \eqref{eq:Lfactormodel} by writing
\begin{align} \label{eq:approxfactormodel}
	X_{t}(s) = X_t^*(s) + \varepsilon_{t}(s),
\end{align}
where
$$
X_t^*(s) := \sum_{l=1}^K f_{lt} \psi_{l}(s) = F_{t}' \Psi(s).
$$
Here, $F_t = (f_{1t}, \ldots, f_{Kt})'$ is a vector of zero-mean latent factors, $\Psi(s) = (\psi_1(s), \ldots, \psi_K(s))'$ is a vector of orthonormal $H$-valued loading functions, and $\varepsilon_t(s)$ is an additional $H$-valued error term. 
Because the covariance operator of the error $\varepsilon_t(s)$ is unrestricted, \eqref{eq:approxfactormodel} defines an approximate factor model in the sense of 
\textcite{Chamberlain.1983}, which allows the covariance operator of $X_t$ to have infinite rank.

To identify the factor component $X_t^*(s)$ separately from $\varepsilon_{t}(s)$, we require that $\varepsilon_{t}(s)$ is non-predictive for $Y_t(r)$, i.e.,
\begin{align}
	Cov(Y_t(r), \varepsilon_{t}(s)) = 0, \quad Cov(f_{lt}, \varepsilon_{t}(s)) = 0, \quad Cov(u_t(r), \varepsilon_{t}(s))  = 0, \label{eq:nonpredictiveness}
\end{align}
for all $r,s \in [a,b]$,
while each factor $f_{lt}$ must be predictive in the sense that
\begin{align}
	\int_a^b Cov(Y_t(r), f_{1t})^2 \,dr \geq \ldots \geq \int_a^b Cov(Y_t(r), f_{Kt})^2 \,dr > 0. \label{eq:predictiveness}
\end{align}
Condition \eqref{eq:predictiveness} indicates that our factors $f_{lt}$ can be interpreted as predictive factors and therefore differ from the FPC factors $x_{lt}$, which have the property that
$$
	\int_a^b Cov(X_t(r), x_{1t})^2 \,dr \geq \int_a^b Cov(X_t(r), x_{2t})^2 \,dr \geq \ldots .
$$
By construction, the factor loadings $\psi_1, \ldots, \psi_K$ span the finite-dimensional subspace of $X_t$ that is actually relevant for predicting $Y_t$.
Using \eqref{eq:approxfactormodel}, we can rewrite the functional linear regression \eqref{eq:modelsimple} as 
\begin{align*}
	Y_t(r) &= \int_a^b \beta(r,s) X_t(s) \,ds + u_t(r) \\
	&= \int_a^b \beta(r,s) X_t^*(s) \,ds + u_t^*(r) \\
	&= \sum_{l=1}^K \beta_l(r) f_{lt} + u_t^*(r),
\end{align*}
where $u_t^*(r) = u_t(r) + \int_a^b \beta(r,s) \varepsilon_t(s) \,ds$ and $\beta_l(r) = \int_a^b \beta(r,s) \psi_l(s)$.
A direct consequence of condition \eqref{eq:nonpredictiveness} is
$$
E[u_t^*(r)f_{lt}] = 0 \quad \text{for all} \ r \in [a,b],
$$
which implies that $\beta_1(r), \ldots \beta_K(r)$ are the population coefficients in a regression of $Y_t(r)$ on $f_{1t}, \ldots, f_{Kt}$, which is a well-posed problem. From these coefficients alone, we can fully determine $\beta(r,s)$.
To gain some intuition for this result, note that we can always find a Gram-Schmidt orthogonalized factor structure 
	$$
	X_t(s)=\sum_{l=1}^K (f_{lt} + \langle \varepsilon_t , \psi_l \rangle) \psi_{l}(s) + \varepsilon_t(s) - \sum_{l=1}^K \langle \varepsilon_t , \psi_l \rangle \psi_{l}(s) \coloneqq \sum_{l=1}^K f_{lt}^* \psi_{l}(s) + \varepsilon_t^*(s).
	$$
Now consider any function $\psi_\perp$ orthogonal to $\{\psi_1,\dots,\psi_K\}$. Such a function captures the directions in $X_t$ that belong entirely to $\varepsilon_t^*$.
Together with Assumption \ref{as:identifiability} which constrains $\beta(r, \cdot)$ to live in the closed image of the covariance operator of $X_t$, and because $\varepsilon_t$ is non-predictive by \eqref{eq:nonpredictiveness}, we show that also  $\varepsilon_t^*$ is non-predictive and  no variation along $\psi_\perp$ can affect $Y_t$, forcing
$$
\int_a^b \beta(r,s) \psi_\perp(s) \,ds = 0.
$$
If this term were nonzero, $\beta$ would be extracting predictive information from $\varepsilon_t$, contradicting \eqref{eq:nonpredictiveness}.
Hence, $\beta(r,s)$ must lie entirely within the finite-dimensional space spanned by $\{\psi_1,\dots,\psi_K\}$. Therefore,
\begin{align} \label{eq:finiterankbeta}
	\beta(r,s) = \sum_{l=1}^K \beta_l(r) \psi_l(s).
\end{align}
For a formal proof of \eqref{eq:finiterankbeta}, see Lemma \ref{lem:ident} in the next section.\\
\indent The key object to identify the loading functions in our approximate factor structure setting is the cross covariance operator $C$, which is the integral operator with kernel function $c(r,s) = Cov(X_t(r), Y_t(s))$. The product with its adjoint operator is $D = CC^*$, which is the integral operator with kernel function $d(r,s) = \int_a^b c(r,q) c(s,q) \,dq$. Inserting the factor model \eqref{eq:approxfactormodel} into $c(r,s)$ and applying the non-predictiveness condition \eqref{eq:nonpredictiveness} implies
\begin{align*}
	c(r,s) = Cov(X_t(r), Y_t(s)) = (\Psi(r))'E[F_t Y_t(s)],
\end{align*}
and
\begin{equation}\label{eq:d}
	d(r,s) = \int_a^b c(r,q) c(s,q) \,dq = (\Psi(r))' \bigg( \int_a^b E[F_t Y_t(q)] E[Y_t(q) F_t]' \,dq \bigg) \Psi(s).
\end{equation}
Since $\int_a^b E[F_t Y_t(q)] E[Y_t(q) F_t]' \,dq$ is a full-rank $K \times K$ matrix by the predictiveness condition \eqref{eq:predictiveness}, we have 
$$
K = rank(C) = rank(D).
$$
Moreover, the loading functions $\psi_1(r), \ldots, \psi_K(r)$ must be linear combinations of the left-singular functions of $C$, which are the eigenfunctions of $D$. These linear combinations are only identified up to a rotation. To see this, note that $(\Psi(r))'F_t = (\Psi(r))'Q Q^{-1} F_t$ for any invertible $K \times K$ matrix $Q$. To fix the rotation, we impose a restriction on the matrix $\int_a^b E[F_t Y_t(q)] E[Y_t(q) F_t]' \,dq$ that appears in the definition of $d(r,s)$. We assume that
\begin{align}	\label{eq:rotation-simple}
	\int_a^b E[F_t Y_t(q)] E[Y_t(q) F_t]' \,dq = diag(\lambda_1, \ldots, \lambda_K), \quad \lambda_1 > \ldots > \lambda_K > 0,
\end{align}
which rotates the loading functions such that they are the left-singular functions of $C$, respectively the eigenfunctions of $D$, in descending order of singular values, i.e.
$$
d(r,s) = \sum_{l=1}^K \lambda_l \psi_l(r) \psi_l(s).
$$
Consequently, conditions \eqref{eq:nonpredictiveness} and \eqref{eq:predictiveness} suffice to identify the subspace spanned by the loadings $\psi_1, \ldots, \psi_K$, while \eqref{eq:rotation-simple} adds an extra restriction necessary to identify the loadings themselves up to a sign change.\\
\indent In conclusion, our approximate factor structure offers several key advantages. By focusing only on the predictive components of the functional regressor variables, our approach balances the trade-off between flexibility and tractability. Unlike the general framework in Section \ref{sec:ill-posed}, our model avoids the ill-posed inverse problem while still accommodating infinite-dimensional regressor spaces. Compared to the restricted case in Section \ref{sec:well-posed}, we relax the assumption that regressors must reside in a finite-dimensional subspace but even if finite-dimensionality is assumed to hold, our model still serves as a useful dimension reduction technique if only a small number of directions in the regressor are actually predictive. This balance enables both consistent estimation and the development of asymptotic theory for inference, which we will explore in the following sections.
\section{Functional factor regression}\label{sec:FFR}

In this chapter, we extend our findings from Section \ref{sec:simpleFFR} to the general regression framework with multiple non-centered functional variables and additional scalar covariates as introduced in equation \eqref{eq:basemodel}, i.e.,
\begin{equation*}
	Y_t(r) = w_t'\alpha(r) + \sum_{j=1}^J \int_a^b \beta_j(r,s) X_{jt}(s) \,ds + u_t(r).
\end{equation*}
Now, every  $H$-valued regressor $X_{1t}, \ldots, X_{Jt}$ admits the approximate factor structure
\begin{equation}\label{eq:factor.structure}
	X_{jt}(s) = \mu_j(s) + \sum_{l=1}^{K_j} f_{ljt} \psi_{lj}(s) + \varepsilon_{jt}(s).
\end{equation}
$\mu_{j}(s)$ is the deterministic intercept function, $F_{jt} = (f_{1jt}, \ldots, f_{K_{j}jt})'$ is the vector of $K_j$ latent factors with $\Psi_j(s) = (\psi_{1j}(s), \ldots, \psi_{K_{j}j}(s))'$ being the vector of corresponding unknown deterministic loading functions and $\varepsilon_{jt}(s)$ is a potentially infinite-dimensional error component.\\
\indent In addition to Assumption \ref{as:exo} and \ref{as:identifiability}, the following set of restrictions is required to arrive at a fully identified regression model:

\begin{assumption}\label{as:identifiability2}
	\textbf{\textup{(Identification II).}}\\ For any $j=1, \ldots, J$, $t=1, \ldots, T$, and $r,s \in [a,b]$,
	\begin{itemize}
		\item[(a)] The error component of the factor structure satisfies \\ $E[Y_t(r)\varepsilon_{jt}(s)] = 0$, $E[F_{jt} \varepsilon_{jt}(s)] = 0$, $E[u_t(r) \varepsilon_{jt}(s)] = 0$, $E[w_t \varepsilon_{jt}(s)] = 0$, and $\int_a^b \beta_i(r,q) E[X_{it}(q) \varepsilon_{jt}(s)] \dd q = 0$ for all $i \neq j$.
		\item[(b)] The factor component satisfies
		$$
		\int_a^b E[F_{jt} Y_t(q)] E[Y_t(q) F_{jt}]' \,dq = diag(\lambda_{1j}, \ldots, \lambda_{K_j j}), \quad \lambda_{1j} > \ldots > \lambda_{K_j j} > 0.
		$$
		Moreover, $E[F_{jt}] = 0$ and $\psi_{1j}, \ldots, \psi_{K_j j}$ form an orthonormal system in $H$,
		\item[(c)] The vector $z_t = (w_t', F_{1t}', \ldots,  F_{Jt})'$ is covariance stationary and has a positive definite covariance matrix.
	\end{itemize}	
\end{assumption}
Assumptions \ref{as:identifiability2}(a) and (b) are the multiple regressors extension of the conditions \eqref{eq:nonpredictiveness} and \eqref{eq:predictiveness}. Part \ref{as:identifiability2}(a) ensures that each factor model’s functional error term has no predictive power for the dependent variable $Y_t$. Moreover, the error is uncorrelated with all other variables in the functional regression equation \eqref{eq:basemodel}. Part \ref{as:identifiability2}(b) guarantees that each factor exhibits non-zero correlation with the response variable, which makes the factors response-driven or predictive, in contrast to the FPC factors which only model variation in the regressors. Additionally, \ref{as:identifiability2}(b) serves as a normalization because the loadings and factors are generally only separable up to a rotation matrix. Lastly, Assumption \ref{as:identifiability2}(c) rules out perfect multicollinearity in a regression model.\\
\indent Following the same line of reasoning as in Section \ref{sec:simpleFFR}, Assumptions \ref{as:identifiability2}(a) and (b) imply that each functional regressor $X_{jt}$ with associated cross covariance operator $C_j$ and the product with its adjoint operator $D_j = C_j C_j^*$ admits the kernel representation
\begin{align*}
	d_j(r,s) &= (\Psi_j(r))' \left( \int_a^b E[F_{jt} Y_t(q)] E[Y_t(q) F_{jt}]' \,dq \right)  (\Psi_j(s)) \\
	&= (\Psi_j(r))' \text{diag}(\lambda_{1j}, \ldots, \lambda_{K_jj}) (\Psi_j(s)) \\
	&= \sum_{l=1}^{K_j} \lambda_{lj} \psi_{lj}(r) \psi_{lj}(s).
\end{align*}
Because the eigenequation of $D_j$ follows as 
$$
\int_{a}^{b}d_j\left( r,s\right)\psi_{lj}\left( s\right)\,ds= \sum_{k=1}^{K_j} \lambda_{kj} \psi_{kj}\left( r\right)\left\langle \psi_{kj}, \psi_{lj}\right\rangle = \lambda_{lj} \psi_{lj}\left( r\right)
$$ 
by Assumption \ref{as:identifiability2}(b), the pairs $\left( \lambda_{lj}, \psi_{lj}\right)$, $l=1,\ldots, K_j$ are identified as the eigenvalues and their corresponding eigenfunctions of $D_j$. This directly implies $rank(D_j)=K_j$. Note how the rank of $D_j$ might differ for each regressor $X_{jt}$, depending on the number of factors correlated with the regressand.\\
\indent Given the identified loading functions $\Psi_j(s)$, Lemma \ref{lem:ident} shows that each regression coefficient function admits a finite rank representation.
\begin{lemma}\label{lem:ident}
Under Assumptions \ref{as:exo}--\ref{as:identifiability2}, for any $j=1, \ldots, J$, $t=1, \ldots, T$, and $r,s \in [a,b]$,
\begin{itemize}
	\item[(a)] $\int_a^b \beta_j(r,s) \varepsilon_{jt}(s) \,ds = 0$ almost surely.
	\item[(b)] 
	$\int_a^b \beta_j(r,s) \psi_\perp(s)\,ds =0$
	for every
	$\psi_\perp\in Im(\Gamma_j)$ with $\langle\psi_\perp,\psi_{lj}\rangle=0$ for $l=1,\dots,K_j$.
	\item[(c)] Let $\beta_{lj}(r) = \int_a^b \beta_j(r,s) \psi_{lj}(s) \,ds$ and $B_j(r) = (\beta_{1j}(r), \ldots, \beta_{K_j j}(r))'$. Then,
	\begin{align} \label{eq:beta.ident}
		\beta_j(r,s) = \sum_{l=1}^{K_j} \beta_{lj}(r) \psi_{lj}(s) = (B_j(r))'\Psi_j(s).
	\end{align}
\end{itemize}	
\end{lemma}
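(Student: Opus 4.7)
The three claims are closely linked: (a) is the main workhorse, (c) is the key structural result, and (b) follows as an immediate corollary of (c). I would establish them in the order (a), (c), (b).

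For (a), I would exploit the non-predictiveness conditions by multiplying the regression equation \eqref{eq:basemodel} by $\varepsilon_{jt}(s)$ and taking expectations. Exogeneity (Assumption \ref{as:exo}) together with the orthogonality restrictions in Assumption \ref{as:identifiability2}(a) eliminate all terms involving $w_t$, $u_t$, and $X_{it}$ for $i\neq j$, yielding $\int_a^b\beta_j(r,q)\,E[X_{jt}(q)\varepsilon_{jt}(s)]\,dq = 0$. Substituting the factor representation \eqref{eq:factor.structure} and applying $E[F_{jt}\varepsilon_{jt}(s)] = 0$ together with $E[\varepsilon_{jt}(s)] = 0$ (the latter via $E[F_{jt}] = 0$) collapses $E[X_{jt}(q)\varepsilon_{jt}(s)]$ to the kernel $\gamma_{\varepsilon_j}(q,s)$ of the idiosyncratic covariance operator $\Gamma_{\varepsilon_j}$. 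Hence $\Gamma_{\varepsilon_j}\beta_j(r,\cdot) = 0$ in $H$, and positive semi-definiteness gives $E\big[(\int_a^b\beta_j(r,s)\varepsilon_{jt}(s)\,ds)^2\big] = \langle\beta_j(r,\cdot),\Gamma_{\varepsilon_j}\beta_j(r,\cdot)\rangle = 0$, which is the claimed almost-sure identity.

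For (c), I would decompose the covariance operator via the factor structure as $\Gamma_j = \Psi_j'\,Cov(F_{jt})\,\Psi_j + \Gamma_{\varepsilon_j}$. Positive definiteness of $Cov(F_{jt})$ (inherited from Assumption \ref{as:identifiability2}(c)) delivers the kernel identity $\ker(\Gamma_j) = \ker(\Gamma_{\varepsilon_j})\cap \text{span}\{\psi_{1j},\ldots,\psi_{K_j j}\}^\perp$, and passing to orthogonal complements gives
\begin{equation*}
\overline{Im(\Gamma_j)} = \overline{\text{span}\{\psi_{1j},\ldots,\psi_{K_j j}\} + Im(\Gamma_{\varepsilon_j})},
\end{equation*}
so each $\psi_{lj}$ already lies in $\overline{Im(\Gamma_j)}$. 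Defining the residual $\beta_j^\perp(r,\cdot) := \beta_j(r,\cdot) - \sum_{l=1}^{K_j}\beta_{lj}(r)\psi_{lj}$, by construction $\beta_j^\perp(r,\cdot)$ is orthogonal to every $\psi_{lj}$; since each $\psi_{lj}$ and $\beta_j(r,\cdot)$ itself lie in $\overline{Im(\Gamma_j)}$ (by the previous display and Assumption \ref{as:identifiability}), so does $\beta_j^\perp(r,\cdot)$. Combining these two facts with the kernel condition $\beta_j(r,\cdot)\in\ker(\Gamma_{\varepsilon_j})$ from (a) and the displayed characterization of $\overline{Im(\Gamma_j)}$ forces $\beta_j^\perp = 0$ and delivers \eqref{eq:beta.ident}. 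Claim (b) is then immediate: for any $\psi_\perp$ orthogonal to all $\psi_{lj}$, $\int_a^b\beta_j(r,s)\psi_\perp(s)\,ds = \sum_l\beta_{lj}(r)\langle\psi_{lj},\psi_\perp\rangle = 0$.

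The principal obstacle is the vanishing of $\beta_j^\perp$ in (c): the decomposition $\overline{Im(\Gamma_j)} = \overline{\text{span}\{\psi_{lj}\} + Im(\Gamma_{\varepsilon_j})}$ is not in general an orthogonal direct sum, so one cannot merely project $\beta_j$ onto $\text{span}\{\psi_{lj}\}$ and declare the remainder to be purely idiosyncratic. The argument must carefully combine Assumption \ref{as:identifiability} with the kernel information from (a) to rule out any non-trivial idiosyncratic contribution to $\beta_j^\perp$. Once (c) is settled, (b) and the full factor representation of the slope operator follow without further work.
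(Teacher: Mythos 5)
Your part (a) is essentially the paper's own argument in a slightly different arrangement: the paper multiplies the regression equation by $\langle\beta_j(r,\cdot),\varepsilon_{jt}\rangle$ and invokes $E[Y_t(r)\varepsilon_{jt}(s)]=0$, while you multiply by $\varepsilon_{jt}(s)$ and integrate against $\beta_j(r,\cdot)$; both routes use the same moment conditions from Assumption \ref{as:identifiability2}(a) and end with $E[\langle\beta_j(r,\cdot),\varepsilon_{jt}\rangle^2]=\langle\beta_j(r,\cdot),\Gamma_{\varepsilon_j}\beta_j(r,\cdot)\rangle=0$. One small repair: $E[\varepsilon_{jt}(s)]=0$ does not follow from $E[F_{jt}]=0$ alone unless $\mu_j$ is defined as $E[X_{jt}]$; cite instead the intercept component of $E[w_t\varepsilon_{jt}(s)]=0$.

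The genuine gap is the step in (c) that you yourself call the principal obstacle and then leave unproven: the claim that $\beta_j^\perp(r,\cdot)=0$ is \emph{forced} by (i) $\beta_j^\perp(r,\cdot)\in\overline{Im(\Gamma_j)}$, (ii) $\beta_j^\perp(r,\cdot)\perp \mathrm{span}\{\psi_{1j},\ldots,\psi_{K_j j}\}$, and (iii) $\Gamma_{\varepsilon_j}\beta_j(r,\cdot)=0$, together with $\overline{Im(\Gamma_j)}=\overline{\mathrm{span}\{\psi_{lj}\}+Im(\Gamma_{\varepsilon_j})}$. These facts do not imply the conclusion. Since $\Gamma_{\varepsilon_j}\psi_{lj}$ need not vanish (the paper explicitly allows $\langle\varepsilon_{jt},\psi_{lj}\rangle\neq 0$), (iii) does not transfer from $\beta_j(r,\cdot)$ to $\beta_j^\perp(r,\cdot)$, so you cannot place $\beta_j^\perp$ in $\ker(\Gamma_j)=\ker(\Gamma_{\varepsilon_j})\cap\mathrm{span}\{\psi_{lj}\}^\perp$ and intersect with $\overline{Im(\Gamma_j)}$; and because the sum $\mathrm{span}\{\psi_{lj}\}+Im(\Gamma_{\varepsilon_j})$ is not orthogonal, (i) and (ii) do not place $\beta_j^\perp$ in $\overline{Im(\Gamma_{\varepsilon_j})}$ either. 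Concretely, with $K_j=1$ take $\beta_j(r,\cdot)=b(r)(\psi_{1j}+e)$ with $e\perp\psi_{1j}$, $e\neq 0$, and let $\Gamma_{\varepsilon_j}$ vanish on $\mathrm{span}\{\psi_{1j}+e\}$ and be strictly positive on its orthogonal complement: then (iii) holds, $\ker(\Gamma_j)=\{0\}$ so (i) is automatic, (ii) holds for $\beta_j^\perp=b(r)e$, and yet $\beta_j^\perp\neq 0$. So the conclusion cannot be extracted from these operator identities alone; the leverage has to come from testing $\beta_j(r,\cdot)$ against directions that lie in $Im(\Gamma_j)$ itself. That is the paper's route, and its logical order is the reverse of yours: it proves (b) first by writing $\psi_\perp=\Gamma_j(h)$, splitting $\Gamma_j(h)$ into factor and idiosyncratic parts, arguing that the factor part drops out when $\psi_\perp$ is orthogonal to the loadings, and killing the idiosyncratic part via part (a) and Cauchy--Schwarz through $|\langle\beta_j(r,\cdot),\psi_\perp\rangle|^2\le E[\langle\beta_j(r,\cdot),\varepsilon_{jt}\rangle^2]\,E[\langle\varepsilon_{jt},h\rangle^2]=0$; part (c) then follows by expanding $\beta_j(r,\cdot)$ over $\{\psi_{lj}\}$ and an orthonormal basis of $\overline{Im(\Gamma_j)}\cap\mathrm{span}\{\psi_{lj}\}^\perp$, so (b) is an input to (c), not a corollary of it. If you rework your proof along these lines, be aware that the delicate point is exactly the one your sketch asserts away: showing that the $\mathrm{span}\{\psi_{lj}\}$-component of $\Gamma_j(h)$ makes no contribution when $\Gamma_{\varepsilon_j}(h)$ may itself have a component along the loadings; your non-orthogonality worry is well founded and must be resolved there explicitly.
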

Note that the loading functions $\Psi_j(s)$ are only identified up to a sign change because $m\psi_{lj}$ is still an eigenfunction if $\psi_{lj}$ is an eigenfunction, for any nonzero scalar $m$. The normalization $\left\| \psi_{lj} \right\| =1$ from Assumption \ref{as:identifiability2}(b) does not identify the sign either. However, the sign of the coefficient function $\beta_j(r, s)$ is still fully identified. If the sign of $\psi_{lj}(s)$ changes, the sign of the corresponding $\beta_{lj}(r)$  also flips such that  $\beta_j(r,s)$ remains unchanged.\\
\indent Until here, the last open issue concerns how to obtain $\beta_{lj}(r)$.
If we make use of the factor equation \eqref{eq:factor.structure}, the orthonormality of the loading functions, and the fact that $\varepsilon_{jt}$ lies in the null space of the regression operator by Lemma \ref{lem:ident}, the integral term in our regression equation can be written as
$$
\sum_{j=1}^J \int_a^b \beta_j(r,s) X_{jt}(s) \,ds =  \beta_{0}(r) + \sum_{j=1}^J \sum_{l=1}^{K_j} \beta_{lj}(r) f_{ljt},
$$
where $\beta_{0}(r) = \sum_{j=1}^J \int_a^b \beta_j(r,s) \mu_j(s) \,ds$.
Therefore, the fully functional regression model \eqref{eq:basemodel} becomes the functional factor regression
\begin{align}
	Y_t(r) &= w_t'\alpha(r) + \sum_{j=1}^J \int_a^b \beta_j(r,s) X_{jt}(s) \,ds + u_t(r) \notag \\
	&= w_t'\alpha^*(r) + \sum_{j=1}^J \sum_{l=1}^{K_j} \beta_{lj}(r) f_{ljt} + u_t(r) \notag \\
	&= z_t'B(r) + u_t(r), \label{eq:func.factor.regression}
\end{align}
which is a regression of $Y_t(r)$ on the $(N+\sum_{j=1}^J K_j)$-dimensional vector of regressors $z_t = (w_t', F_{1t}', \ldots, F_{Jt}')'$ with coefficient functions $B(r) = ((\alpha^*(r))',(B_1(r))', \ldots, (B_J(r))')'$, where $\alpha^*(r) = (\alpha_1^*(r), \alpha_2(r), \ldots, \alpha_N(r))'$ with $\alpha_1^*(r) = \alpha_1(r) + \beta_0(r)$.
Under Assumption \ref{as:identifiability2}(c), the coefficients are identified as
\begin{equation}\label{eq:OLS}
	B(r) = E\left[ z_t z_t'\right]^{-1} E\left[ z_t Y_t(r) \right].
\end{equation}
The factors admit a certain representation through projection coefficients, namely
\begin{align*}
	\langle X_{jt} - \mu_{j}, \psi_{lj} \rangle &= \sum_{l=1}^{K_j} f_{kjt} \langle \psi_{kj}, \psi_{lj} \rangle + \langle \varepsilon_{jt}, \psi_{lj} \rangle \\
	&= f_{ljt} + \langle \varepsilon_{jt}, \psi_{lj} \rangle.
\end{align*}
We see that the original factors $f_{ljt}$ from equation \eqref{eq:factor.structure} are latent and only partially identifiable because $\langle \varepsilon_{jt}, \psi_{lj} \rangle$ is unknown. However, the full identification of the factors is not required to identify all parameters of the model (see the discussion in \cite{Otto.2025}).
In fact, $f_{ljt}$ coincides with the projection coefficient $f_{ljt}^* := \langle X_{jt} - \mu_{j}, \psi_{lj} \rangle$ up to the noise term $\langle \varepsilon_{jt}, \psi_{lj} \rangle$, which is uncorrelated with $Y_t(r)$ by Assumption \ref{as:identifiability2}(a). Specifically, the latent factors $f_{ljt}$ and the projection coefficients $f_{ljt}^*$ can be used interchangeably in the functional factor regression \eqref{eq:func.factor.regression}. To see this, note that by \eqref{eq:beta.ident},
$$
	\sum_{j=1}^J \int_a^b \beta_j(r,s) (X_{jt}(s) - \mu_j(s)) \,ds = \sum_{j=1}^J \sum_{l=1}^{K_j} \beta_{lj}(r) \langle X_{jt} - \mu_j, \psi_{lj} \rangle,
$$
which implies that $B(r)$ are the regression coefficients obtained from a regression of $Y_t(r)$ on $w_t$ and $F_{jt}^* = (f_{1jt}^*, \ldots, f_{K_j jt}^*)'$ for $j=1, \ldots, J$. Therefore, the regression coefficient functions is identified using either $f_{ljt}$ or $f_{ljt}^*$. \\

\indent Before turning to the estimation of the model, we specify the necessary set of assumptions for this task.
\begin{assumption}\label{as:estimation}
	\textbf{\textup{(Estimation).}}\\ For any $j=1, \ldots, J$, $t=1, \ldots, T$, and $r \in [a,b]$,
	\begin{itemize}
		\item[(a)] The vector $(u_t, w_t, X_{1t}, \ldots, X_{Jt})_{t}$ is a weakly dependent $\alpha$-mixing process of size $-\nu/(\nu-2)$ for some $\nu \geq 4$ with $\sup_{t}E[(u_t(r))^\nu] < \infty$, $\sup_{t} E[(X_{jt}(r))^\nu] < \infty$, and $\sup_{t} E[(w_{it})^\nu] < \infty$.
		\item[(b)] $Y_t(r)$ and $X_{jt}(s)$ have differentiable sample paths, and $\alpha(r)$, $\beta_j(r,s)$, and $\Psi_j(s)$ are differentiable parameter functions.
	\end{itemize}	
\end{assumption}
Assumption \ref{as:estimation}(a) is sufficient to ensure consistent estimation of all parameters in our model. It only entails fairly mild moment restrictions on the regressors and regression errors, as well as $\alpha$-mixing at a rate which is common in the literature (see, e.g., \cite{White.2001}). Assumption \ref{as:estimation}(b) is a simple smoothness condition required to achieve a uniform central limit theorem for the functional regression coefficients.
\section{Estimation}\label{sec:Estimation}

As all parameters of the regression model are identified by now, we can estimate the population objects based on sample observations. Firstly, we need an asymptotically consistent moment estimator for the integral operator kernel $d_j(r,s)$. Based on that result, consistent estimates of the eigenpairs $(\lambda_{lj}, \psi_{lj})$, $l=1, \ldots, K_j$, follow straightforward. Secondly, the number of factors $K_j$, equal to the rank of $D_j$, has been treated as known until here. To estimate $K_j$, we develop a functional eigenvalue difference test based on the concepts introduced in \textcite{Wu.2018} for multivariate factor models. Lastly, we show that the estimators of the functional regression coefficients $\beta_j(r,s)$ are asymptotically normal with uniform convergence rates. All proofs are supplied in the Appendices \ref{app:Theorem1}--\ref{app:Theorem3}.

\subsection{Primitives}\label{subsec:Primitives}

Let the sample estimators for the first moment functions be
\begin{equation}\label{eq:sample.means}
	\overline{Y}(r) = \frac{1}{T} \sum_{t=1}^{T}Y_t(r), \quad \widehat{\mu}_{j}(s) = \frac{1}{T} \sum_{t=1}^{T}X_{jt}(s), \quad r,s\in [a,b].
\end{equation}
Then, the sample cross-covariance function is
\begin{equation}\label{eq:sample.crosscov}
	\widehat{c}_j(r,s) = \frac{1}{T} \sum_{t=1}^{T}(X_{jt}(r)-\widehat{\mu}_{j}(r))(Y_t(s)-\overline{Y}(s))  , \quad r,s\in [a,b],
\end{equation}
and the respective product of cross-covariance kernels, integrated over the de-meaned response variable $Y_t(q)$, follows as
\begin{equation}\label{eq:sample.d}
	\widehat{d}_j(r,s) = \int_{a}^{b} \widehat{c}_j(r,q)\widehat{c}_j(s,q) \,dq, \quad r,s\in [a,b].
\end{equation}
The integral operator $\widehat{D}_j$ with kernel function $d_j(r,s)$ has $T$ empirical eigenpairs $(\widehat{\lambda}_{lj}, \widehat{\psi}_{lj})$ with associated empirical factors $\widehat F_{jt} = (\widehat f_{1jt}, \ldots, \widehat f_{K_j j t})'$, $\widehat f_{ljt} = \langle X_{jt} - \widehat{\mu}_{j}, \widehat{\psi}_{lj} \rangle$.
In Theorem 1, we prove convergence of the sample primitives to their population counterparts. The convergence proof for $\widehat{D}_j$ builds upon consistent estimation of the mean function estimators and the cross-covariance operator $\widehat C_j$ with kernel function $\widehat{c}_j(r,s)$.

\begin{theorem}\label{theo:primitves}
	\textbf{\textup{(Primitives).}} By Assumption \ref{as:exo}--\ref{as:estimation} and as $T\rightarrow\infty$, for any $j=1,\ldots, J$,
	\begin{itemize}
		\item[(a)] $\| \widehat \mu_j - \mu_j \| = \mathcal{O}_P(T^{-1/2})$ and $\| \overline Y - E[Y_t] \| = \mathcal{O}_P(T^{-1/2});$
		\item[(b)] $\| \widehat{C}_j - C_j \|_{\mathcal{S}} = \mathcal{O}_P(T^{-1/2})$ and $\| \widehat{D}_j - D_j \|_{\mathcal{S}} = \mathcal{O}_P(T^{-1/2});$
		\item[(c)] $| \widehat{\lambda}_{lj} - \lambda_{lj} | = \mathcal{O}_P(T^{-1/2})$, for $l \geq 1$, where  $\lambda_{lj} \coloneqq 0 $ for $l > K_j;$
		\item[(d)] $\| s_{lj} \widehat{\psi}_{lj} - \psi_{lj} \| = \mathcal{O}_P(T^{-1/2})$, for $1\leq l \leq K_j$, where $s_{lj} \coloneqq \mathit{sign}(\langle \widehat{\psi}_{lj}, \psi_{lj}\rangle).$
	\end{itemize}
\end{theorem}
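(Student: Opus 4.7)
\noindent The plan is to prove the four parts in sequence, propagating the parametric rate obtained in (a) through the operators and eigenpairs by means of standard $\alpha$-mixing moment bounds and compact-operator perturbation theory.

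For part (a), I would work out $E\|\widehat\mu_j-\mu_j\|^2$ directly. By Fubini,
\begin{equation*}
E\|\widehat\mu_j-\mu_j\|^2 = \int_a^b \text{Var}\bigl(\widehat\mu_j(s)\bigr)\,ds = \frac{1}{T^2}\int_a^b \sum_{t,\tau} \text{Cov}\bigl(X_{jt}(s),X_{j\tau}(s)\bigr)\,ds.
\end{equation*}
Under the $\alpha$-mixing condition of size $-\nu/(\nu-2)$ with $\nu\geq 4$ and uniformly bounded $\nu$-th moments from Assumption \ref{as:estimation}(a), a Davydov-type covariance inequality together with the summability of the mixing coefficients yields $E\|\widehat\mu_j-\mu_j\|^2=O(T^{-1})$, and Markov's inequality then gives the $O_P(T^{-1/2})$ rate. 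The argument for $\overline Y$ is identical.

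For part (b), I would decompose
\begin{equation*}
\widehat c_j(r,s)-c_j(r,s)=\frac{1}{T}\sum_{t=1}^T\bigl(X_{jt}(r)Y_t(s)-E[X_{jt}(r)Y_t(s)]\bigr)-\bigl(\widehat\mu_j(r)\overline Y(s)-\mu_j(r)E[Y_t(s)]\bigr).
\end{equation*}
For the first summand, applying the same mixing/moment argument as in (a) to the product process $X_{jt}(r)Y_t(s)$ (whose finite moments follow from $\nu\geq 4$ via Cauchy–Schwarz) yields HS-norm order $O_P(T^{-1/2})$. The second summand is handled by adding and subtracting $\mu_j(r)\overline Y(s)$ and using part (a) twice. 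For $\widehat D_j-D_j$, I would use the identity $\widehat D_j-D_j=(\widehat C_j-C_j)\widehat C_j^*+C_j(\widehat C_j^*-C_j^*)$, and then invoke submultiplicativity of the Hilbert–Schmidt norm with respect to the operator norm, noting $\|\widehat C_j\|_{\mathcal S}=O_P(1)$ since it converges to $C_j$.

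Parts (c) and (d) are then immediate consequences of the operator convergence in (b) via standard spectral perturbation results for self-adjoint compact operators (as in Bosq, 2000, Ch.~4). The eigenvalue inequality $|\widehat\lambda_{lj}-\lambda_{lj}|\leq\|\widehat D_j-D_j\|_{\mathcal S}$ gives (c) uniformly in $l$, with the zero-eigenvalue convention for $l>K_j$ being justified because $D_j$ has exactly $K_j$ nonzero eigenvalues by the rank identification in Section \ref{sec:FFR}. For (d), Assumption \ref{as:identifiability2}(b) guarantees strictly positive spectral gaps $\lambda_{l-1,j}-\lambda_{lj}>0$ and $\lambda_{lj}-\lambda_{l+1,j}>0$ for each $l\leq K_j$ (treating $\lambda_{0j}=\infty$ and $\lambda_{K_j+1,j}=0$), so the standard perturbation bound $\|s_{lj}\widehat\psi_{lj}-\psi_{lj}\|\leq c_l\,\|\widehat D_j-D_j\|_{\mathcal S}$ yields the $O_P(T^{-1/2})$ rate, with the sign $s_{lj}$ absorbing the inherent sign ambiguity of eigenfunctions. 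The main obstacle I anticipate lies not in any single step but in making the functional mixing moment bound fully rigorous: one must control cross-moments of products of $H$-valued random elements, which I would handle by applying a Davydov-type inequality pointwise in $(r,s)$ and then integrating, exploiting the uniform $\nu$-th moment bound from Assumption \ref{as:estimation}(a).
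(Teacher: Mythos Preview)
Your proposal is correct and follows essentially the same approach as the paper: a pointwise Davydov/mixing covariance bound integrated over the domain and combined with Markov's inequality for (a), the same centered-sum plus mean-correction decomposition for $\widehat C_j$ and an algebraic splitting of $\widehat D_j-D_j$ for (b), and standard spectral perturbation inequalities (the paper cites Lemmas 2.2--2.3 of Horv\'ath--Kokoszka rather than Bosq, but these are the same bounds) for (c) and (d). The only cosmetic difference is that the paper writes the $\widehat D_j-D_j$ decomposition at the kernel level as three terms $a_3+a_4+a_5$, separating out the $(\widehat C_j-C_j)(\widehat C_j^*-C_j^*)$ piece explicitly, whereas you leave it inside $(\widehat C_j-C_j)\widehat C_j^*$; both yield the same $\mathcal O_P(T^{-1/2})$ rate.
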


\subsection{Number of factors}\label{subsec:factor.number}

Next, we determine the number of common factors $K_j$. Popular methods in the multivariate statistics literature consider information criteria as in \textcite{Bai.2002} or maximize the ratio of subsequent eigenvalues as in \textcite{Ahn.2013}. Especially the latter approach is attractive for practitioners because it is computationally inexpensive and the respective scree plot allows for simple visual interpretations. However, \textcite{Xia.2015} formalized that such an estimator might suffer computational instability from $0/0$ type ratios. Referring back to our functional factor structure with the specific integral operator kernel $d_j(r,s)$, this problem amplifies. In detail, recall that $D_j$ has rank $K_j$, meaning all eigenvalues of $D_j$ beyond the $K_j$-th one are exactly zero. As the estimated eigenvalues $\widehat{\lambda}_{lj}$ converge to their population counterparts by Theorem \ref{theo:primitves}, the \textcite{Ahn.2013} type ratio estimator $\widehat{K}_j^{ER} = \widehat{\lambda}_{lj}/\widehat{\lambda}_{(l+1)j}$ is not consistent because the $(K_j+1)$-th eigenvalue is asymptotically zero and not uniformly bounded away from zero.\\
\indent To avoid this problem, we propose a functional equivalent of the eigenvalue difference estimator introduced by \textcite{Wu.2018}. Given the kernel function estimator $\widehat{d}_j\left(r,s \right)$, the proposed method converges to the true number of factors $K_j$ under no additional assumptions.\\
\indent As in the original paper, the idea is to find a monotonous function which converges to one for the first $K_j$ eigenvalues and to zero for all others. Formalized, we have
\begin{equation*}
	G(\widehat{\lambda}_{lj}) \rightarrow 
	\begin{cases}
		1, & \text{for $l=1,\ldots,K_j$}\\
		0, & \text{for $l=K_j+1, K_j+2, \ldots$}.
	\end{cases}
\end{equation*}
In order to incorporate the possibility for zero factors, i.e. all eigenvalues of $D_j$ are zero, we define the helper function
\begin{equation*}
	g_{lj} \coloneqq
	\begin{cases}
		1, & \text{for $l=0$}\\
		G(\widehat{\lambda}_{lj}), & \text{for $l=1,\ldots,K_j^{\text{max}}$}\\
		0, & \text{for $l=K_j^{\text{max}}+1$},
	\end{cases}
\end{equation*}
where $K_j^{\text{max}}$ is a user-specified positive constant. The eigenvalue difference estimator then follows as
\begin{equation}\label{eq:ed.estimator}
	\widehat{K}_j^{ED}=\argmax_{0\leq l \leq K_j^{\text{max}}} 	\left\lbrace g_{lj} - g_{(l+1)j}\right\rbrace.
\end{equation}
In the functional factor regression setting, the mock eigenvalue $g_{0j}=1$ is important because it serves as a simple model specification tool. If the estimator chooses $\widehat{K}_j^{ED}=0$ factors, the respective functional regressor $X_{jt}$ is implied to have no predictive power for the dependent variable and should be dropped from the regression. This feature is derived from the fact that we make use of the singular values of the cross-covariance operator between $X_{jt}$ and $Y_{t}$. In addition to the first mock eigenvalue for $l=0$, we also add $g_{lj}=0$ for $l=K_j^{\text{max}}+1$. If the eigenvalue difference estimator chooses $K_j^{\text{max}}$ factors, the practitioner is advised to redo their analysis with a larger number of possible eigenvalues as it is indicated that $K_j \geq K_j^{\text{max}}$.\\
\indent Lastly, we discuss the specific transformation function $G(\widehat{\lambda}_{lj})$ needed for the estimator. It takes the form
\begin{equation}\label{eq:arctan.trans}
	G(\widehat{\lambda}_{lj})=\frac{2}{\pi}\arctan\left(\frac{\gamma \ln(T) \widehat{\lambda}_{lj}}{\sqrt{T^{-1} \sum_{t=1}^T \|X_{jt} - \widehat{\mu}_j \|^2} \sqrt{T^{-1} \sum_{t=1}^T \|Y_t - \overline{Y} \|^2}} \right).
\end{equation}
The tuning parameter $\gamma$ is a freely chosen positive constant. We recommend to set its value according to some standard time series cross-validation procedure, such as expanding window cross-validation, that minimizes the mean squared error in a test set. Moreover, the transformation is indifferent to the specific scale of the empirical eigenvalues as we divide all estimates by the integrated sample standard deviation of  $X_{jt}$ and $Y_{t}$. It is clear that the inverse tangent function converges to $\pi / 2$ if its domain approaches infinity and it becomes zero for the input zero. Following this intuition, Theorem 2 proves consistency of the eigenvalue difference estimator for the specific function described in (\ref{eq:arctan.trans}).

\begin{theorem}\label{theo:number.factors}
	\textbf{\textup{(Number of factors).}} By Assumption \ref{as:exo}--\ref{as:estimation}, $K_{j}^\text{max} > K_j$, $\gamma > 0$, and for any $j=1,\ldots, J$, we have
	\begin{equation*}
		\lim_{T \rightarrow \infty} \text{Pr}\left( \widehat{K}_j^{ED}=K_j\right)=1.
	\end{equation*}
\end{theorem}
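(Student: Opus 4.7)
The plan is to show that, under the arctan transformation in \eqref{eq:arctan.trans}, the sequence $\{g_{lj}\}_{l=0}^{K_j^{\max}+1}$ converges in probability to $1$ for $l\le K_j$ and to $0$ for $l\ge K_j+1$. Once this is established, the difference $g_{lj} - g_{(l+1)j}$ is asymptotically $0$ for $l\ne K_j$ and $1$ for $l = K_j$, so the argmax is uniquely attained at $l=K_j$ with probability approaching one. Because the index set $\{0,1,\dots,K_j^{\max}\}$ is finite and fixed, convergence in probability of each $g_{lj}$ is enough, and the continuous mapping theorem for $\arctan$ handles the transformation without further work.

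First I would handle the denominator of the argument of $\arctan$ in \eqref{eq:arctan.trans}. By Assumption \ref{as:estimation}(a), the $\alpha$-mixing and moment conditions allow a law of large numbers for Hilbert-space-valued random variables, giving
\begin{equation*}
T^{-1}\sum_{t=1}^T \|X_{jt}-\widehat\mu_j\|^2 \xrightarrow{P} E\|X_{jt}-\mu_j\|^2 =: \sigma_{X_j}^2, \quad T^{-1}\sum_{t=1}^T \|Y_t-\overline Y\|^2 \xrightarrow{P} \sigma_Y^2,
\end{equation*}
with both limits strictly positive (otherwise the variables would be degenerate, contradicting Assumption \ref{as:identifiability2}). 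Hence the normalizing factor in \eqref{eq:arctan.trans} converges in probability to the positive constant $\sigma_{X_j}\sigma_Y$, so by Slutsky's lemma it suffices to analyze the numerator.

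For $1\le l\le K_j$, Theorem \ref{theo:primitves}(c) yields $\widehat\lambda_{lj} = \lambda_{lj} + O_P(T^{-1/2})$ with $\lambda_{lj}>0$, hence $\gamma\ln(T)\widehat\lambda_{lj}/(\sigma_{X_j}\sigma_Y) \xrightarrow{P} +\infty$ and therefore $G(\widehat\lambda_{lj}) \xrightarrow{P} 1$ since $(2/\pi)\arctan(x)\to 1$ as $x\to\infty$. For $l\ge K_j+1$, the population eigenvalue vanishes, so Theorem \ref{theo:primitves}(c) gives $\widehat\lambda_{lj}=O_P(T^{-1/2})$, and the argument satisfies
\begin{equation*}
\left|\frac{\gamma\ln(T)\widehat\lambda_{lj}}{\sqrt{T^{-1}\sum_t\|X_{jt}-\widehat\mu_j\|^2}\sqrt{T^{-1}\sum_t\|Y_t-\overline Y\|^2}}\right| = O_P\!\left(\frac{\ln(T)}{\sqrt T}\right) \xrightarrow{P} 0,
\end{equation*}
so $G(\widehat\lambda_{lj}) \xrightarrow{P} 0$ since $\arctan(0)=0$ and arctan is continuous at $0$. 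Combined with the deterministic boundary values $g_{0j}=1$ and $g_{K_j^{\max}+1,j}=0$, this gives $g_{lj}\xrightarrow{P}\mathbf 1\{l\le K_j\}$ for every $l\in\{0,\dots,K_j^{\max}+1\}$, where we use $K_j^{\max}>K_j$.

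Finally, since the finite collection $\{g_{lj}-g_{(l+1)j}\}_{l=0}^{K_j^{\max}}$ converges jointly in probability to the vector with a $1$ in position $l=K_j$ and zeros elsewhere, and the argmax functional is continuous at any sequence with a unique maximum, we conclude $\widehat K_j^{ED}\xrightarrow{P} K_j$, which gives $\Pr(\widehat K_j^{ED}=K_j)\to 1$ for the integer-valued estimator. The main obstacle I anticipate is not conceptual but bookkeeping: one must ensure the logarithmic blow-up $\ln(T)$ is dominated by $\sqrt T$ uniformly across the noise eigenvalues, which is immediate here but motivates why the multiplicative $\ln(T)$ separation works precisely because Theorem \ref{theo:primitves} provides the parametric $T^{-1/2}$ rate; any slower rate would require shrinking $\gamma\ln(T)$ to a rate-dependent sequence.
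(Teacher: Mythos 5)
Your proposal is correct and follows essentially the same route as the paper's proof: show that the arctan-transformed eigenvalues converge in probability to $1$ for $l\le K_j$ (since $\gamma\ln(T)\widehat\lambda_{lj}$ diverges) and to $0$ for $l>K_j$ (since $\ln(T)\mathcal O_P(T^{-1/2})=o_P(1)$), then conclude via the boundary values $g_{0j}=1$, $g_{(K_j^{\max}+1)j}=0$ and discreteness of the argmax. Your treatment of the normalizing denominator (proving convergence to a strictly positive constant via a law of large numbers rather than just asserting $c=\mathcal O_P(1)$ and $c>0$) is in fact slightly more explicit than the paper's, but the argument is the same.
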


\subsection{Least squares estimator}\label{subsec:OLS}

Estimation of the functional factor regression model builds upon the least squares equation identified in (\ref{eq:OLS}).
Since the factors are latent unknown in practice, we define the vector of generated regressors
$$
	\widehat{z}_t \coloneqq (w_t', \widehat{F}_{1t}', \ldots, \widehat{F}_{Jt}')'
$$
with empirical factors $\widehat F_{jt} = (\widehat f_{1jt}, \ldots, \widehat f_{K_j j t})'$, $\widehat f_{ljt} = \langle X_{jt} - \widehat \mu_j, \widehat \psi_{lj} \rangle$.
The respective sample estimator for $B(r)$ follows as
\begin{equation}\label{eq:ols.estimator}
	\widehat{B}(r) = \left(\sum_{t=1}^{T}\widehat{z}_t \widehat{z}_t'\right)^{-1} \left(\sum_{t=1}^{T} \widehat{z}_t Y_{t}(r)\right), \quad r \in [a,b],
\end{equation}
where $\widehat{B}(r) = ((\widehat{\alpha}^*(r))',(\widehat{B}_1(r))', \ldots, (\widehat{B}_J(r))')'$ and $\widehat B_j(r) = (\widehat \beta_{1j}(r), \ldots, \widehat \beta_{K_j j}(r))'$.
The estimator for the $j$-th functional regression coefficient function as identified in Lemma \ref{lem:ident} is
\begin{equation}\label{eq:func.coefficient}
	\widehat{\beta}_j(r,s) =  \sum_{l=1}^{K_j} \widehat{\beta}_{lj}(r)  \widehat{\psi}_{lj}(s) = (\widehat{B}_j(r))'\widehat{\Psi}_j(s).
\end{equation}
For the scalar variables $i=2, \ldots, N$, the coefficient function estimator $\widehat \alpha_i(r)$ for $\alpha_i(r)$ is the $i$-th entry of $\widehat{B}(r)$, and the intercept estimator is reconstructed as $\widehat \alpha_1(r) = \widehat{\alpha}^*_1(r) - \sum_{j=1}^J \int_{a}^{b} \widehat \beta_j(r,s) \widehat \mu_j(s) \dd s$.

A challenge in deriving the asymptotic properties of these estimators lies in the fact that the generated regressor vector $\widehat{z}_t$ contains empirical factors instead of the true ones. The following theorem establishes asymptotic normality and provides a covariance function estimator that accounts for the underlying generated regressors problem.
\begin{theorem}\label{theo:ols.normality}
	\textbf{\textup{(Least squares).}} 
By Assumption \ref{as:exo}--\ref{as:estimation} and as $T \rightarrow \infty$, for any $j=1,\ldots, J$ and $r,s \in [a,b]$, we have
	\begin{align*}
		\frac{\sqrt T (\widehat \beta_j(r,s) - \beta_j(r,s))}{\sqrt{\widehat \Omega_j(r,s)}}  \overset{d}{\longrightarrow} \mathcal N(0, 1),
	\end{align*}
where the convergence holds uniformly for $\ r,s \in [a,b]$ with respect to the sup norm.
The covariance function estimator is given by
\begin{equation}\label{eq:cov.estimat}
	\widehat \Omega_j(r,s) = \frac{1}{T} \sum_{t=1}^T \Big(([\widehat Q^{-1}]_j \widehat z_t \widehat u_t(r))'\widehat \Psi_j(s)  + \widehat \omega_{jt}(r,s) \Big)^2,
\end{equation}
where all remaining terms are discussed in detail below.
\end{theorem}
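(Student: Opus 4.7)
}
The plan is to decompose the sampling error of $\widehat\beta_j(r,s)=(\widehat B_j(r))'\widehat\Psi_j(s)$ into a part attributable to the OLS regression on the factors and a part attributable to the estimation of the loading functions, and then to invoke a mixing central limit theorem on a joint influence-function representation. Because loadings and factors are only identified up to sign, I first introduce the diagonal sign matrix $S_j=\mathrm{diag}(s_{1j},\dots,s_{K_jj})$ with $s_{lj}=\mathrm{sign}\langle\widehat\psi_{lj},\psi_{lj}\rangle$ and work with the sign-corrected objects $\widetilde\Psi_j=S_j\widehat\Psi_j$ and $\widetilde B_j=S_j\widehat B_j$, since $(\widehat B_j(r))'\widehat\Psi_j(s)=(\widetilde B_j(r))'\widetilde\Psi_j(s)$. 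Theorem \ref{theo:primitves}(d) then gives $\|\widetilde\psi_{lj}-\psi_{lj}\|=\mathcal O_P(T^{-1/2})$, so all signs can henceforth be treated as absorbed.

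The first step is the additive decomposition
\begin{equation*}
\widehat\beta_j(r,s)-\beta_j(r,s)=\bigl(\widetilde B_j(r)-B_j(r)\bigr)'\Psi_j(s)+(B_j(r))'\bigl(\widetilde\Psi_j(s)-\Psi_j(s)\bigr)+R_T(r,s),
\end{equation*}
where the remainder $R_T(r,s)=(\widetilde B_j(r)-B_j(r))'(\widetilde\Psi_j(s)-\Psi_j(s))$ is $\mathcal O_P(T^{-1})$ by Theorem \ref{theo:primitves}(d) and the root-$T$ consistency of $\widetilde B_j$ established below. For the first term, I expand the OLS estimator using $Y_t(r)=z_t'B(r)+u_t(r)$, writing
\begin{equation*}
\widehat B(r)-B(r)=\Bigl(T^{-1}\textstyle\sum_t\widehat z_t\widehat z_t'\Bigr)^{-1}\Bigl(T^{-1}\textstyle\sum_t\widehat z_t u_t(r)+T^{-1}\textstyle\sum_t\widehat z_t(z_t-\widehat z_t)'B(r)\Bigr),
\end{equation*}
after sign-correction on the factor block of $\widehat z_t$. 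A law of large numbers for $\alpha$-mixing sequences under Assumption \ref{as:estimation}(a) delivers $T^{-1}\sum_t\widehat z_t\widehat z_t'\xrightarrow{P}Q:=E[z_tz_t']$, which is invertible by Assumption \ref{as:identifiability2}(c). The generated-regressor correction enters through the second sum: substituting $\widehat f_{ljt}=f_{ljt}^*+\langle X_{jt}-\mu_j,\widetilde\psi_{lj}-\psi_{lj}\rangle-\langle\widehat\mu_j-\mu_j,\widetilde\psi_{lj}\rangle$ converts this sum into a linear functional of $\widetilde\Psi_j-\Psi_j$ and $\widehat\mu_j-\mu_j$.

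The crux of the argument is to handle the loading perturbation analytically. By the standard eigenfunction perturbation expansion applied to $\widehat D_j-D_j$ (whose size is $\mathcal O_P(T^{-1/2})$ by Theorem \ref{theo:primitves}(b)) and the fact that Assumption \ref{as:identifiability2}(b) ensures strictly separated positive eigenvalues, one obtains the linearization
\begin{equation*}
\widetilde\psi_{lj}(s)-\psi_{lj}(s)=\sum_{k\ne l,\,k\le K_j}\frac{\langle(\widehat D_j-D_j)\psi_{lj},\psi_{kj}\rangle}{\lambda_{lj}-\lambda_{kj}}\psi_{kj}(s)+\frac{1}{\lambda_{lj}}(\widehat D_j-D_j)^{\perp}\psi_{lj}(s)+o_P(T^{-1/2}),
\end{equation*}
where the second piece captures the contribution outside the span of $\{\psi_{kj}\}_{k\le K_j}$. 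Since $\widehat D_j-D_j$ is, up to a negligible bias, a sample average of mean-zero kernels built from $(X_{jt}-\mu_j)$ and $(Y_t-E Y_t)$, this linearization expresses $\sqrt T(\widetilde\psi_{lj}-\psi_{lj})$ as a sample mean of mixing random functions. Combining this with the OLS linearization, I obtain the joint influence-function representation
\begin{equation*}
\sqrt T\bigl(\widehat\beta_j(r,s)-\beta_j(r,s)\bigr)=T^{-1/2}\textstyle\sum_{t=1}^T\bigl([Q^{-1}]_j z_t u_t(r)\bigr)'\Psi_j(s)+T^{-1/2}\textstyle\sum_{t=1}^T\omega_{jt}(r,s)+o_P(1),
\end{equation*}
where $\omega_{jt}(r,s)$ collects the contributions of the loading and mean perturbations with their corresponding population coefficients. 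A multivariate $\alpha$-mixing CLT (e.g., a Cramér--Wold reduction of the Herrndorf CLT) under Assumption \ref{as:estimation}(a) yields pointwise asymptotic normality, and the sample analogue inside $\widehat\Omega_j(r,s)$ consistently estimates the long-run variance after checking that replacing $(Q,\Psi_j,\beta_j,u_t)$ by their estimators contributes $o_P(1)$.

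The last step is to upgrade pointwise convergence to uniform convergence in the sup norm on $[a,b]^2$. I would exploit Assumption \ref{as:estimation}(b): the differentiability of sample paths and parameter functions implies that the influence processes $(r,s)\mapsto\omega_{jt}(r,s)$ lie in a Hölder/Sobolev ball, so tightness of the partial-sum process in $C([a,b]^2)$ follows from standard moment bounds for increments of $\alpha$-mixing triangular arrays, and finite-dimensional convergence extends to weak convergence of the process, hence uniform convergence of the Studentized statistic. The hardest part is this last uniform step together with controlling the eigenfunction perturbation remainder in sup norm rather than $L^2$ norm; the differentiability assumption and Sobolev embedding $H^1([a,b])\hookrightarrow C([a,b])$ are the key tools that make that promotion possible.
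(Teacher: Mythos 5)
Your plan follows essentially the same route as the paper's proof: the same split of $\widehat\beta_j-\beta_j$ into an OLS-on-generated-factors part and a loading-perturbation part, the same linearization of the generated factors through $\widehat\mu_j$ and $\widehat\psi_{lj}$, the same eigenfunction perturbation expansion of $\widehat D_j$ (within-span terms weighted by eigenvalue gaps plus an orthogonal-complement term weighted by $\lambda_{lj}^{-1}$, which yields the $\varepsilon_{jt}^*(s)\,h_{jt}'B_j(r)$ piece of $\widehat\omega_{jt}$), a pointwise $\alpha$-mixing CLT for the joint influence function, plug-in consistency of $\widehat\Omega_j$, and finite-dimensional convergence plus stochastic equicontinuity from Assumption \ref{as:estimation}(b) for the uniform statement. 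The differences are only presentational: the paper organizes these steps into Lemmas \ref{lem:aux1}--\ref{lem:aux23} and obtains equicontinuity from Lipschitz sample paths via Billingsley's Theorem 7.5 rather than a Sobolev-embedding tightness bound, and note that, as in the paper, $\widehat\Omega_j$ targets the marginal variance of the influence function (a White-type estimator) rather than a HAC long-run variance.
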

The covariance function estimator in equation \eqref{eq:cov.estimat} includes several components that need further explanation. $\widehat{Q} = \frac{1}{T}\sum_{t=1}^T \widehat{z}_t\widehat{z}'_t$ represents the sample covariance matrix of the regressor vector, and $[\widehat{Q}^{-1}]_j$ is the submatrix of $\widehat{Q}^{-1}$ composed of the $K_j$ rows indexed from $N + \sum_{l=1}^{j-1} K_l + 1$ through $N + \sum_{l=1}^j K_l$ that belong to the $j$-th functional regressor. The residual term $\widehat{u}_t(r) = Y_t(r) - \widehat{z}'_t \widehat{B}(r)$ is the difference between the observed and fitted values.
Equation \eqref{eq:cov.estimat} represents the functional extension of the White-Huber-Eicker heteroskedasticity-consistent covariance estimator when the additional term $\widehat \omega_{jt}(r,s)$ is ignored.
This additional term is relevant because it captures the uncertainty arising from the estimation of factors and loadings, and it is essential for conducting valid statistical inference on $\beta_j(r, s)$. It is given by
\begin{align*}
	\widehat \omega_{jt}(r,s) = (\widehat \Psi_j(s))'[\widehat Q^{-1}]_j \sum_{k=1}^J (\overline z  \widehat F_{kt}' - \overline{z F_k'} \widehat G_{kt} ) \widehat B_k(r) + (\widehat \Psi_j(s))' \widehat G_{jt} \widehat B_j(r) + \widehat \varepsilon_{jt}(s) \widehat h_{jt}' \widehat B_j(r),
\end{align*}
where $\overline z = \frac{1}{T} \sum_{t=1}^T \widehat z_t$, $\overline{z F_k'} = \frac{1}{T} \sum_{t=1}^T \widehat z_t \widehat F_{kt}'$, $\widehat \gamma_{lj}(r) = \frac{1}{T} \sum_{t=1}^T \widehat f_{ljt} (Y_t(r) - \overline Y(r))$, $\widehat y_{ljt} = \langle Y_t - \overline Y, \widehat \gamma_{lj} \rangle$, $\overline{f_{mj} y_{lj}} = \frac{1}{T} \sum_{t=1}^T \widehat f_{mjt} \widehat y_{ljt}$, 
$\widehat \varepsilon_{jt}(s) = X_{jt}(s) - \widehat \mu_j(s) - \widehat F_{jt}' \widehat \Psi_j(s)$,
 $\widehat G_{jt}$ is the $K_j \times K_j$ matrix with $(l,m)$-entry
$$
	[\widehat G_{jt}]_{lm} = (\widehat \lambda_{lj} - \widehat \lambda_{mj})^{-1}(\widehat f_{mjt} \widehat y_{ljt} - \overline{f_{mj} y_{lj}} + \widehat f_{ljt} \widehat y_{mjt} - \overline{f_{lj} y_{mj}}) 1_{\{l\neq m\}},
$$ 
and $\widehat h_{jt}$ are the vectors of length $K_j$ with $l$-th entries $\widehat h_{ljt} = \widehat \lambda_{lj}^{-1} \widehat y_{ljt}$. Further details are given in the appendix.\\
\indent For each fixed pair $(r,s)$, we define the heteroskedasticity-consistent and corrected standard errors as
$$
se_j(r,s) = \frac{\sqrt{\widehat \Omega_j(r,s)}}{\sqrt T}.
$$
Given any significance level $\alpha$ and the $(1-\frac{\alpha}{2})$-quantile $z_{(1-\frac{\alpha}{2})}$ of the the standard normal distribution, these standard errors allow us to formulate a confidence region for $\beta_j(r,s)$:
$$
I_j(r,s) = \bigg[\widehat \beta_j(r,s) - z_{(1-\frac{\alpha}{2})} se_j(r,s); \widehat \beta_j(r,s) + z_{(1-\frac{\alpha}{2})} se_j(r,s)\bigg].
$$
As a direct consequence of Theorem \ref{theo:ols.normality}, these intervals are asymptotically valid. 
Specifically, for any fixed $r,s \in [a,b]$,
$$
\lim_{T \to \infty} P\Big(\beta_j(r,s) \in I_j(r,s) \Big) = 1-\alpha.
$$

\subsection{Practical implementation}\label{subsec:implementation}
After the methodological technicalities and asymptotics, this section serves as a short summary to help practitioners estimate a functional factor regression. The \textsf{R} package "ffr"\footnote{Available from \href{https://github.com/luiswn/ffr}{https://github.com/luiswn/ffr}} accompanying this paper provides a user friendly application of the following steps.\\[7pt]
\textbf{Step 1: Transform observed data to functions.} Normally, empirical functional data is only available stored in high-dimensional vectors. Therefore, construct the response variable and all functional regressors using some standard method such as basis expansion. Common bases choices are natural splines, smoothing splines or B-splines (see \textcite{Kokoszka.2021} or any other introductory textbook for more detailed information).\\[7pt]
\textbf{Step 2: Estimate the primitives.} For each functional regressor $j=1,\ldots,J$, compute the sample means $\overline{Y}(r)$ and $\widehat{\mu}_{j}(s)$ according to (\ref{eq:sample.means}), then construct the sample cross-covariance $\widehat{c}_j\left(r,s \right)$ from (\ref{eq:sample.crosscov}) in order to estimate the integral kernel $\widehat{d}_j\left(r,s \right)$ as described in (\ref{eq:sample.d}). Lastly, choose some large enough $K_{j}^\text{max}$ and compute the eigenpairs $\lbrace(\widehat{\lambda}_{lj}, \widehat{\psi}_{lj})\rbrace_{l=1}^{K_{j}^\text{max}}$, as well the corresponding factors $\widehat{f}_{ljt}=\langle X_{jt}-\widehat{\mu}_{j}, \widehat{\psi}_{lj} \rangle, l=1, \ldots, K_{j}^\text{max}$.\\[7pt]
\textbf{Step 3: Estimate the number of factors.} Set the tuning parameter $\gamma$ to some positive value or apply any cross-validation technique of your choice. For each functional regressor $j=1,\ldots,J$, use the estimated eigenvalues $\lbrace \widehat{\lambda}_{lj}\rbrace_{l=1}^{K_{j}^{\text{max}}}$ to determine the number of factors $\widehat{K}_j^{ED}$ according to (\ref{eq:ed.estimator}) and (\ref{eq:arctan.trans}).\\[7pt]
\textbf{Step 4: Estimate the functional regression.} Store an intercept dummy, all relevant factors $\lbrace\widehat{f}_{l1t}\rbrace_{l=1}^{\widehat{K}_1^{ED}}, \ldots, \lbrace\widehat{f}_{lJt}\rbrace_{l=1}^{\widehat{K}_J^{ED}}$ and any other scalar valued exogenous regressors $w_t$ in a vector $\widehat{z}_t$, then estimate the regression model for a sufficiently large number of points on the curve $Y_t(r)$ according to (\ref{eq:ols.estimator}). Finally, for each functional regressor $j=1,\ldots,J$, use the relevant regression coefficients $\widehat{B}_j(r)$ and the respective eigenfunctions $\lbrace \widehat{\psi}_{lj}\rbrace_{l=1}^{\widehat{K}_j^{ED}}$ to recover the functional coefficient function $\widehat{\beta}_j(r,s)$ according to (\ref{eq:func.coefficient}). By calculating each coefficient function's covariance estimator as in \eqref{eq:cov.estimat}, statistical inference procedures like hypothesis testing or constructing confidence intervals follow straightforwardly. 
\section{Simulation}\label{sec:Simulation}

In the following, we verify the favorable finite sample properties of our functional regression estimator and confirm the asymptotic normality results needed for confidence bands and hypothesis tests. In our generic data generating process (DGP), each regressor follows a factor structure defined as 
\begin{equation*}
	X_{jt}(s) = \sum_{l=1}^{K}f_{ljt}v_{l}(s) + \sum_{l=K+1}^{3K}\varepsilon_{ljt}v_{l}(s), \quad s \in [0,1],
\end{equation*}
where $v_{1}(s)=1, v_{(2z)}(s)=\sqrt{2}\sin(2z\pi s), v_{(2z+1)}(s)=\sqrt{2}\cos(2z\pi s)$ is the Fourier basis, the factors are independently generated according to $f_{ljt} \mathtt{\sim} \mathcal{N}(0,1)$, the errors are $\varepsilon_{ljt}\mathtt{\sim} \mathcal{N}(0,1)$ distributed and the number of relevant factors is set to $K=3$. On the other hand, the regression equation follows as
\begin{equation*}
	Y_t(r) = \int_{0}^{1} \beta_1(r,s)X_{1t}(s) \,ds + \int_{0}^{1} \beta_2(r,s)X_{2t}(s) \,ds + \sum_{i=1}^{2K} u_{it}\rho_i(r), \quad r,s \in [0,1].
\end{equation*}
The bivariate coefficient functions are defined as $\beta_j(r,s)=\sum_{l=1}^{K}v_{l}(r)\beta_{lj} v_{l}(s)$, where $\beta_{lj}$ is deterministically chosen in a way that ensures $rank(\beta_j(r,s))=K$ (see Appendix \ref{app:beta_spec} for the exact specification). Lastly, $\rho_i(r)=\binom{I+1}{i}r^i(1-r)^{I+1-i}, i=1,\ldots,I$, are the Bernstein basis polynomials. We differentiate between homoskedastic regression errors generated from $u_{it} \mathtt{\sim} \mathcal{N}(0,1)$ (DGP1), and conditionally heteroskedastic errors defined as $u_{it} \mathtt{\sim} \mathcal{N}(0,f_{1t}^2)$ (DGP2).\\
\indent For both data generating processes, the average bias and the pointwise average coverage rate of 95\% confidence intervals for the regression coefficient function $\beta_1(r,s)$ are reported. Firstly, the bias of our functional factor regression (FFR) estimator is compared to the theoretically constructed case where the factors and loadings are known and the only error stems from the regression step. Secondly, we present three types of coverage rates which differ by the covariance function estimator used to construct the intervals. In the constructed case with known factors and loadings, the heteroskedasticity consistent (HC) estimator $ T^{-1}\sum_{t=1}^T ((Q^{-1} z_t \widehat u_t(r))'\Psi(s))^2$ is expected to produce bands with asymptotically correct coverage. Next, we report the coverage share with HC covariance estimator but the factors and loadings now have to be estimated themselves beforehand. This case differs from our functional factor regression covariance estimator $\widehat \Omega(r,s)$ from equation \eqref{eq:cov.estimat} as it misses the correction term $\widehat \omega_{t}(r,s)$ which is necessary to account for the generated regressor issue. Therefore, we expect the uncorrected covariance estimator to result in incorrect coverage results while the FFR approach should yield asymptotically accurate rates.\\
\begin{table}\centering
	\caption{Bias and point-wise average coverage rate of 95\% CIs for $\beta_1(r,s)$. All results are based on a discrete grid of 200 equidistant points for $r,s \in [0,1]$ and 5000 Monte Carlo simulations.}
	\begin{tabular}{ll wc{2cm} wc{2cm} wc{2cm} wc{2cm}}				
		\hline
		& & $T = 50$ & $T = 100$ & $T = 500$ & $T = 1000$\\
		\toprule
		\midrule
		\multicolumn{1}{l}{\multirow{7}{*}{DGP 1: Homo}} & &\multicolumn{4}{c}{Bias} \\
		\cmidrule{3-6}
		&True param. & 0.001 & 0.000 & 0.000 & 0.000 \\
		&FFR. & 0.011 & 0.005 & 0.001 & 0.000 \\
		& & \multicolumn{4}{c}{Coverage rate} \\
		\cmidrule{3-6}
		&True param. & 0.906 & 0.928 & 0.943 & 0.944 \\
		&Uncorrected & 0.456 & 0.419 & 0.346 & 0.327 \\
		&FFR & 0.858 & 0.894 & 0.936 & 0.943 \\
		\midrule
		\multicolumn{1}{l}{\multirow{7}{*}{DGP 2: Hetero}} & &\multicolumn{4}{c}{Bias} \\
		\cmidrule{3-6}
		&True param. & 0.004 & 0.000 & 0.000 & 0.000 \\
		&FFR & 0.016 & 0.012 & 0.000 & 0.000 \\		
		& & \multicolumn{4}{c}{Coverage rate} \\
		\cmidrule{3-6}
		&True param. & 0.881 & 0.915 & 0.940 & 0.942 \\
		&Uncorrected & 0.516 & 0.526 & 0.559 & 0.570 \\
		&FFR & 0.858 & 0.896 & 0.937 & 0.943 \\
		\midrule
		\bottomrule
	\end{tabular}
	\label{table:simu.estimator}
\end{table}
\indent The simulation outcome is displayed in Table \ref{table:simu.estimator}. To start with the bias results, it becomes apparent that our FFR estimator is consistent as it exhibits a low bias even in small samples which vanishes completely as $T$ increases. This holds for both data generating processes. Regarding the coverage rates, it firstly should be noted how the confidence intervals constructed from the uncorrected covariance are generally incorrect even for large $T$ which is in line with the generated regressors literature. On the other hand, our FFR coverage rates as described in Theorem \ref{theo:ols.normality} asymptotically converge to the true value 0.95. These rates are of comparable size as the ones obtained from the unfeasible estimator with known factors and loadings and no first-stage error.\\
\begin{figure}[t]
	\centering
	\includegraphics[width=1.0\textwidth]{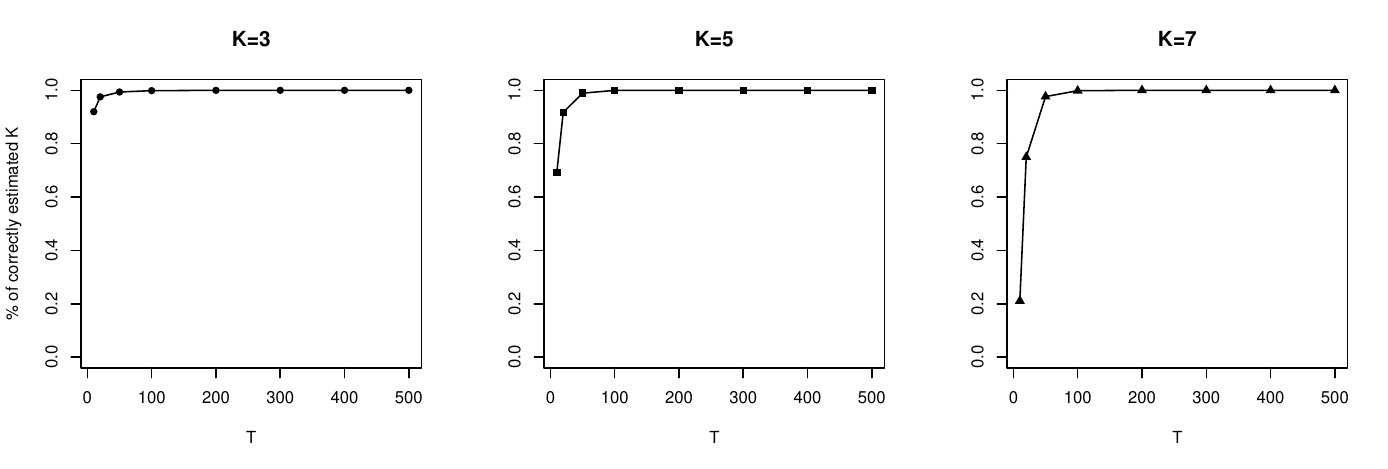}
	\caption{Simulation performance of the functional eigenvalue difference test applied to $X_{1t}(s)$ with 5000 Monte Carlo repetitions for different numbers of relevant factors $K$. The y-axis shows the share of correctly estimated cases during the 5000 simulations. The tuning parameter is set to $\gamma=1$.}
	\label{fig:simu.numb.factors}
\end{figure}
\indent Lastly, we examine the finite sample properties of the functional eigenvalue difference test introduced in Chapter \ref{subsec:factor.number} by determining the number relevant factors for $X_{1t}(s)$. The data generating process remains the same as above, we only change the number of factors $K$ between $3, 5$ and $7$. Figure \ref{fig:simu.numb.factors} shows that the estimator yields robustly correct results for all cases, even in settings with relatively small samples. The estimator lacks some precision when the number of observations drops below $50$. Moreover, it is noticeable that this issue amplifies with a larger number of factors $K$. 

\section{Application to electricity price curve modeling}\label{sec:Application}

In this section, we apply the functional factor regression method to model and forecast electricity spot prices in three different markets. For this task, understanding the institutional power market framework first is necessary. The European and North American wholesale short-term power trading is set-up as a one day ahead price auction. The 24 hourly spot prices for the next day are simultaneously settled when the gates close at 12 p.m.\ noon. Furthermore, substantial relevant information for this price setting mechanism is widely accessible to all market participants. The European Network of Transmission System Operators (ENTSO-E) exemplifies this type of comprehensive data source. The association provided with a legal mandate by the European Union requires its local transmission system operators (TSOs) to publish one day ahead load (i.e. demand) forecasts at least two hours before the gate closure time\footnote{\url{https://transparency.entsoe.eu/}}. Hence, it seems like a feasible goal to retrace the publicly available data which serves as a lower bound of the information set that wholesale traders make use of to buy and sell electricity\footnote{For an extensive discussion of the European electricity spot market design, see \textcite{Bichler.2021}}.\\
\indent On the other hand, there are clear economic motives for this modeling and forecasting task. Firstly, electricity still can not be stored efficiently. Therefore, demand spikes result in potentially untempered spot price fluctuations (\cite{Liebl.2013}). Secondly, increasing shares of renewable energy sources in the European electricity mix make the power generation more volatile which directly impacts energy price volatility (\cite{Lago.2021}). Modeling these relationships is therefore a promising avenue to achieve accurate price forecasts. Precise predictions, in turn, create substantial economic value across the market: individual generators can optimize their production schedules to maximize profits during high-price periods, large industrial consumers can shift energy-intensive operations to lower-price intervals to reduce costs, and traders can identify profitable arbitrage opportunities between day-ahead and real-time markets, enhancing overall market liquidity and efficiency.\\
\begin{figure}[t]
	\centering
	\includegraphics[scale=0.75]{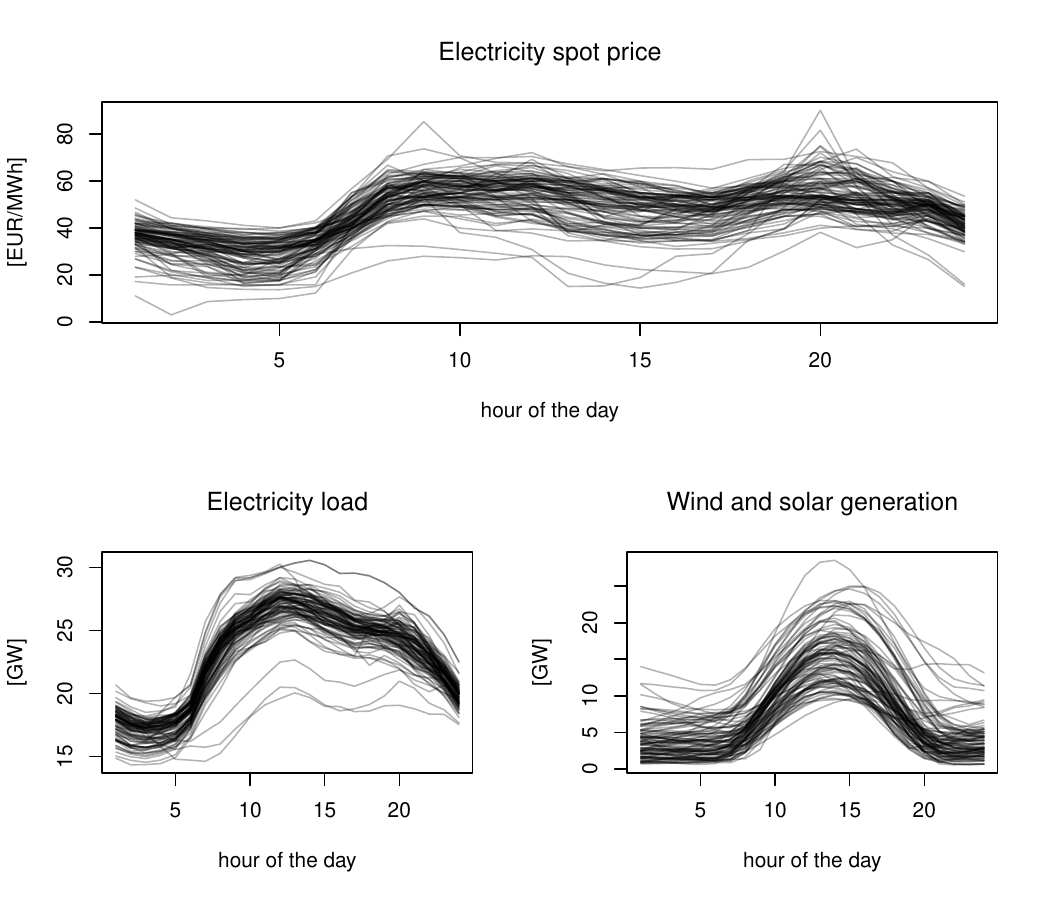}
	\caption{Energy market variables from the German EPEX-DE data set provided by \textcite{Lago.2021}. Each plot consists of 100 consecutive daily sample curves.}
	\label{fig:func.elec.data}
\end{figure}
\indent Following from the simultaneous auction based price setting, electricity spot prices are usually considered 24-dimensional multivariate time series (\cite{Uniejewski.2016}). However, neighboring data points are highly correlated and plots reveal that inter-day price data can be considered functional observations with regard to some intra-day time unit (e.g. hours, minutes, seconds, etc.). Figure \ref{fig:func.elec.data} shows 100 sample functions for three different energy market related variables to highlight this point. Hence, we model spot prices using our functional factor regression approach. The goal is not only to forecast but specifically reveal and interpret conditional correlation structures in this market through statistical inference. While our consistently estimated model parameters in combination with the novel asymptotic normality result allow for this task, alternative estimation techniques like regularization methods which are necessary to handle the high-dimensional data are often hardly interpretable and require additional post-estimation steps for valid inference (see, for example, \textcite{Lee.2016}).\\
\indent The data set we consider is provided by \textcite{Lago.2021}. The "open-access benchmark" energy data set that entails relevant regressor variables for five different power markets. Focusing on the German electricity market first, the data spans 2184 days (i.e. six 364-day "years") between 2012 and 2017. In our forecasting exercise, the first four years are assigned to model training and the remaining two years are used for forecast evaluation. Besides the hourly spot prices obtained from the European Power Exchange (EPEX), two exogenous regressors observed at the same frequency are available. Firstly, the day ahead load forecast in the TSO Amprion zone which mainly covers the whole of Western Germany. Load is a measure of the power absorbed by all installations connected to the transmission or distribution network. In economic terms, it can be understood as electricity consumption or demand. As described above, this day-ahead demand forecast is publicly available on the ENTSO-E transparency platform two hours before each daily auction. Secondly, the data set provides day ahead wind and solar generation forecasts in the three largest TSO zones. The variable is a supply-side measure and freely available on the system operators' websites.\\
\indent We set up the following fully-functional regression model in order to study German electricity spot prices:
\begin{align}\label{eq:elec.FAFR}
	\begin{split}
	P_t(r) = {}& w_t'\alpha(r) + \int_{0}^{1} \varphi_{1}(r,s)P_{t-1}(s) \,ds + \int_{0}^{1} \varphi_{2}(r,s)P_{t-2}\left(s\right) \,ds + \int_{0}^{1} \varphi_{5}(r,s)P_{t-5}(s) \,ds\\
	& + \int_{0}^{1} \beta_{1,1}(r,s)L_{t}(s) \,ds + \int_{0}^{1} \beta_{1,2}(r,s)L_{t-1}(s) \,ds + \int_{0}^{1} \beta_{1,5}(r,s)L_{t-5}(s) \,ds \\
	& + \int_{0}^{1} \beta_{2,1}(r,s)G_{t}(s) \,ds + \int_{0}^{1} \beta_{2,2}(r,s)G_{t-1}(s) \,ds + \int_{0}^{1} \beta_{2,5}(r,s)G_{t-5}(s) \,ds + u_t(r). \\
	\end{split}
\end{align}
$P_t\left(r\right)$ describes the daily price curve, $L_{t}\left(s\right)$ is the curve of load forecasts for day $t$ available one day before, $G_{t}\left(s\right)$ is the curve of wind and solar generation forecasts for day $t$ available one day before, and $w_t$ stores the intercept, as well as one-hot encoded weekday dummies. Because only weekdays are considered in all subsequent analyses, the chosen regressor dynamics are one day lags, two days lags and one week (i.e five days) lags. The intra-day time domains $r$ and $s$ are scaled to lie on the unit interval $[0,1]$. We restrain from normalizing the data in order to provide straight-forward interpretation of the functional regression parameters.\\ 
\begin{table}[!t]
	\caption{Results from the functional eigenvalue difference test introduced in section \ref{subsec:factor.number} for the nine regressors specified in equation (\ref{eq:elec.FAFR}). The optimal tuning parameter $\gamma=93$ was found through cross-validation.}
	\label{ta:epf.K}
	\centering
	\begin{tabular}{l|ccccccccc}
		\toprule
		\midrule
		\textbf{Regressor} & $P_{t-1}$ & $P_{t-2}$ & $P_{t-5}$ & $L_{t}$ & $L_{t-1}$ & $L_{t-5}$ & $G_{t}$ & $G_{t-1}$ & $G_{t-5}$ \\
		\midrule
		$\mathbf{\widehat{K}_{ED}}$ & 5 & 4 & 4 & 5 & 4 & 4 & 5 & 4 & 3 \\
		\bottomrule
	\end{tabular}
\end{table}
\indent The regression model (\ref{eq:elec.FAFR}) is estimated as explained in chapter \ref{sec:Estimation}. Table \ref{ta:epf.K} shows the eigenvalue difference test results for the number of factors $K_j$ for each of the nine functional regressors. By splitting the data 60/40 and performing an expanding window cross-validation with regard to the one day ahead forecasting performance, we find an optimal tuning parameter of $\gamma=93$.\\
\begin{figure}[!t]
	\centering
	\begin{subfigure}{0.49\textwidth}
		\centering
		\includegraphics[width=\textwidth]{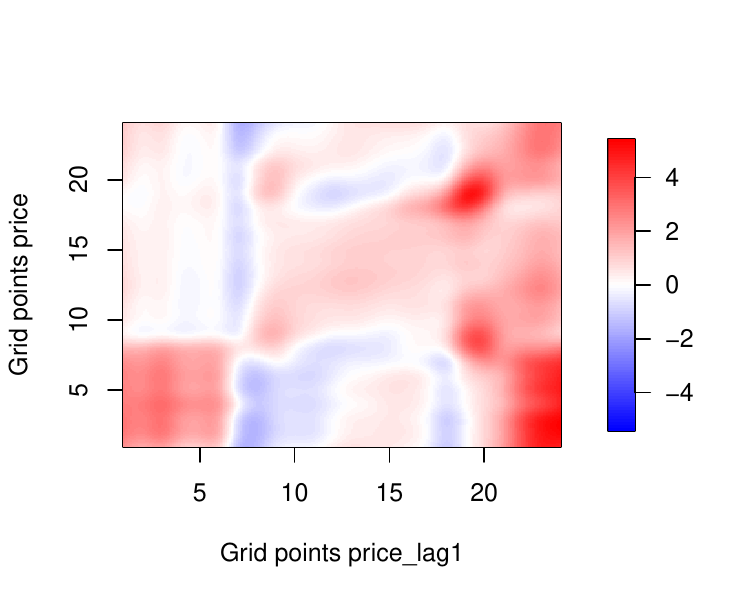}
		\caption{$\widehat{\varphi}_1(r,s)$: impact of last working day's price $P_{t-1}(s)$ on today's price $P_{t}(r)$.}
		\label{fig:phi.func}
	\end{subfigure}
	\begin{subfigure}{0.49\textwidth}
		\centering
		\includegraphics[width=\textwidth]{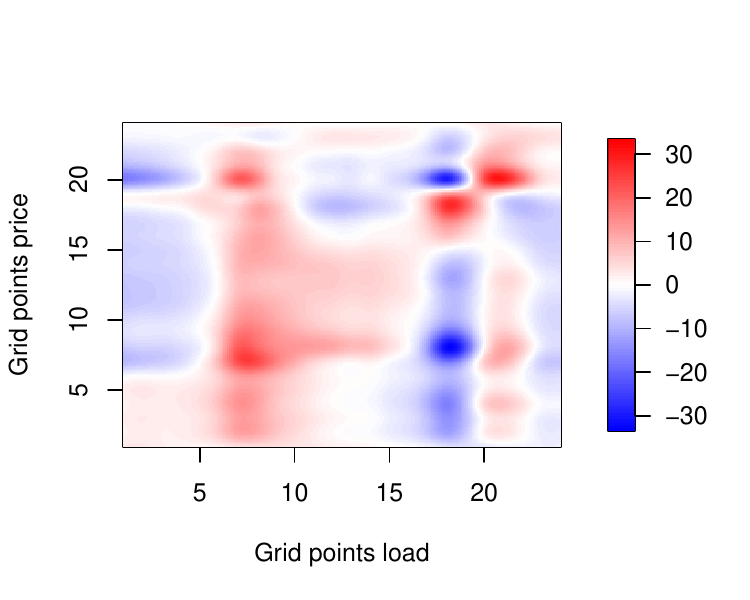}
		\caption{$\widehat{\beta}_{1,1}(r,s)$: impact of day-ahead load forecast $L_t(s)$ on today's price $P_{t}(r)$.}
		\label{fig:beta.func1}
	\end{subfigure}
	\begin{subfigure}{0.49\textwidth}
		\centering
		\includegraphics[width=\textwidth]{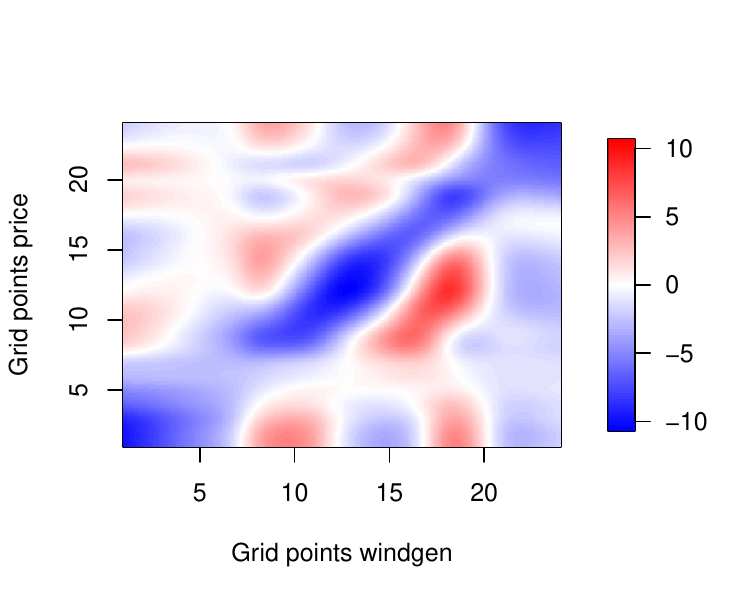}
		\caption{$\widehat{\beta}_{2,1}(r,s)$: impact of day-ahead wind and solar generation forecast $G_t(s)$ on today's price $P_{t}(r)$.}
		\label{fig:beta.func2}
	\end{subfigure}
	\caption{Estimated functional regression coefficients of model (\ref{eq:elec.FAFR}). The number of factors chosen to estimate the functional parameters in figure \ref{fig:phi.func}--\ref{fig:beta.func2} are shown in Table \ref{ta:epf.K}. The underlying German power market data spans all working days between 2012 and 2017.}
	\label{fig:coef.funcs}
\end{figure}
\indent Because the regression coefficients are uni- and bivariate functions, we require a visual interpretation. Figure \ref{fig:coef.funcs} shows heatmap plots of the three functional slope coefficients associated with $P_{t-1}$, $L_t$ and $G_t$. The first plot visualizes the impact of yesterday's spot price at hour $s$ on today's price at hour $r$ (Figure \ref{fig:phi.func}). To begin with, the positive diagonal values indicate price persistence as higher one-day-before prices lead to price increases at the same hour today. At the peak, a 1 Euro per MWh increase in yesterday's 8 p.m.\ price leads to an average 4.92 Euro per MWh increase in today's 7 p.m.\ price, ceteris paribus. Secondly, the coefficient plot reveals three separately interpretable regions of positive values on the surface. Looking at the red area in the bottom left of the plot, we see that yesterday's prices between midnight and roughly 7 a.m.\ mostly impact today's nighttime prices. The second field is marked by the four peaks at around 8 a.m.\ and 8 p.m.\ yesterday and today. Similarly to the first field, this surface shows how working hour electricity prices can mainly be explained by yesterday's working hour prices. While the 8 p.m.\ peak is expected, it is striking that the last day's 8 p.m.\ evening price is also a strong signal for the next day's 8 a.m.\ morning price. Lastly, the coefficient plot shows that the previous day's last hourly prices heavily influences all prices of the following day which is in line with the "end-of-day effect" described in \textcite{Maciejowska.2016} and \textcite{Ziel.2016}. Since our model already controls for electricity load forecasts and renewable energy generation, the observed price persistence patterns likely reflect market dynamics beyond the merit order effect. The temporal dependencies can stem from several potential mechanisms. Primarily, market participants might use the recent price history as anchoring points for their current bidding strategies. In addition, operational constraints like minimum up/down times and ramping limitations of power plants likely lead to temporal dependencies. The distinct day and night pattern might reflect different market regimes like liquidity differences or changes in the composition of market participants. To validate these findings statistically, Figure \ref{fig:phi.p} presents pointwise p-values from a two-tailed t-test on difference to zero based on our new inferential method. The dark red areas within the solid contour lines indicate p-values smaller than 0.01. The test result confirms the significance of three price patterns: same-hour effects, nighttime price links, and end-of-day influences. Hence, our paper delivers support for the end-of-day effect hypothesis beyond the simple interpretation of unconditional correlation structures.\\
\indent The day-ahead load forecast coefficient function (Figure \ref{fig:beta.func1}) shows how expected demand affects prices. The positive values along the diagonal and its adjacent hours align with economic theory as higher demand forecasts lead to higher prices. Quantitatively, a 1 GW increase in the 8 a.m.\ load forecast corresponds to an average 21.64 Euro per MWh increase in the contemporaneous spot price, ceteris paribus. While the strongly negative relationship between afternoon load forecasts and night/morning prices appears difficult to interpret, the corresponding p-value heatmap (Figure \ref{fig:beta1.p}) indicates that this effect lacks statistical significance. In fact, our inference result reveals that load forecasts overall play a surprisingly minor role in explaining electricity prices when conditioning on renewable energy generation forecasts and lagged prices.\\
\indent Lastly, the supply-side effects are captured by the day-ahead wind and solar generation forecast coefficient function in Figure \ref{fig:beta.func2}. In line with theory, the diagonal shows strongly negative values, indicating that higher renewable forecasts reduce prices in the corresponding hour. The directly adjacent off-diagonals, however, show substantial positive influences, suggesting the presence of ramping costs. This pattern emerges as conventional producers must recover the additional costs incurred from cycling their power plants in response to fluctuating renewable generation. When high renewable output is forecasted, conventional generators face a choice between reducing output and incurring ramping costs, shutting down entirely and incurring shutdown and later startup costs, or running at minimum load with a potential loss during peak renewable hours. To recover these costs, generators likely increase their bids in hours surrounding the high renewable periods, leading to higher prices in these adjacent hours. When examining the respective p-value plot in Figure \ref{fig:beta2.p}, we can conclude that the variable is highly relevant for modeling electricity prices as large regions share significant p-values smaller than 0.01. The complete set of bivariate coefficient functions and their pointwise p-values can be found in Appendix \ref{app:func.reg.coefs} and \ref{app:func.p.vals}. Interactive 3D surface figures and additional model visualizations, including statistical inference and goodness-of-fit analyses, are available at the supplementary website: \url{https://luiswn.github.io/ffr-visualizations/}.\\
\begin{figure}[!t]
	\centering
	\begin{subfigure}{0.32\textwidth}
		\centering
		\includegraphics[width=\textwidth]{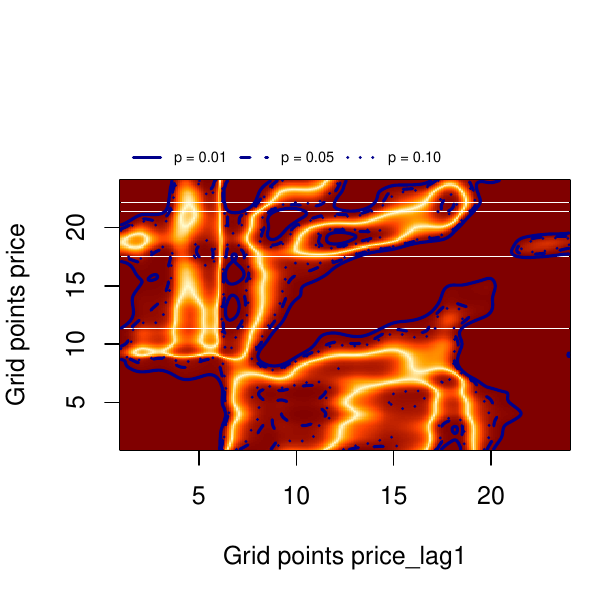}
		\caption{p-values for $\widehat{\varphi}_1(r,s)$}
		\label{fig:phi.p}
	\end{subfigure}
	\begin{subfigure}{0.32\textwidth}
		\centering
		\includegraphics[width=\textwidth]{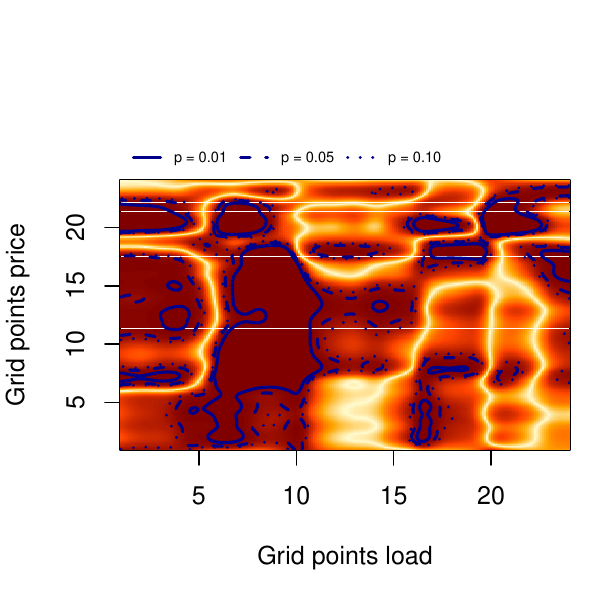}
		\caption{p-values for $\widehat{\beta}_{1,1}(r,s)$}
		\label{fig:beta1.p}
	\end{subfigure}
	\begin{subfigure}{0.32\textwidth}
		\centering
		\includegraphics[width=\textwidth]{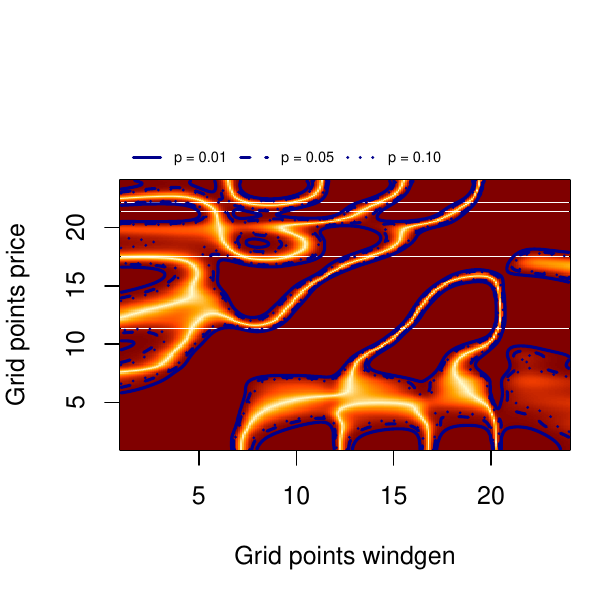}
		\caption{p-values for $\widehat{\beta}_{2,1}(r,s)$}
		\label{fig:beta2.p}
	\end{subfigure}
	\caption{Pointwise p-values from a two-tailed t-test on difference to zero. Dark red color indicates smaller values and the contour lines as defined in the legends indicate significance according to standard alpha levels of 0.01, 0.05 and 0.1.}
	\label{fig:p.funcs}
\end{figure}
\indent The second part of this application chapter deals with the out-of-sample performance of our model when forecasting electricity price curves one day ahead. The relevant test set consists of two years of working days which corresponds to 520 data points. Besides the German energy data, we forecast prices on the European power market of the Nordic countries Nord Pool (NP) and the U.S. American Pennsylvania–New Jersey–Maryland (PJM) market. All three data sets are equivalent in length as they are taken from the same open-access benchmark framework provided by \textcite{Lago.2021}\footnote{The exogenous regressors of NP and PJM differ slightly from the German ones discussed above. Consult the \textcite{Lago.2021} paper for detailed information concerning the definitions.}.\\
\indent Two non-functional benchmark models are implemented to forecast hourly spot prices. The first one is a parameter-rich LASSO estimated regression, originally suggested in the context of energy market forecasts by \textcite{Uniejewski.2016}. The idea is to run 24 regressions for all hourly price time series on all hours of lagged-dependent and lagged-exogenous regressors. In our specific set-up, we use the same variables as in (\ref{eq:elec.FAFR}) but treat them as multivariate time series. That leaves the LASSO model with $220$ regressors plus intercept. In theory, the regularization term deals with overfitting problems in parameter-rich models. We tune the regularization parameter for computational feasibility 4 times using 10-fold cross-validation and estimate the model with the glmnet approach by \textcite{Friedman.2010}. The LASSO generally performs well at prediction tasks as it can process high-dimensional data efficiently compared to, for example, least squares estimated regressions. However, if we are not willing to put further sparsity assumptions on the data generating process, the estimator does not necessarily possess oracle properties and the coefficients are hardly interpretable (\cites{Fu.2000}{Zou.2006}). The second benchmark model is fundamentally different to the LASSO as it is based on an OLS estimated parsimonious autoregressive structure without exogenous regressors. This so-called expert model introduced by \textcite{Ziel.2018} combines lagged (one day, two days, one week) prices for the same hour, last day's minimum and maximum price, last day's very last hourly price and all weekday dummy variables. We run this regression 24 times for all hours of the day. In case a regressor appears more than one time as it is always the case for the last hour model, the duplicate variables are removed. Estimating the expert model is computationally inexpensive and the regression coefficients are clearly interpretable. Lastly, we add a third naive forecasting method which works by always taking last weeks price at the same hour as the prediction.\\
\begin{table}[t!]
	\caption{Comparison between the one day ahead rolling window electricity price forecasts of four different models (FFR, LASSO, Expert and Naive) on three different markets (DE, NP and PJM). rMAE, MAE and RMSE are measures for prediction accuracy.}
	\label{ta:epf.accuracy}
	\centering
	\begin{tabular}{ll|cccc}
		\toprule
		\midrule
		& & \textbf{FFR} & \textbf{LASSO} & \textbf{Expert} & \textbf{Naive} \\
		\midrule
		\multirow{3}{*}{DE} & rMAE & 0.47 & 0.45 & 0.69 & 1.00 \\
		& MAE & 4.20 & 4.04 & 6.17 & 8.88 \\
		& RMSE & 6.46 & 6.36 & 10.11 & 14.55 \\
		\midrule
		\multirow{3}{*}{NP} & rMAE & 0.55 & 0.48 & 0.68 & 1.00 \\
		& MAE & 2.26 & 1.97 & 2.77 & 4.08 \\
		& RMSE & 4.30 & 3.91 & 5.09 & 7.02 \\
		\midrule
		\multirow{3}{*}{PJM} & rMAE & 0.71 & 0.67 & 0.72 & 1.00 \\
		& MAE & 4.54 & 4.30 & 4.62 & 6.38 \\
		& RMSE & 6.96 & 6.89 & 7.20 & 10.76 \\
		\bottomrule
	\end{tabular}
\end{table}
\indent Regarding performance measures, we report the mean absolute error (MAE), the MAE relative to the naive MAE (rMAE) and the root mean squared error (RMSE). Table \ref{ta:epf.accuracy} shows the rolling window forecast results for the three markets. The expert and the naive model yield reasonable predictions considering their simplicity, yet they underperform compared to our functional factor regression and the LASSO for all measures. The direct comparison between FFR and LASSO highlights that our model is roughly on the same level as the well established machine learning approach in terms of forecasting accuracy.

\section{Conclusion}

This paper introduces the functional factor regression, a novel approach for modeling time-dependent curve data that addresses limitations in existing function-on-function regression frameworks. By imposing factor structures on the regressors and assuming that the predictive information in functional variables is confined to a finite-dimensional subspace, our method is able to provide both the flexibility of infinite-dimensional functional models and the statistical inferential capabilities of traditional multiple regression.\\
\indent A key challenge in conventional unrestricted functional regression is the infinite-rank regression operator, which introduces an ill-posed problem requiring nonparametric estimation. Our model addresses this by ensuring a well-posed framework, allowing for least squares estimation with parametric convergence rates. The innovation lies in assuming that a finite number of factors can effectively represent the functional regression relationship, thereby simplifying the regression operator to finite-rank. Parameters are estimated using the eigencomponents of the integral operator $D$, which is the product of the cross-covariance operator with its adjoint, and the required number of factors is determined through an eigenvalue-difference test. Hence, we are able to develop a novel central limit theorem for the regression parameters in a fully functional model, enabling the construction of valid confidence bands and hypothesis tests that account for the uncertainty in estimated factors and loadings.\\
\indent Our empirical application to electricity price curve modeling demonstrates the practical value of our approach. The estimated bivariate slope coefficient functions reveal economically interpretable patterns while pointwise hypotheses tests make our claims statistically verifiable. Firstly, we find evidence for the existence of a pronounced end-of-day effect in lagged electricity prices. Secondly, the functional model reveals how load forecasts play only a minor role in explaining energy prices when conditioning on further regressors. The influence of wind and solar energy generation forecasts on prices, on the other hand, appears to be substantial and statistically significant. In terms of out-of-sample forecasting performance, our functional factor regression is competitive with machine learning models like LASSO.\\
\indent Future research could extend this framework by addressing further econometric issues, such as endogeneity problems within the functional variables, or by exploring regression models with more complex data structures, such as volatility surfaces or climate data in the form of geographic surfaces. The accompanying \textsf{R} package "ffr" makes our methodology accessible to practitioners and researchers across various disciplines where functional data analysis plays an increasingly important role.

\section*{Supporting Information}

An accompanying R package is available from \url{https://github.com/luiswn/ffr}.\\
A website serving as an online appendix to the application is available from \url{https://luiswn.github.io/ffr-visualizations/}. 

\newpage 
\addcontentsline{toc}{section}{References}
\nocite{*}
\printbibliography[heading=bibliography]

\newpage
\addcontentsline{toc}{section}{Appendix}
\appendix
\section{Appendix}\label{App}
\renewcommand{\thefigure}{A\arabic{figure}}
\setcounter{figure}{0}

\subsection{Proof of Lemma \ref{lem:ident}}
\begin{proof}
\textbf{Part (a):} By inserting the factor representation $X_{jt}(s) = \sum_{l=1}^{K_j} f_{ljt} \psi_{lj}(s) + \varepsilon_{jt}(s)$ into the regression equation we obtain
\begin{align} \label{eq:lem1aux}
	Y_t(r) =  w_t'\alpha(r) + \sum_{\substack{k=1 \\ k \neq j}}^J \langle \beta_k(r,\cdot), X_{kt} \rangle + \sum_{l=1}^{K_j} f_{ljt} \langle \beta_j(r,\cdot), \psi_{lj} \rangle + \langle \beta_j(r,\cdot), \varepsilon_{jt} \rangle +  u_t(r).
\end{align}
Multiplying \eqref{eq:lem1aux} by 
$\langle\beta_j(r,\cdot),\varepsilon_{jt}\rangle$,
taking expectations, and using
$$
E[\varepsilon_{jt}(s)w_t]=0, \quad E[\varepsilon_{jt}(s)\langle\beta_k(r,\cdot),X_{kt}\rangle]=0, \quad E[\varepsilon_{jt}(s)f_{ljt}]=0, \quad E[\varepsilon_{jt}(s)u_t(r)]=0,
$$
we get
$$
E[Y_t(r) \langle\beta_j(r,\cdot),\varepsilon_{jt}\rangle]
=E[\langle\beta_j(r,\cdot),\varepsilon_{jt}\rangle^2].
$$
The non-predictiveness condition $E[Y(r) \varepsilon_{jt}(s)] = 0$ implies $E[Y_t(r) \langle \beta_j(r, \cdot), \varepsilon_{jt} \rangle] = 0$. Hence,
$$
E[\langle \beta_j(r,\cdot), \varepsilon_{jt} \rangle^2] = 0,
$$
and the result follows because a random variable whose second moment is zero is zero almost surely. \\
\noindent
\textbf{Part (b):}
Because $\psi_\perp \in Im(\Gamma_j)$ and $Im(\Gamma_j) = \{ g \in H: g = \Gamma_j(h) \ \text{for some} \ h \in H \}$, there exists $h \in H$ with $\psi_\perp = \Gamma_j(h)$.
The kernel function of $\Gamma_j$ is
$$
Cov(X_{jt}(r), X_{jt}(s)) = \sum_{l,m=1}^{K_j} E[f_{ljt}f_{mjt}] \psi_{lj}(r) \psi_{mj}(s) + E[\varepsilon_{jt}(r) \varepsilon_{jt}(s)]
$$
because $E[\varepsilon_{jt}(s)f_{ljt}] = 0$.
Therefore,
$$
\psi_\perp(r) = \Gamma_j(h)(r) = \sum_{l,m=1}^{K_j} E[f_{ljt}f_{mjt}] \psi_{lj}(r) \langle \psi_{mj}, h \rangle + E[\varepsilon_{jt}(r) \langle \varepsilon_{jt}, h \rangle] = E[\varepsilon_{jt}(r) \langle \varepsilon_{jt}, h \rangle],
$$
where the last step holds by the condition $\langle \psi_\perp, \psi_{lj} \rangle = 0$ for all $l=1, \ldots, K_j$.
Therefore, by the Cauchy-Schwarz inequality and part (a),
\begin{align*}
	\bigg| \int_a^b \beta_j(r,s) \psi_\perp(s) \,ds \bigg|^2 &= E[\langle \beta_j(r,\cdot), \varepsilon_{jt} \rangle \langle \varepsilon_{jt}, h \rangle]^2 \\
	&\leq E[\langle \beta_j(r,\cdot), \varepsilon_{jt} \rangle^2] E[\langle \varepsilon_{jt}, h \rangle^2] \\
	&= 0
\end{align*}
because $E\langle\varepsilon_{jt},h\rangle^2<\infty$.
The result follows by taking the square root.
\\
\noindent
\textbf{Part (c):} Let $\{w_{mj}\}$ be an orthonormal basis of the closed subspace 
$$\overline{Im(\Gamma_j)} \cap span\{\psi_{1j}, \ldots, \psi_{K_j j}\}^\perp.$$
Since $\beta_j(r,\cdot) \in Im(\Gamma_j)$ by the identifiability condition, we can decompose $\beta_j(r,\cdot)$ into two orthogonal components:
\begin{align} \label{eq:betadecomp}
	\beta_j(r,s) = \sum_{l=1}^{K_j} \beta_{lj}(r) \psi_{lj}(s) + \sum_{m=1}^\infty c_{mj}(r) w_{mj}(s),
\end{align}
where $\beta_{lj}(r) = \int_a^b \beta_j(r,s) \psi_{lj}(s) \,ds$ and $c_{mj}(r) = \int_a^b \beta_j(r,s) w_{mj}(s) \,ds$.
Part (b) with $\psi_\perp = w_{mj}$ implies $c_{mj}(r) = 0$ for every $m$, so the remainder term in \eqref{eq:betadecomp} vanishes and the result follows.
\end{proof}

\subsection{Proof of Theorem \ref{theo:primitves}}\label{app:Theorem1}
All elements in Theorem \ref{theo:primitves} follow by the sum of Lemma \ref{lem:mu}--\ref{lem:eigenf}.

\begin{lemma}\label{lem:mu}
	Let $V_t$ be a $L^2([a,b])$-valued random process with $\mu(r) = E[V_t(r)]$. Furthermore, $\sup_t E[(V_t(r))^\nu] < \infty$ for some $\nu > 2$ such that $V_t$ is $\alpha$-mixing of size $-\nu/(\nu-2)$. Let $\widehat \mu(r) = \frac{1}{T} \sum_{t=1}^T V_t(r)$. Then, 
	$$\|\widehat \mu - \mu\| = \mathcal O_P(T^{-1/2}).$$
\end{lemma}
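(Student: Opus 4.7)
The plan is to reduce the Hilbert-norm claim to a second-moment bound via Markov's inequality. Since $\|\widehat\mu-\mu\|^2 = \int_a^b (\widehat\mu(r)-\mu(r))^2\,dr$, Fubini's theorem gives
$$E[\|\widehat\mu-\mu\|^2] = \int_a^b Var(\widehat\mu(r))\,dr,$$
which reduces everything to a uniform-in-$r$ bound on the variance of a scalar sample mean. If I can show this integral is $\mathcal{O}(T^{-1})$, then Markov's inequality immediately delivers the $\mathcal{O}_P(T^{-1/2})$ rate in norm.

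For fixed $r \in [a,b]$, I would expand
$$Var(\widehat\mu(r)) = \frac{1}{T^2}\sum_{s,t=1}^T Cov(V_s(r), V_t(r)).$$
Because $\{V_t\}$ is $\alpha$-mixing of size $-\nu/(\nu-2)$ and point evaluation is measurable ($\sigma(V_t(r))\subseteq \sigma(V_t)$), the scalar process $\{V_t(r)\}_t$ inherits the same mixing coefficients. Davydov's covariance inequality then yields
$$|Cov(V_s(r), V_t(r))| \leq 8\,\alpha(|s-t|)^{1-2/\nu}(E|V_s(r)|^\nu)^{1/\nu}(E|V_t(r)|^\nu)^{1/\nu}.$$
The chosen mixing size is precisely what guarantees $\sum_{k=0}^\infty \alpha(k)^{1-2/\nu} < \infty$, so rearranging the double sum by the lag $k = |s-t|$ gives
$$Var(\widehat\mu(r)) \leq \frac{C}{T}\sup_t (E|V_t(r)|^\nu)^{2/\nu},$$
with $C$ independent of $r$ and $T$.

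Integrating this pointwise bound over $[a,b]$ and invoking the moment assumption from Assumption \ref{as:estimation}(a) produces $E[\|\widehat\mu-\mu\|^2] = \mathcal{O}(T^{-1})$, after which Markov's inequality delivers the conclusion. The same Fubini-plus-Davydov strategy then extends directly to $\overline Y - E[Y_t]$ and to the cross-covariance estimators in Theorem \ref{theo:primitves}(b), once the data are suitably centered.

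The main technical obstacle is making sure the pointwise variance bound produces a finite integral in $r$, i.e. that $r \mapsto \sup_t (E|V_t(r)|^\nu)^{2/\nu}$ is in $L^1([a,b])$. Under the natural functional reading of the stated moment condition, where $\sup_t E\|V_t\|^\nu < \infty$, this follows from Jensen's inequality applied to the measure $dr$ on $[a,b]$, which upgrades the pointwise moment hypothesis to an integrable one. Everything else is a routine application of standard scalar $\alpha$-mixing variance bounds, so once this integrability issue is handled the argument is mechanical.
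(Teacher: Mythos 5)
Your proposal follows essentially the same route as the paper's proof: reduce $E[\|\widehat\mu-\mu\|^2]$ to the integrated variance of the scalar sample means via Fubini, control the autocovariances of the pointwise-evaluated (hence equally $\alpha$-mixing) process with a Davydov-type covariance inequality whose summability is exactly what the mixing size $-\nu/(\nu-2)$ guarantees, and finish with Markov's inequality. The only difference is cosmetic (the constant in the covariance inequality and your remark on reading the moment condition so that the bound integrates over $r$, which the paper handles by treating the $\nu$-th moments as uniformly bounded), so the argument is correct as it stands.
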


\begin{proof}
	Note that $V_t$ is $\alpha$-mixing of size $-\nu/(\nu-2)$ if the $\alpha$-mixing sequence $\alpha_h$ of $V_t$ satisfies $\sum_{h=1}^\infty \alpha_h^{(\nu-2)/\nu} < \infty$ (see, e.g., \cite{Davidson.2021}, Section 15.1).
	Functions of $\alpha$-mixing processes are $\alpha$-mixing of the same size, which implies that $V_t(r)$ is $\alpha$-mixing for any $r \in [a,b]$.
	The mixing inequality for $\alpha$-mixing processes (see, e.g., Corollary 15.3 in \cite{Davidson.2021}) implies
	\begin{equation}\label{eq:mix}
		|Cov(V_t(r), V_{t-h}(r))| \leq 6 \alpha_h^{(p-2)/p} \big(E[(V_t(r))^p]E[(V_{t-h}(r))^p]\big)^{1/p}
	\end{equation}
	for any $p \geq 2$.
	We can decompose
	\begin{align*}
		(\widehat \mu(r) - \mu(r))^2
		&= \bigg( \frac{1}{T} \sum_{t=1}^T V_t(r) - \mu(r) \bigg)^2 \\
		&= \frac{1}{T^2} \sum_{t=1}^T (V_t(r) - \mu(r))^2 + \frac{2}{T^2} \sum_{t=2}^T \sum_{h=1}^{t-1} (V_t(r) - \mu(r))(V_{t-h}(r) - \mu(r))
	\end{align*}
	and obtain
	\begin{align*}
		E[\|\widehat \mu - \mu\|^2] = \frac{1}{T^2} \sum_{t=1}^T \int_a^b  Var[V_t(r)] \dd r + \frac{2}{T^2} \sum_{t=2}^T \sum_{h=1}^{t-1} \int_a^b Cov(V_t(r), V_{t-h}(r)) \dd r.
	\end{align*}
	The first term is $\mathcal O(T^{-1})$ since $\nu \geq 2$, and, with the mixing inequality with $p = \nu$, the second term satisfies
	\begin{align*}
		&\bigg|\frac{2}{T^2} \sum_{t=2}^T \sum_{h=1}^{t-1} \int_a^b Cov(V_t(r), V_{t-h}(r)) \dd r \bigg| \\
		&\leq \frac{12}{T^2} \sum_{t=2}^T \sum_{h=1}^{t-1} \alpha_h^{(\nu-2)/\nu} \int_a^b \big(E[(V_t(r))^\nu]E[(V_{t-h}(r))^\nu]\big)^{1/\nu} \dd r,
	\end{align*}
	which is $\mathcal O(T^{-1})$ as well since $\sum_{h=1}^\infty \alpha_h^{(\nu-2)/\nu} < \infty$ and the fact that the $\nu$-th moments are bounded. The assertion follows by Markov's inequality.
\end{proof}

\begin{lemma}\label{lem:cov}
Let $V_t$ and $P_t$ be $L^2([a,b])$-valued random processes with $\mu_V(r) = E[V_t(r)]$, $\mu_P(r) = E[P_t(r)]$, and $\sigma(r,s) = Cov(V_t(r), P_t(s))$.  Furthermore, $\sup_t E[(V_t(r))^\nu] < \infty$ and $\sup_t E[(P_t(r))^\nu] < \infty$ for some $\nu \geq 4$ such that the joint process $(V_t, P_t)$ is $\alpha$-mixing of size $-\nu/(\nu-2)$. Let $\widehat \mu_V(r) = \frac{1}{T} \sum_{t=1}^T V_t(r)$, $\widehat \mu_P(r) = \frac{1}{T} \sum_{t=1}^T P_t(r)$,
and $\widehat \sigma(r,s) = \frac{1}{T} \sum_{t=1}^T (V_t(r) - \widehat \mu_V(r)) (P_t(s) - \widehat \mu_P(s))$. Let $S$ and $\widehat S$ be the integral operators with kernel functions $\sigma(r,s)$ and $\widehat \sigma(r,s)$, respectively. Then, 
$$\|\widehat S - S\|_\mathcal S =  \mathcal O_P(T^{-1/2}).$$
\end{lemma}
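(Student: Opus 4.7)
The plan is to reduce the Hilbert-Schmidt operator-norm bound to an $L^2$-norm bound on the kernel difference via the identity $\|\widehat S - S\|_{\mathcal S}^2 = \int_a^b \int_a^b (\widehat \sigma(r,s) - \sigma(r,s))^2 \, dr \, ds$, and then mimic the strategy of Lemma \ref{lem:mu}, but now applied to the bivariate centered product process $\tilde V_t(r)\tilde P_t(s)$ living in $L^2([a,b]^2)$.

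First, I would rewrite the sample kernel in a form that algebraically separates the mean-estimation error from the cross-moment noise. Setting $\tilde V_t = V_t - \mu_V$ and $\tilde P_t = P_t - \mu_P$, a direct expansion yields
$$\widehat \sigma(r,s) - \sigma(r,s) = \bigg(\frac{1}{T}\sum_{t=1}^T \tilde V_t(r)\tilde P_t(s) - \sigma(r,s)\bigg) - (\widehat \mu_V(r) - \mu_V(r))(\widehat \mu_P(s) - \mu_P(s)).$$
Call the first bracket $A(r,s)$ and the product $B(r,s)$. The triangle inequality in $L^2([a,b]^2)$ then gives $\|\widehat \sigma - \sigma\|_{L^2} \leq \|A\|_{L^2} + \|B\|_{L^2}$. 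The piece $\|B\|_{L^2}$ factors as $\|\widehat \mu_V - \mu_V\| \cdot \|\widehat \mu_P - \mu_P\|$, which by two applications of Lemma \ref{lem:mu} is $\mathcal O_P(T^{-1})$ and therefore negligible.

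Second, for the leading stochastic piece, let $W_t(r,s) := \tilde V_t(r)\tilde P_t(s) - \sigma(r,s)$, a mean-zero process in $L^2([a,b]^2)$, and bound $E[\|A\|_{L^2}^2]$ directly:
$$E[\|A\|_{L^2}^2] = \frac{1}{T^2}\sum_{t=1}^T \int\!\!\int Var(W_t(r,s))\, dr\, ds + \frac{2}{T^2}\sum_{t=2}^T \sum_{h=1}^{t-1} \int\!\!\int Cov(W_t(r,s), W_{t-h}(r,s))\, dr\, ds.$$
The diagonal term is $\mathcal O(T^{-1})$ because Cauchy–Schwarz combined with the assumption $\nu \geq 4$ gives $E[W_t(r,s)^2] \leq 2\sqrt{E[V_t(r)^4]E[P_t(s)^4]} + 2\sigma(r,s)^2$, uniformly bounded in $t,r,s$. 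For the off-diagonal sum, $W_t$ inherits the $\alpha$-mixing size of $(V_t,P_t)$ as a measurable transformation, so I would apply the mixing covariance inequality (Corollary 15.3 of Davidson, as in Lemma \ref{lem:mu}) with moment exponent $p = \nu/2 \geq 2$, yielding
$$|Cov(W_t(r,s), W_{t-h}(r,s))| \leq 6 \alpha_h^{(\nu-4)/\nu}\, (E|W_t(r,s)|^{\nu/2})^{4/\nu}.$$
Integrating over $(r,s)$ and summing in $h$ produces an $\mathcal O(T^{-1})$ bound on the double sum, so $E[\|A\|_{L^2}^2] = \mathcal O(T^{-1})$ and Markov's inequality delivers $\|A\|_{L^2} = \mathcal O_P(T^{-1/2})$, completing the proof.

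The main obstacle is the mixing bookkeeping in the off-diagonal step. Passing from the $\nu$-th moments of $V_t, P_t$ to the $\nu/2$-th moments of the product $W_t$ forces the exponent on the mixing coefficient to deteriorate from $(\nu-2)/\nu$ (used in Lemma \ref{lem:mu}) to the weaker $(\nu-4)/\nu$, so I must rely on the stated size $-\nu/(\nu-2)$ being interpreted in the standard strict Davidson sense (decay strictly faster than $h^{-\nu/(\nu-2)}$) to preserve summability $\sum_h \alpha_h^{(\nu-4)/\nu} < \infty$. Once this is accepted, the remainder is routine: the argument is structurally identical to Lemma \ref{lem:mu} but carried out on the product domain $[a,b]^2$, and the Hilbert–Schmidt identity converts the kernel bound into the claimed operator-norm rate.
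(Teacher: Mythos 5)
Your decomposition and overall structure coincide with the paper's proof: the paper also writes $\widehat \sigma(r,s) = \widetilde \sigma(r,s) - (\mu_V(r) - \widehat \mu_V(r))(\mu_P(s) - \widehat \mu_P(s))$ with $\widetilde \sigma$ the empirical cross-moment centered at the true means, disposes of the mean-error product via Lemma \ref{lem:mu} as $\mathcal O_P(T^{-1})$, and bounds $E[\|\widetilde S - S\|_{\mathcal S}^2]$ by the same variance-plus-covariance expansion of the centered product process followed by the mixing covariance inequality and Markov. The one place where you deviate, however, contains a genuine gap. You apply the covariance inequality to $W_t(r,s)$ with moment exponent $p=\nu/2$, which degrades the mixing exponent to $(\nu-4)/\nu$, and you then claim that the assumed size $-\nu/(\nu-2)$ still yields $\sum_h \alpha_h^{(\nu-4)/\nu} < \infty$. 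This is false. Size $-\nu/(\nu-2)$ gives $\alpha_h = \mathcal O(h^{-\varphi})$ for some $\varphi > \nu/(\nu-2)$, hence $\alpha_h^{(\nu-4)/\nu} = \mathcal O\big(h^{-\varphi(\nu-4)/\nu}\big)$ with exponent only guaranteed to exceed $(\nu-4)/(\nu-2) < 1$, so summability does not follow; it would require the strictly stronger size $-\nu/(\nu-4)$. Worse, the statement allows $\nu = 4$, in which case your exponent is exactly zero (equivalently, $p=q=2$ violates the condition $1/p+1/q<1$ needed for the inequality), the bound on $|Cov(W_t, W_{t-h})|$ does not decay in $h$ at all, the double sum is $\mathcal O(T)$ rather than $\mathcal O(1)$, and the $T^{-1/2}$ rate is lost.

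The paper avoids this by applying the mixing inequality \eqref{eq:mix} with $p = \nu$ directly to the centered products $V_t^*(r)P_t^*(s)$, which preserves the exponent $(\nu-2)/\nu$ and hence exactly the same summability argument as in Lemma \ref{lem:mu}; the price is that the $\nu$-th moments of the products must be finite, which the paper handles via Cauchy--Schwarz together with the bounded-moment assumption. To repair your argument you should follow that route (keep $p=\nu$ on the product terms) rather than trading moments for a weaker mixing exponent; as written, your off-diagonal step does not go through under the stated assumptions.
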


\begin{proof}
	Let $\widetilde S$ be the integral operator with kernel function $\widetilde \sigma(r,s) := \frac{1}{T} \sum_{t=1}^T (V_t(r) - \mu_V(r))(P_t(s) - \mu_P(s))$. 
	We can decompose
	$$
	\widehat \sigma(r,s) = \widetilde \sigma(r,s) - (\mu_V(r) - \widehat \mu_V(r))(\mu_P(s) - \widehat \mu_V(s)),
	$$
	which implies that $\|\widehat S - \widetilde S\|_\mathcal S = \|\mu_V - \widehat \mu_V \| \|\mu_P - \widehat \mu_P \| = \mathcal O_P(T^{-1})$ by the Lemma \ref{lem:mu}.
	By the triangle inequality, it remains to show that $\|\widetilde S - S\|_\mathcal S = \mathcal O_P(T^{-1/2})$.
	Let us define the centered random variables $V_t^*(r) := V_t(r) - \mu_V(r)$ and $P_t^*(r) := P_t(r) - \mu_P(r)$. We have
	\begin{align*}
		&(\widetilde \sigma(r,s) - \sigma(r,s))^2 = \bigg( \frac{1}{T} \sum_{t=1}^T V_t^*(r)P_t^*(s) - E[V_t^*(r) P_t^*(s)] \bigg)^2 \\
		&= \frac{1}{T^2} \sum_{t=1}^T (V_t^*(r) P_t^*(s) - E[V_t^*(r) P_t^*(s)])^2 \\
		& \quad
		+ \frac{2}{T^2} \sum_{t=2}^T \sum_{h=1}^{t-1} (V_t^*(r) P_t^*(s) - E[V_t^*(r) P_t^*(s)])(V_{t-h}^*(r) P_{t-h}^*(s) - E[V_{t-h}^*(r) P_{t-h}^*(s)]),
	\end{align*}
	and
	\begin{align*}
		E[\|\widetilde S - S \|_{\mathcal S}^2] &= \frac{1}{T^2} \sum_{t=1}^T \int_a^b \int_a^b Var[V_t^*(r) P_t^*(s)] \dd r \dd s \\
		& \quad 
		+ \frac{2}{T^2} \sum_{t=2}^T \sum_{h=1}^{t-1} \int_a^b \int_a^b Cov(V_t^*(r) P_t^*(s), V_{t-h}^*(r) P_{t-h}^*(s)) \dd r \dd s.
	\end{align*}
	The first term is $\mathcal O(T^{-1})$ since $Var[V_t^*(r) P_t^*(s)] < \infty$. This follows from the Cauchy Schwarz inequality and the fact that we have bounded fourth moments. For the second term, we apply the mixing inequality from equation (\ref{eq:mix}) with $p = \nu$, and obtain
	\begin{align*}
		&\bigg|\frac{2}{T^2} \sum_{t=2}^T \sum_{h=1}^{t-1} \int_a^b \int_a^b Cov(V_t^*(r) P_t^*(s), V_{t-h}^*(r) P_{t-h}^*(s)) \dd r \dd s \bigg| \\
		&\leq \frac{2}{T^2} \sum_{t=2}^T \sum_{h=1}^{t-1} \alpha_h^{(\nu-2)/\nu} \int_a^b \int_a^b \big(E[(V_t^*(r) P_t^*(s))^{\nu}]E[(V_{t-h}^*(r) P_{t-h}^*(s))^{\nu}]\big)^{1/\nu} \dd r \dd s.
	\end{align*}
	Notice that $\int_a^b \int_a^b \big(E[(V_t^*(r) P_t^*(s))^{\nu}]E[(V_{t-h}^*(r) P_{t-h}^*(s))^{\nu}]\big)^{1/\nu} \dd r \dd s < \infty$ by the Cauchy Schwarz inequality and the fact that $\nu$-th moments are bounded, and $\sum_{h=1}^{\infty} \alpha_h^{(\nu-2)/\nu} < \infty$ by the $\alpha$-mixing property, which implies that the second term of $E[\|\widetilde S - S \|_{\mathcal S}^2]$ is $\mathcal O(T^{-1})$ as well.
	The assertion follows by Markov's inequality.
\end{proof}

\begin{lemma}\label{lem:cross-cov}
	Let $Y_t$ and $X_{jt}$ be defined as in equation (\ref{eq:basemodel}) and (\ref{eq:factor.structure}). $C_j$ is the auxiliary cross covariance integral operator with kernel function $c_j\left(r,s\right)  \coloneqq \mathit{Cov}(X_{jt}( r),Y_t( s) ).$ The respective sample counterparts $\widehat{C}_j$ and $\widehat{c}_j\left(r,s\right)$  are defined as in equation \eqref{eq:sample.crosscov}. By Assumption \ref{as:estimation}, for any $j=1,\ldots,J$,
	\begin{equation*}
		\| \hat{C}_j - C_j \|_{\mathcal{S}} = \mathcal{O}_P(T^{-1/2})
	\end{equation*}
\end{lemma}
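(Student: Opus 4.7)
The plan is to invoke Lemma \ref{lem:cov} directly with the substitutions $V_t \leftarrow X_{jt}$ and $P_t \leftarrow Y_t$. The kernel functions in Lemma \ref{lem:cov} (namely $\sigma(r,s)=Cov(V_t(r),P_t(s))$ and its sample analogue $\widehat\sigma(r,s)$) match the definitions of $c_j(r,s)$ in the statement and $\widehat c_j(r,s)$ from \eqref{eq:sample.crosscov} exactly, so the conclusion $\|\widehat{C}_j-C_j\|_{\mathcal S}=\mathcal{O}_P(T^{-1/2})$ follows immediately from Lemma \ref{lem:cov} once its hypotheses are verified for the pair $(X_{jt},Y_t)$.

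I therefore need to check three things. First, $X_{jt}$ and $Y_t$ take values in $L^2([a,b])$: for $X_{jt}$ this is built into the setup, while for $Y_t$ it follows from equation \eqref{eq:basemodel} combined with the square-integrability of $\alpha$, $\beta_j$, $X_{jt}$, and $u_t$. Second, both processes must possess uniformly bounded $\nu$-th moments in the pointwise sense, for some $\nu\geq 4$. For $X_{jt}$ this is Assumption \ref{as:estimation}(a) directly. For $Y_t$, I would apply the $c_r$/Minkowski inequality to the right-hand side of \eqref{eq:basemodel}, treating the finitely many additive components separately. The scalar part $w_t'\alpha(r)$ inherits bounded $\nu$-th moments from $w_t$ since $\alpha$ is continuous, hence bounded, on the compact domain by Assumption \ref{as:estimation}(b). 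For the functional integral term, I would first apply the Cauchy--Schwarz inequality in $s$ to obtain $|\int_a^b\beta_j(r,s)X_{jt}(s)\dd s|\le \|\beta_j(r,\cdot)\|\,\|X_{jt}\|$, and then convert the uniform pointwise moment bound $\sup_t E[(X_{jt}(s))^\nu]<\infty$ into a bound on $E[\|X_{jt}\|^\nu]$ via Jensen's inequality on the normalized measure $\dd s/(b-a)$ together with Fubini. The error $u_t(r)$ part is immediate from Assumption \ref{as:estimation}(a).

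Third, the joint process $(X_{jt},Y_t)$ must be $\alpha$-mixing of size $-\nu/(\nu-2)$. Since $Y_t$ is a measurable functional of $(u_t,w_t,X_{1t},\ldots,X_{Jt})$ through \eqref{eq:basemodel}, and measurable functions of an $\alpha$-mixing process are $\alpha$-mixing of the same size, the mixing property for $(X_{jt},Y_t)$ is inherited from the mixing assumption on $(u_t,w_t,X_{1t},\ldots,X_{Jt})$ stated in Assumption \ref{as:estimation}(a).

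The only nontrivial step is the moment propagation in step two, specifically the bound on $E[\|X_{jt}\|^\nu]$, but this is routine given boundedness of the domain and the uniform pointwise moment bound. Once these hypotheses are in place, the conclusion reduces to a single invocation of Lemma \ref{lem:cov}.
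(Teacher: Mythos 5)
Your proposal is correct and follows essentially the same route as the paper: the paper's proof likewise notes that $Y_t$, being a measurable function of the $\alpha$-mixing process in Assumption \ref{as:estimation}(a), inherits the mixing property, and then concludes by a direct application of Lemma \ref{lem:cov} with $V_t = X_{jt}$ and $P_t = Y_t$. The only difference is that you spell out the moment propagation for $Y_t$ (via Minkowski, Cauchy--Schwarz, and Jensen), which the paper leaves implicit; that verification is sound and only adds detail.
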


\begin{proof}
	Because functions of $\alpha$-mixing processes are $\alpha$-mixing of the same size, $Y_t(r)$ is $\alpha$-mixing for any $r \in [a,b]$ by Assumption \ref{as:estimation}. Hence, the assertion follows by Lemma \ref{lem:cov}.
\end{proof}

\begin{lemma}\label{lem:D}
	 Let $D_j$ and $\widehat{D}_j$ be the respective integral operators of the kernels $d_j$ and $\widehat{d}_j$ as defined in equations (\ref{eq:d}) and (\ref{eq:sample.d}). By Assumptions \ref{as:exo}--\ref{as:estimation}, for any $j=1,\ldots,J$,
	\begin{equation*}
		\| \widehat{D}_j - D_j \|_{\mathcal{S}} = \mathcal{O}_P(T^{-1/2})
	\end{equation*}
\end{lemma}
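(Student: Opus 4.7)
The plan is to reduce the lemma to Lemma \ref{lem:cross-cov} by exploiting the algebraic identity $D_j = C_j C_j^*$ and $\widehat D_j = \widehat C_j \widehat C_j^*$, which follows directly from the definitions of $d_j(r,s)$ and $\widehat d_j(r,s)$ as the kernels of these composite operators. The key step is the telescoping decomposition
\begin{equation*}
\widehat D_j - D_j = \widehat C_j \widehat C_j^* - C_j C_j^* = (\widehat C_j - C_j) C_j^* + \widehat C_j (\widehat C_j^* - C_j^*).
\end{equation*}

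Then I would apply the triangle inequality in Hilbert-Schmidt norm together with the standard submultiplicative bound $\|A B\|_{\mathcal S} \leq \|A\|_{\mathrm{op}} \|B\|_{\mathcal S}$ (and its counterpart with the factors reversed), obtaining
\begin{equation*}
\| \widehat D_j - D_j \|_{\mathcal S} \leq \| \widehat C_j - C_j \|_{\mathcal S} \, \|C_j^*\|_{\mathrm{op}} + \| \widehat C_j \|_{\mathrm{op}} \, \| \widehat C_j^* - C_j^* \|_{\mathcal S}.
\end{equation*}
The adjoint preserves the Hilbert-Schmidt norm, so $\| \widehat C_j^* - C_j^* \|_{\mathcal S} = \| \widehat C_j - C_j \|_{\mathcal S}$, and the operator norm is dominated by the Hilbert-Schmidt norm, giving $\|C_j^*\|_{\mathrm{op}} \leq \|C_j\|_{\mathcal S} = \mathcal O(1)$ under our moment assumptions. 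Finally, $\| \widehat C_j \|_{\mathrm{op}} \leq \|C_j\|_{\mathrm{op}} + \| \widehat C_j - C_j \|_{\mathrm{op}}$, and Lemma \ref{lem:cross-cov} gives $\| \widehat C_j - C_j \|_{\mathrm{op}} \leq \| \widehat C_j - C_j \|_{\mathcal S} = \mathcal O_P(T^{-1/2})$, so that $\| \widehat C_j \|_{\mathrm{op}} = \mathcal O_P(1)$.

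Plugging these bounds back and applying Lemma \ref{lem:cross-cov} once more to both $\| \widehat C_j - C_j \|_{\mathcal S}$ terms yields the claimed rate. The only non-routine point is justifying the submultiplicative inequality $\|AB\|_{\mathcal S}\le \|A\|_{\mathrm{op}}\|B\|_{\mathcal S}$ in our setting; this is a standard property of Hilbert-Schmidt operators on $L^2([a,b])$ and can be cited rather than reproved. I do not anticipate any real obstacle, since the result is essentially a continuous-mapping/product rule argument built on top of the already-established $\sqrt{T}$-convergence of $\widehat C_j$.
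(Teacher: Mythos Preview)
Your argument is correct and follows essentially the same route as the paper: both proofs telescope the product $\widehat C_j\widehat C_j^* - C_jC_j^*$ and bound each piece using Lemma~\ref{lem:cross-cov}. The only cosmetic difference is that the paper works directly at the kernel level with Cauchy--Schwarz on the integrals (splitting into three terms, one of which is the quadratic remainder $(\widehat C_j-C_j)(\widehat C_j^*-C_j^*)$ shown to be $\mathcal O_P(T^{-1})$), whereas you phrase the same bound via the operator inequality $\|AB\|_{\mathcal S}\le\|A\|_{\mathrm{op}}\|B\|_{\mathcal S}$ and absorb the quadratic term into $\|\widehat C_j\|_{\mathrm{op}}=\mathcal O_P(1)$.
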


\begin{proof}
	Consider the decomposition
	\begin{equation*}
		\begin{split}
			&\int_{a}^{b} \widehat{c}_j\left(r,q\right) \widehat{c}_j\left(q,s\right) - c_j\left(r,q\right) c_j\left(q,s\right)\,dq \\
			& = \int_{a}^{b}\left(\widehat{c}_j\left(r,q\right)- c_j\left(r,q\right)\right) \widehat{c}_j\left(q,s\right) + \left(\widehat{c}_j\left(s,q\right)- c_j\left(s,q\right)\right) c_j\left(q,r\right)\,dq \\
			& = a_3\left(r,s\right) + a_4\left(r,s\right) + a_5\left(r,s\right),
		\end{split}	
	\end{equation*}
	where
	\begin{equation*}
		\begin{split}
			& a_3\left(r,s\right) = \int_{a}^{b}\left(\widehat{c}_j\left(r,q\right)- c_j\left(r,q\right)\right)\left(\widehat{c}_j\left(s,q\right)- c_j\left(s,q\right)\right)\,dq,\\
			& a_4\left(r,s\right) = \int_{a}^{b}\left(\widehat{c}_j\left(r,q\right)- c_j\left(r,q\right)\right) c_j\left(q,s\right)\,dq,\\
			& a_5\left(r,s\right) = \int_{a}^{b}\left(\widehat{c}_j\left(s,q\right)- c_j\left(s,q\right)\right) c_j\left(q,r\right)\,dq.
		\end{split}	
	\end{equation*}
	The triangle inequality states that $\| \widehat{D} - D \|_{\mathcal{S}} \leq \left\| A_3 \right\|_{\mathcal{S}} + \left\| A_4 \right\|_{\mathcal{S}} + \left\| A_5 \right\|_{\mathcal{S}}.$ Starting with the first term,
	\begin{equation*}
		\begin{split}
			\left\| A_3 \right\|_{\mathcal{S}}^2 & = \int_{a}^{b} \int_{a}^{b} a_3\left(r,s\right)^2 \,dr \,ds\\
			& = \int_{a}^{b} \int_{a}^{b}\left(\int_{a}^{b}\left(\widehat{c}_j\left(r,q\right)- c_j\left(r,q\right)\right)\left(\widehat{c}_j\left(s,q\right)- c_j\left(s,q\right)\right)\,dq\right)^2 \,dr \,ds\\
			& \leq \int_{a}^{b} \int_{a}^{b}\left(\int_{a}^{b}\left(\widehat{c}_j\left(r,q\right)- c_j\left(r,q\right)\right)^2 \,dq \int_{a}^{b} \left(\widehat{c}_j\left(s,q\right)- c_j\left(s,q\right)\right)^2\,dq\right) \,dr \,ds\\
			& = \int_{a}^{b} \int_{a}^{b}\left(\widehat{c}_j\left(r,q\right)- c_j\left(r,q\right)\right)^2 \,dq \,dr \int_{a}^{b} \int_{a}^{b}\left(\widehat{c}_j\left(s,q\right)- c_j\left(s,q\right)\right)^2 \,dq \,ds\\
			& = \| \widehat{C}_j - C_j \|_{\mathcal{S}}^2 \| \widehat{C}_j - C_j \|_{\mathcal{S}}^2 = \mathcal{O}_P\left(T^{-2}\right),
		\end{split}	
	\end{equation*}
	where the last equality follows from Lemma \ref{lem:cross-cov}. Moving on to the second term,
	\begin{equation*}
		\begin{split}
			\left\| A_4 \right\|_{\mathcal{S}}^2 & = \int_{a}^{b} \int_{a}^{b} a_4\left(r,s\right)^2 \,dr \,ds\\
			& = \int_{a}^{b} \int_{a}^{b}\left(\int_{a}^{b}\left(\widehat{c}_j\left(r,q\right)- c_j\left(r,q\right)\right) c_j\left(q,s\right)\,dq\right)^2 \,dr \,ds\\
			& \leq \int_{a}^{b} \int_{a}^{b}\left(\int_{a}^{b}\left(\widehat{c}_j\left(r,q\right)- c_j\left(r,q\right)\right)^2 \,dq \int_{a}^{b} c_j\left(q,s\right)^2\,dq\right) \,dr \,ds\\
			& = \int_{a}^{b} \int_{a}^{b}\left(\widehat{c}_j\left(r,q\right)- c_j\left(r,q\right)\right)^2 \,dq \,dr \int_{a}^{b} \int_{a}^{b} c_j\left(q,s\right)^2 \,dq \,ds\\
			& = \| \widehat{C}_j - C_j \|_{\mathcal{S}}^2 \left\| C_j \right\|_{\mathcal{S}}^2 = \mathcal{O}_P\left(T^{-1}\right).
		\end{split}	
	\end{equation*}
	Similar to the term $\left\| A_1 \right\|_{\mathcal{S}}^2$, the auxiliary covariance function is continuous and defined on a closed bounded interval. Therefore, the boundedness theorem states that there exists a positive real constant such that $\sup_{s,q \in \left[ 0,1\right]} \left| c_j\left(s,q\right)\right| = K < \infty$. In combination with Lemma \ref{lem:cross-cov}, the last equality holds.\\
	Because $\left\| A_4 \right\|_{\mathcal{S}}^2 = \left\| A_5 \right\|_{\mathcal{S}}^2$, the proof follows analogously and we conclude that $\| \widehat{D} - D \|_{\mathcal{S}}^2 = \mathcal{O}_P\left(T^{-1}\right).$
\end{proof}

\begin{lemma}\label{lem:eigenv}
	The eigenvalues $\lambda_{lj}$ of the operator $D_j$ can be estimated by their empirical counterparts $\widehat{\lambda}_{lj}$ of $\widehat{D}_j$. For each $l \geq 1$ and any $j=1,\ldots,J$,
	\begin{equation*}
		\left|\widehat{\lambda}_{lj} - \lambda_{lj} \right| = \mathcal{O}_P\left(T^{-1/2}\right).
	\end{equation*}
\end{lemma}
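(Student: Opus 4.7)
The plan is to derive the eigenvalue bound from the operator-norm convergence of $\widehat{D}_j$ to $D_j$ established in Lemma \ref{lem:D}, using a standard perturbation inequality for self-adjoint compact operators on a Hilbert space.

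First, I would note that both $D_j = C_j C_j^*$ and $\widehat{D}_j = \widehat{C}_j \widehat{C}_j^*$ are self-adjoint, positive, compact (in fact Hilbert--Schmidt) integral operators on $H$, so their eigenvalues are real, nonnegative, and can be arranged in descending order. For such operators, the classical Weyl / Lidskii--Mirsky--Wielandt inequality (see, e.g., Bosq, 2000, Lemma 4.2, or Horvath and Kokoszka, 2012, Lemma 2.3) yields
\begin{equation*}
	\sup_{l \geq 1} \left| \widehat{\lambda}_{lj} - \lambda_{lj} \right| \leq \| \widehat{D}_j - D_j \|_{op} \leq \| \widehat{D}_j - D_j \|_{\mathcal{S}},
\end{equation*}
where the second inequality uses the standard fact that the operator norm is dominated by the Hilbert--Schmidt norm. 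The convention $\lambda_{lj} := 0$ for $l > K_j$, stated in Theorem \ref{theo:primitves}(c), is consistent with viewing $D_j$ as a Hilbert--Schmidt operator whose eigenvalue sequence terminates (or is padded with zeros) after the $K_j$-th entry.

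Combining this perturbation bound with the rate already obtained in Lemma \ref{lem:D}, namely $\| \widehat{D}_j - D_j \|_{\mathcal{S}} = \mathcal{O}_P(T^{-1/2})$, immediately delivers
\begin{equation*}
	\left| \widehat{\lambda}_{lj} - \lambda_{lj} \right| \leq \| \widehat{D}_j - D_j \|_{\mathcal{S}} = \mathcal{O}_P(T^{-1/2})
\end{equation*}
uniformly in $l \geq 1$, which is the claim.

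There is no real obstacle: the entire proof is a one-line application of the Weyl--type inequality, and all heavy lifting was done in Lemma \ref{lem:D}. The only subtle point worth a remark is consistency of the ordering: the inequality above pairs eigenvalues in descending order, so in particular for $l > K_j$ it reads $\widehat{\lambda}_{lj} = \mathcal{O}_P(T^{-1/2})$, which is exactly the statement that eigenvalues beyond the true rank collapse to zero at the parametric rate. This is the property that the eigenvalue-difference test in Section \ref{subsec:factor.number} will subsequently exploit.
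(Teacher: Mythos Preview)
Your proof is correct and follows essentially the same route as the paper: invoke the Weyl-type perturbation inequality for self-adjoint Hilbert--Schmidt operators to bound $|\widehat\lambda_{lj}-\lambda_{lj}|$ by $\|\widehat D_j-D_j\|_{\mathcal S}$, then apply Lemma~\ref{lem:D}. The only minor slip is the citation---the relevant result in \textcite{Horvath.2012} is Lemma~2.2 (Lemma~2.3 is the eigenfunction bound used in Lemma~\ref{lem:eigenf}).
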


\begin{proof}
	Lemma 2.2 in \textcite{Horvath.2012} implies that
	\begin{equation*}
		\left|\widehat{\lambda}_{lj} - \lambda_{lj} \right| \leq \left\| \widehat{D}_j - D_j \right\|_{\mathcal{S}},
	\end{equation*}
	for each $l \geq 1$. The assertion then results as a consequence of Lemma \ref{lem:D}.
\end{proof}

\begin{lemma}\label{lem:eigenf}
	The eigenfunctions $\psi_{lj}\left( s\right)$ of the operator $D_j$ can be estimated by their empirical counterparts $\widehat{\psi}_{lj}\left( s\right) $ of $\widehat{D}_j$. For any $j=1,\ldots,J$,
	\begin{equation*}
		\max_{1 \leq l \leq K_j} \left\| s_{lj} \widehat{\psi}_{lj} - \psi_{lj} \right\|  = \mathcal{O}_P\left(T^{-1/2}\right).
	\end{equation*}
\end{lemma}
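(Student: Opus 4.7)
My plan is to derive the eigenfunction rate from the Hilbert--Schmidt rate for $\widehat D_j - D_j$ already supplied by Lemma \ref{lem:D}, by invoking a standard perturbation bound for eigenfunctions of compact self-adjoint operators. Specifically, Lemma 2.3 in \textcite{Horvath.2012} (a functional-data version of the Davis--Kahan $\sin\Theta$ inequality) states that for each simple eigenvalue $\lambda_{lj}$ of $D_j$ there is a constant $c_{lj}$ depending only on the eigengap such that
\begin{equation*}
\| s_{lj}\widehat\psi_{lj} - \psi_{lj} \| \;\leq\; c_{lj}\,\|\widehat D_j - D_j\|_{\mathcal S},
\end{equation*}
with $s_{lj}=\mathrm{sign}(\langle\widehat\psi_{lj},\psi_{lj}\rangle)$. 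The constant is of the form $c_{lj}=2\sqrt 2 / \delta_{lj}$, where $\delta_{lj}=\min_{k\neq l}|\lambda_{lj}-\lambda_{kj}|$ denotes the spectral gap separating $\lambda_{lj}$ from the rest of the spectrum of $D_j$.

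The first step is to verify that these eigengaps are strictly positive for every $l=1,\dots,K_j$. For $1\le l,k\le K_j$ with $k\neq l$, Assumption \ref{as:identifiability2}(b) gives $\lambda_{1j}>\lambda_{2j}>\dots>\lambda_{K_j j}>0$, so $|\lambda_{lj}-\lambda_{kj}|>0$. For $k>K_j$ the population eigenvalue is $\lambda_{kj}=0$, and hence $|\lambda_{lj}-\lambda_{kj}|=\lambda_{lj}\ge\lambda_{K_j j}>0$. Consequently $\delta_{lj}>0$ for all $l\in\{1,\dots,K_j\}$, and the perturbation constant $c_{lj}$ is finite.

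The second step is to combine the perturbation inequality with Lemma \ref{lem:D}. Setting $c^*_j = \max_{1\le l\le K_j} c_{lj}$, which is a finite deterministic constant because the maximum is over the finite index set $\{1,\dots,K_j\}$, we obtain
\begin{equation*}
\max_{1\le l\le K_j}\| s_{lj}\widehat\psi_{lj} - \psi_{lj} \| \;\leq\; c^*_j\,\|\widehat D_j - D_j\|_{\mathcal S} \;=\; \mathcal O_P(T^{-1/2}),
\end{equation*}
which is the desired rate.

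I do not foresee a substantive obstacle here: the only delicate point is the gap condition at $l=K_j$, where the next population eigenvalue is exactly zero rather than strictly smaller than some other positive value, but this still produces a positive gap $\lambda_{K_j j}>0$, so the Davis--Kahan bound applies verbatim. The sign ambiguity is absorbed by the definition of $s_{lj}$, and the passage from Hilbert--Schmidt convergence of $\widehat D_j$ to $L^2$ convergence of its eigenfunctions is exactly what the cited perturbation lemma is designed for.
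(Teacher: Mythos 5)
Your proposal is correct and follows essentially the same route as the paper: both invoke Lemma 2.3 of Horv\'ath and Kokoszka (2012) to bound $\max_{1\leq l\leq K_j}\|s_{lj}\widehat\psi_{lj}-\psi_{lj}\|$ by a constant (determined by the eigengaps from Assumption \ref{as:identifiability2}(b), including the gap $\lambda_{K_j j}>0$ to the zero eigenvalues) times $\|\widehat D_j - D_j\|_{\mathcal S}$, and then conclude with the $\mathcal O_P(T^{-1/2})$ rate of Lemma \ref{lem:D}. The only cosmetic difference is that you state the perturbation bound eigenvalue-by-eigenvalue and then maximize, whereas the paper uses the uniform constant $2\sqrt 2/\alpha$ with $\alpha$ the minimal gap.
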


\begin{proof}
	Lemma 2.3 in \textcite{Horvath.2012} implies that
	\begin{equation*}
			\max_{1 \leq l \leq K_j} \left\| s_{lj} \widehat{\psi}_{lj} - \psi_{lj} \right\| \leq \frac{2\sqrt{2}}{\alpha}\left\| \widehat{D}_j - D_j \right\|_{\mathcal{S}},
	\end{equation*}
	where $\alpha = \min\left\lbrace \lambda_1-\lambda_2, \lambda_2-\lambda_3, \ldots, \lambda_{K_j-1}-\lambda_{K_j}, \lambda_{K_j}\right\rbrace.$ The assertion then results as a consequence of Lemma \ref{lem:D}.
\end{proof}

\subsection{Proof of Theorem \ref{theo:number.factors}}\label{app:Theorem2}

\begin{proof}
	Define $c \coloneqq \sqrt{T^{-1} \sum_{t=1}^T \|X_{jt} - \widehat{\mu}_j \|^2} \sqrt{T^{-1} \sum_{t=1}^T \|Y_t - \overline{Y} \|^2}$, where $ c = \mathcal{O}_P(1)$. Then for $1 \leq l \leq K_j$ and any $j=1,\ldots,J$,
	\begin{equation*}
		\plim_{T \rightarrow \infty}\gamma \ln(T)\frac{\widehat{\lambda}_{lj}}{c} = \plim_{T \rightarrow \infty}\gamma \ln(T)\left(\frac{\lambda_{lj}}{c} + \mathcal{O}_P\left(T^{-1/2}\right)\right) = \infty,
	\end{equation*}
	because $\ln(T)\mathcal{O}_P\left(T^{-1/2}\right)= o_P(1)$ and $\lambda_{lj}/c>0$ by Theorem \ref{theo:primitves}, Assumption \ref{as:identifiability2}(b) and $c > 0$. Then
	\begin{equation*}
		\plim_{T \rightarrow \infty} \arctan(\gamma \ln(T)\frac{\widehat{\lambda}_{lj}}{c}) = \frac{\pi}{2},
	\end{equation*}
	leading to
	\begin{equation*}
		\plim_{T \rightarrow \infty} G(\widehat{\lambda}_{lj}) = \frac{2}{\pi} \frac{\pi}{2} = 1 \quad \text{for} \ 1 \leq l \leq K_j.
	\end{equation*}
	For $K_j+1 \leq l \leq K_{j}^\text{max}$ and any $j=1,\ldots,J$,
	\begin{equation*}
		\plim_{T \rightarrow \infty}\gamma \ln(T)\frac{\widehat{\lambda}_{lj}}{c} = \plim_{T \rightarrow \infty}\gamma \ln(T)\left(\frac{\lambda_{lj}}{c} + \mathcal{O}_P\left(T^{-1/2}\right)\right) = 0,
	\end{equation*}
	because $\ln(T)\mathcal{O}_P\left(T^{-1/2}\right)=o_P(1)$ and $\lambda_{lj}/c=0$ by Theorem \ref{theo:primitves}, Assumption \ref{as:identifiability2}(b) and $c > 0$. Therefore,
	\begin{equation*}
		\plim_{T \rightarrow \infty} G(\widehat{\lambda}_{lj}) = 0 \quad \text{for} \ K_j+1 \leq l \leq K_{j}^\text{max}.
	\end{equation*}
	As defined in Section \ref{subsec:factor.number},
	\begin{equation*}
		g_{lj} \coloneqq
		\begin{cases}
			1, & \text{for $l=0$}\\
			G(\widehat{\lambda}_{lj}), & \text{for $l=1,\ldots,K_j^{\text{max}}$}\\
			0, & \text{for $l=K_j^{\text{max}}+1$}.
		\end{cases}
	\end{equation*}
	Clearly, $\widehat{K}_j^{ED}=\argmax_{0\leq l \leq K_j^{\text{max}}} 	\left\lbrace g_{lj} - g_{(l+1)j}\right\rbrace$ then implies
$$
	\plim_{T\to \infty} \widehat K_j^{ED} = K_j.
$$	
	Since $\widehat K_j^{ED}$ is a discrete random variable it follows that
	\begin{equation*}
		\lim_{T \rightarrow \infty} \text{Pr}\left( \widehat{K}_j^{ED}=K_j\right)=1.
	\end{equation*}
\end{proof}

\subsection{Proof of Theorem \ref{theo:ols.normality}}\label{app:Theorem3}

\begin{proof}
	A challenge for the notation of this proof is that the signs of the loadings $\psi_{lj}(s)$ and the coefficients $\beta_{jt}(r) = \int_a^b \beta_j(r,s) \psi_{lj}(s) \dd s$ are unidentified whereas the signs of their products $\beta_{lj}(r) \psi_{lj}(s)$ and of the true coefficient function $\beta_j(r,s) = \sum_{l=1}^{K_j} \beta_{lj}(r) \psi_{lj}(s)$ are identified.

Therefore, we condition our notation on the selected signs of the estimated loadings and define the sign matrices $S_j = diag(s_{1j}, \ldots, s_{K_j j})$ and $S = diag(I_N, S_1, \ldots, S_J)$ with $s_{lj} = sign(\langle \widehat \psi_{lj}, \psi_{lj} \rangle)$ for $j=1, \ldots, J$ and $l=1, \ldots, K_j$.
This matrix has the property that $S = S'$ and $SS = I_M$, where $M = N + \sum_{j=1}^J K_j$.

To represent the $j$-th coefficient surface and its estimator conveniently, let $P_j \in \mathbb{R}^{K_j \times M}$ be the selection matrix defined by
$$
	(P_j)_{k,i} = \begin{cases} 1 & \text{if} \ i = N + \sum_{l=1}^{j-1} K_l + k, \\
	0 & \text{otherwise} \end{cases}
$$
to extract the relevant entries of $B(r)$ and $\widehat B(r)$. Then, $P_j B(r) = B_j(r)$, $P_j \widehat B(r) = \widehat B_j(r)$, and $P_j \widehat Q^{-1} = [\widehat Q^{-1}]_j$.
The selection matrix has the property $SP_j' = P_j'S_j$.
Then, the coefficient functions are
\begin{align*}
	\beta_j(r,s) &= (B_j(r))'\Psi_j(s) = (S B(r))'P_j'S_j \Psi_j(s), \\
	\widehat{\beta}_{j}(r,s) &= (\widehat B_j(r))'\widehat \Psi_j(s) = (\widehat B(r))'P_j'\widehat \Psi_j(s),
\end{align*}
with
$$
	\widehat{\beta}_{j}(r,s) - \beta_j(r,s) = (\widehat B(r) - S B(r))' P_j' \widehat \Psi_j(s) + (S B(r))' P_j' (\widehat \Psi_j(s) - S_j \Psi_j(s)).
$$
The least squares estimator is defined as
$$
	\widehat B(r) = \widehat Q^{-1} \bigg( \frac{1}{T} \sum_{t=1}^T \widehat z_t Y_t(r) \bigg), \quad \widehat Q = \frac{1}{T} \sum_{t=1}^T \widehat z_t \widehat z_t',
$$
which gives
\begin{align} \label{eq:thm3Atilde}
	\sqrt T (\widehat B(r) - S B(r) ) = \widehat Q^{-1} \widetilde A(r), \quad \widetilde A(r) = \frac{1}{\sqrt T} \sum_{t=1}^T \widehat z_t (Y_t(r) - \widehat z_t' S B(r)).
\end{align}
Then,
\begin{align*}
	\sqrt T ( \widehat{\beta}_{j}(r,s) - \beta_j(r,s) ) 
	&= \sqrt T(\widehat B(r) - S B(r))' P_j' \widehat \Psi_j(s) + \sqrt T(S B(r))' P_j' (\widehat \Psi_j(s) - S_j \Psi_j(s)) \\
	&= (P_j \widehat Q^{-1} \widetilde A(r))' \widehat \Psi_j(s) + \sqrt T(\widehat \Psi_j(s) - S_j \Psi_j(s))' S_j B_j(r).
\end{align*}
Lemma \ref{lem:aux17} and the consistency of $\widehat \Psi_j(s)$ by Theorem \ref{theo:primitves} imply
\begin{align}	\label{eq:thm3aux1}
(P_j \widehat Q^{-1} \widetilde A(r))' \widehat \Psi_j(s)
	= \frac{1}{\sqrt T} \sum_{t=1}^T \Big( (P_j Q^{-1} z_t^* u_t(r))'\Psi_j(s)  + \omega^{(B)}_{jt}(r,s) \Big) + o_P(1),
\end{align}
where $z_t^* = (w_t', F_{1t}^*, \ldots, F_{Jt}^*)'$ with $F_{jt}^* = (f_{1jt}^*, \ldots, f_{K_j j t}^*)'$, $f_{ljt}^* = \langle X_{jt} - \mu_j, \psi_{lj} \rangle$, and $Q = E[z_t^* (z_t^*)']$.
Note that $z_t^* = z_t + e_t$ with $e_t = (\langle \varepsilon_{jt}, \psi_{1j} \rangle, \ldots, \langle \varepsilon_{jt}, \psi_{K_j j} \rangle)'$ and $Q = E[z_t z_t'] + E[e_t e_t']$. Because $E[z_t z_t']$ is positive definite by Assumption \ref{as:identifiability2}(c), $Q$ must be positive definite as well, and $Q^{-1}$ exists.
The additional term in the expression above is
$$
	\omega^{(B)}_{jt}(r,s) = (\Psi_j(s))'P_j Q^{-1} \sum_{k=1}^J \bigg( E[z_t^*]  (F_{kt}^*)' - E[z_t^* (F_{kt}^*)'] G_{kt} \bigg) B_k(r),
$$
where $G_{kt}$ is the $K_j \times K_j$ matrix with the $(l,m)$-entries
\begin{align} \label{eq:thm3G}
	[G_{kt}]_{lm} = (\lambda_{lk} - \lambda_{mk})^{-1}(f_{mkt}^* y_{lkt} - E[f_{mkt}^* y_{lkt}] + f_{lkt}^* y_{mkt} - E[f_{lkt}^* y_{mkt}]) 1_{\{l\neq m\}}.
\end{align}
Lemma \ref{lem:aux22} implies
\begin{align} \label{eq:thm3aux2}
	\sqrt T(\widehat \Psi_j(s) - S_j \Psi_j(s))' S_j B_j(r) = \frac{1}{\sqrt T} \sum_{t=1}^T \omega^{(\Psi)}_{jt}(r,s) + o_P(1),
\end{align}
where
\begin{align*}
	\omega^{(\Psi)}_{jt}(r,s) = (\Psi_j(s))' G_{jt} B_j(r) + \varepsilon_{jt}^*(s) h_{jt}'B_j(r)
\end{align*}
with $\varepsilon_{jt}^*(s) = X_{jt}(s) - \mu_j(s) - (F_{jt}^*)'\Psi_j(s)$, and $h_{jt}$ is the vector of length $K_j$ with entries $h_{ljt} = \lambda_{lj}^{-1}\langle \gamma_{lj}, Y_t - E[Y_t]\rangle$, where $\gamma_{lj}(r) = E[f_{ljt}^*(Y_t(r) - E[Y_t(r)])]$.
Therefore,
\begin{align*}
	\sqrt T ( \widehat{\beta}_{j}(r,s) - \beta_j(r,s) ) 
	&=  \frac{1}{\sqrt T} \sum_{t=1}^T \Big( (z_t^* u_t(r))'\Psi_j(s)  + \omega^{(B)}_{jt}(r,s) + \omega^{(\Psi)}_{jt}(r,s)  \Big) + o_P(1).
\end{align*}
Note that $(z_t^* u_t(r))'\Psi_j(s)  + \omega^{(B)}_{jt}(r,s) + \omega^{(\Psi)}_{jt}(r,s)$ is a zero mean and alpha mixing sequence of size $-\nu/(\nu-2)$.
Therefore, by the central limit theorem for alpha-mixing sequences (e.g., Theorem 5.20 in \cite{White.2001}), we have
\begin{align} \label{eq:thm3aux3}
	\sqrt T ( \widehat{\beta}_{j}(r,s) - \beta_j(r,s) ) \overset{d}{\rightarrow} \mathcal N(0,\Omega_j(r,s))
\end{align}
for any fixed $r,s \in [a,b]$, 
where $\Omega_j(r,s) = Var[(z_t^* u_t(r))'\Psi_j(s)  + \omega^{(B)}_{jt}(r,s) + \omega^{(\Psi)}_{jt}(r,s)]$.
To show that $\widehat \Omega_j(r,s)$ is a consistent estimator for $\Omega_j(r,s)$, we apply Lemmas \ref{lem:aux18} and \ref{lem:aux23}, which give
\begin{align*}
	&(P_j \widehat Q^{-1} \widetilde A(r))' \widehat \Psi_j(s) + \sqrt T(\widehat \Psi_j(s) - S_j \Psi_j(s))' S_j B_j(r) \\
	&= \frac{1}{\sqrt T} \sum_{t=1}^T \Big( (P_j \widehat Q^{-1} \widehat z_t \widehat u_t(r))'\widehat \Psi_j(s)  + \widehat \omega_{jt}(r,s) \Big)  + o_P(1)
\end{align*}
Therefore, by \eqref{eq:thm3aux1} and \eqref{eq:thm3aux2},
\begin{align*}
	&\frac{1}{\sqrt T}\sum_{t=1}^T \Big( ([\widehat Q^{-1}]_j \widehat z_t \widehat u_t(r))'\widehat \Psi_j(s)  + \widehat \omega_{jt}(r,s) \Big) \\
	&\quad - \frac{1}{\sqrt T}\sum_{t=1}^T \Big( (P_j Q^{-1} z_t^* u_t(r))'\Psi_j(s) + \omega^{(B)}_{jt}(r,s) + \omega^{(\Psi)}_{jt}(r,s) \Big) = o_P(1),
\end{align*}
and
\begin{align*}
	\widehat \Omega_j(r,s) &= \frac{1}{T} \sum_{t=1}^T \Big( ([\widehat Q^{-1}]_j \widehat z_t \widehat u_t(r))'\widehat \Psi_j(s)  + \widehat \omega_{jt}(r,s) \Big)^2 \\
	&\overset{p}{\to} E\Big[( (z_t^* u_t(r))'\Psi_j(s)  + \omega^{(B)}_{jt}(r,s) + \omega^{(\Psi)}_{jt}(r,s)  )^2\Big] = \Omega_j(r,s)
\end{align*}
by the law of large numbers for alpha-mixing processes (e.g., Theorem 3.47 in \cite{White.2001}). Therefore, by \eqref{eq:thm3aux3} and Slutsky's theorem,
\begin{align} \label{eq:thm3aux4}
	\frac{\sqrt T ( \widehat{\beta}_{j}(r,s) - \beta_j(r,s) )}{\sqrt{\widehat \Omega_j(r,s)}} \overset{d}{\rightarrow} \mathcal N(0,1) \quad \text{pointwise for} \ r,s \in [a,b].
\end{align}
To show that \eqref{eq:thm3aux4} also holds uniformly in $r,s \in [a,b]$ in the space of continuous functions $\mathcal C[a,b] \times \mathcal C[a,b]$ equipped with the sup-norm, we employ Theorem 7.5 in \textcite{Billingsley.1999}, which requires that two conditions must be satisfied.
First, finite-dimensional convergence of \eqref{eq:thm3aux4} holds because, by the Cramer-Wold device and the central limit theorem for alpha-mixing sequences, the vector
$$
\Bigg(\frac{\sqrt T ( \widehat{\beta}_{j}(r_1,s_1) - \beta_j(r_1,s_1) )}{\sqrt{\widehat \Omega_j(r_1,s_1)}}, \ldots, \frac{\sqrt T ( \widehat{\beta}_{j}(r_k,s_k) - \beta_j(r_k,s_k) )}{\sqrt{\widehat \Omega_j(r_k,s_k)}}\Bigg)
$$
converges to a Gaussian vector for any $r_1, \ldots, r_k \in [a,b]$ and $s_1, \ldots, s_k \in [a,b]$.
The second condition is asymptotic uniform stochastic equicontinuity of the left-hand side of \eqref{eq:thm3aux4}.
This follows from Assumption \ref{as:estimation}(b).
Since $Y_t(r)$ and $X_{jt}(s)$ have differentiable sample paths, and $B(r)$ and $\Psi_j(s)$ are differentiable, 
$$
	\frac{\sqrt T ( \widehat{\beta}_{j}(r,s) - \beta_j(r,s) )}{\sqrt{\widehat \Omega_j(r,s)}}
$$
also has differentiable and hence Lipschitz-continuous sample paths, and the sufficient condition for asymptotic uniform stochastic equicontinuity stated in Theorem 22.10 in \textcite{Davidson.2021} is fulfilled.
Hence \eqref{eq:thm3aux4} holds uniformly in the space $\mathcal C[a,b] \times \mathcal C[a,b]$.
\end{proof}

\subsubsection{Auxiliary Lemmas for the proof of Theorem \ref{theo:ols.normality}}

In this subsection we present some auxiliary lemmas required for the proof of Theorem \ref{theo:ols.normality}.
Throughout the proofs of these lemmas, we will use the following definitions and useful facts:
\begin{enumerate}
\item For all real numbers $c_1, c_2$ we have $(c_1 + c_2)^2 \leq 2(c_1^2 + c_2^2)$
	\item The Euclidean vector norm for a $n \times 1$ vector $a$ is $\|a\|_2 = (\sum_{i=1}^n a_i^2)$. The Frobenius matrix norm for a $n \times k$ matrix $A$ is $\|A\|_F = (\sum_{i=1}^n \sum_{j=1}^k A_{ij}^2)^{1/2}$. We have $\|a b'\|_F \leq \|a\|_2 \|b\|_2$, and, by the triangle and Cauchy-Schwarz inequalities,
	$$
	\Big\| \frac{1}{T} \sum_{t=1}^T a_t b_t' \Big\|_F 
	\leq  \frac{1}{T} \sum_{t=1}^T \|a_t\|_2 \|b_t\|_2 
	\leq \Big( \frac{1}{T} \sum_{t=1}^T \|a_t\|_2^2 \Big)^{1/2} \Big( \frac{1}{T} \sum_{t=1}^T \|b_t\|_2^2 \Big)^{1/2}.
	$$
	Furthermore, $\|Aa\|_2 \leq \|A\|_F \|a\|_2$ and $|a'b| \leq \|a\|_2 \|b\|_2$.
	\item The Hilbert-Schmidt norm and the $L^2$-norm are compatible, i.e., for an operator $\mathcal T$ and a function $x$ we have $\|\mathcal T x\| \leq \|\mathcal T\|_{\mathcal S} \|x\|$.
	Furthermore, $\langle \mathcal T(x), y \rangle = \langle x, \mathcal T^*(y) \rangle$ for all $x,y \in H$, where $\mathcal T^*$ is the adjoint operator.
	\item Define $z_t^* = (w_t', (F_{1t}^*)', \ldots, (F_{Jt}^*)')'$ with $F_{jt}^* = (f_{1jt}^*, \ldots, f_{K_j j t}^*)'$ and $f_{ljt}^* = \langle X_{jt} - \mu_j, \psi_{lj} \rangle$
	Define the sign matrices $S_j = diag(s_{1j}, \ldots, s_{K_j j})$ for $j=1, \ldots, J$ with $s_{lj} = sign(\langle \widehat \psi_{lj}, \psi_{lj} \rangle)$ for $l=1, \ldots, K_j$, and  $S = diag(I_N, S_1, \ldots, S_J)$ with the property $S = S' = S^{-1}$.
	Furthermore, define $\varepsilon_{jt}^*(s) = X_{jt}(s) - \mu_j(s) - (F_{jt}^*)'\Psi_j(s)$, $Q = E[z_t^*(z_t^*)']$, $Q^* = \frac{1}{T} \sum_{t=1}^T z_t^*(z_t^*)'$, $y_{ljt} = \langle Y_{t} - E[Y_t], \psi_{lj} \rangle$, $g_{lmjt} = s_{mj} s_{lj}( (f_{mjt}^* y_{ljt} - E[f_{mjt}^* y_{ljt}]) + (f_{ljt}^* y_{mjt} - E[f_{ljt}^* y_{mjt}]))$, $\widehat g_{lmjt} = \widehat f_{mjt} \widehat y_{ljt} - \overline{ f_{mj} y_{lj}} + \widehat f_{ljt} \widehat y_{mjt} - \overline{f_{lj} y_{mj}}$, and $\gamma_{lj}(r) = E[f_{ljt}^* (Y_t(r) - E[Y_t(r)])]$.
	\item We will frequently use the facts that all random variables have bounded forth moments by Assumption \ref{as:estimation}(a) and that the primitives are $\sqrt T$-consistent by Theorem \ref{theo:primitves}. Further, we frequently apply the Cauchy-Schwarz and triangle inequalities.
	\item We will frequently use the law of large numbers for alpha-mixing sequences (Theorem 3.47 in \cite{White.2001}) and the central limit theorem for alpha mixing sequences (Theorem 5.20 in \cite{White.2001}).
	\item All results in this subsection follow under Assumptions \ref{as:exo}--\ref{as:estimation}, for any $j=1, \ldots, J$, $l=1, \ldots, K_j$, $t \in \mathbb Z$, $r,s \in [a,b]$, as $T \to \infty$.
\end{enumerate}

\begin{lemma} \label{lem:aux1}
 $\|\widehat z_t - S z_t^* \|_2 = \mathcal O_P(T^{-1/2})$.
\end{lemma}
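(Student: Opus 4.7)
The plan is to show that the $w_t$-components of $\widehat z_t$ and $S z_t^*$ match exactly (so those blocks contribute zero), and then to handle each factor coordinate $\widehat f_{ljt} - s_{lj} f_{ljt}^*$ by an algebraic decomposition that isolates two ``small'' pieces already controlled by Theorem~\ref{theo:primitves}. Since the total number of factor coordinates is finite ($\sum_j K_j$), a coordinatewise bound of order $T^{-1/2}$ transfers immediately to the Euclidean norm of the full vector by the triangle inequality.

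The key algebraic step is the identity
\begin{align*}
	\widehat f_{ljt} - s_{lj} f_{ljt}^*
	&= \langle X_{jt} - \widehat \mu_j, \widehat \psi_{lj}\rangle - s_{lj} \langle X_{jt} - \mu_j, \psi_{lj}\rangle \\
	&= \langle X_{jt} - \widehat \mu_j, \widehat \psi_{lj} - s_{lj}\psi_{lj}\rangle + s_{lj}\langle \mu_j - \widehat \mu_j, \psi_{lj}\rangle.
\end{align*}
Applying Cauchy--Schwarz to each inner product, I would bound this by
$$
	\|X_{jt} - \widehat \mu_j\| \cdot \|\widehat \psi_{lj} - s_{lj}\psi_{lj}\| \;+\; \|\widehat \mu_j - \mu_j\| \cdot \|\psi_{lj}\|.
$$
Since $s_{lj}^2 = 1$, we have $\|\widehat \psi_{lj} - s_{lj}\psi_{lj}\| = \|s_{lj}\widehat \psi_{lj} - \psi_{lj}\|$, which is $\mathcal O_P(T^{-1/2})$ by Theorem~\ref{theo:primitves}(d); likewise $\|\widehat \mu_j - \mu_j\| = \mathcal O_P(T^{-1/2})$ by part (a). The loading $\|\psi_{lj}\| = 1$ is deterministic, and $\|X_{jt} - \widehat \mu_j\| \leq \|X_{jt}\| + \|\mu_j\| + \|\widehat \mu_j - \mu_j\| = \mathcal O_P(1)$ using that $E\|X_{jt}\|^2 < \infty$ under Assumption~\ref{as:estimation}(a). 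Therefore every coordinate satisfies $|\widehat f_{ljt} - s_{lj} f_{ljt}^*| = \mathcal O_P(T^{-1/2})$.

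Because $S = \mathrm{diag}(I_N, S_1, \ldots, S_J)$ leaves the $w_t$-block unchanged, the first $N$ coordinates of $\widehat z_t - S z_t^*$ vanish identically. Summing the squared coordinate bounds over the finitely many factor indices gives
$$
	\|\widehat z_t - S z_t^*\|_2^2 = \sum_{j=1}^J \sum_{l=1}^{K_j} |\widehat f_{ljt} - s_{lj} f_{ljt}^*|^2 = \mathcal O_P(T^{-1}),
$$
and taking square roots yields the claim. I expect no serious obstacle here, as the lemma is essentially a bookkeeping consequence of the uniform-in-$t$ boundedness of $X_{jt}$ and the two primitive rates from Theorem~\ref{theo:primitves}; the only subtle point is keeping track of the sign matrix $S$, which is handled cleanly by the decomposition above.
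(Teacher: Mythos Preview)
Your proof is correct and essentially identical to the paper's. The only cosmetic difference is the algebraic split: the paper writes
\[
\widehat f_{ljt} - s_{lj} f_{ljt}^* = \langle \mu_j - \widehat \mu_j, \widehat \psi_{lj}\rangle + \langle X_{jt} - \mu_j,\ \widehat \psi_{lj} - s_{lj}\psi_{lj}\rangle,
\]
pairing the loading error with $X_{jt}-\mu_j$ rather than with $X_{jt}-\widehat\mu_j$ as you do, but both decompositions yield the same Cauchy--Schwarz bound and invoke the same primitive rates from Theorem~\ref{theo:primitves}.
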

\begin{proof}
By Cauchy-Schwarz, the $\sqrt T$-consistency of the primitives, and bounded fourth moments,
	\begin{align*}
		\|\widehat z_t - S z_t^* \|_2^2 
		&= \sum_{j=1}^J \sum_{l=1}^{K_j} \Big( \langle X_{jt} - \widehat \mu_j, \widehat \psi_{lj} \rangle - \langle X_{jt} - \mu_j, s_{lj} \psi_{lj} \rangle \Big)^2 \\
	&= 	\sum_{j=1}^J \sum_{l=1}^{K_j} \Big( \langle \mu_j - \widehat \mu_j, \widehat \psi_{lj} \rangle + \langle X_{jt} - \mu_j, \widehat \psi_{lj} - s_{lj} \psi_{lj} \rangle \Big)^2 \\
	&\leq 2 \sum_{j=1}^J \sum_{l=1}^{K_j} \| \mu_j - \widehat \mu_j \|^2 + \|X_{jt} - \mu_j \|^2 \|\widehat \psi_{lj} - s_{lj} \psi_{lj} \|^2 = \mathcal O_P(T^{-1}).
	\end{align*}
\end{proof}

\begin{lemma} \label{lem:aux2}
$\|\widehat Q - SQ^*S \|_F = \mathcal O_P(T^{-1/2})$.
\end{lemma}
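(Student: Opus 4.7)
The plan is to exploit the telescoping identity
\[
\widehat z_t \widehat z_t' - (Sz_t^*)(Sz_t^*)' = (\widehat z_t - Sz_t^*)\widehat z_t' + Sz_t^*(\widehat z_t - Sz_t^*)',
\]
which reduces the problem to bounding cross products. Summing over $t$, dividing by $T$, and applying the triangle inequality together with the Cauchy--Schwarz-type bound $\|T^{-1}\sum_t a_t b_t'\|_F \leq (T^{-1}\sum_t \|a_t\|_2^2)^{1/2}(T^{-1}\sum_t \|b_t\|_2^2)^{1/2}$ recorded in the preamble, I obtain
\[
\|\widehat Q - SQ^*S\|_F \leq \Bigl(\tfrac{1}{T}\sum_t \|\widehat z_t - Sz_t^*\|_2^2 \Bigr)^{1/2}\Bigl[\bigl(\tfrac{1}{T}\sum_t \|\widehat z_t\|_2^2\bigr)^{1/2} + \bigl(\tfrac{1}{T}\sum_t \|Sz_t^*\|_2^2\bigr)^{1/2}\Bigr].
\]

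The first step will be to show that $T^{-1}\sum_t\|\widehat z_t - Sz_t^*\|_2^2 = \mathcal O_P(T^{-1})$. Rather than using the pointwise conclusion of Lemma \ref{lem:aux1} directly, I will revisit the inequality obtained in the middle of its proof,
\[
\|\widehat z_t - Sz_t^*\|_2^2 \leq 2\sum_{j=1}^{J}\sum_{l=1}^{K_j}\Bigl(\|\widehat \mu_j - \mu_j\|^2 + \|X_{jt} - \mu_j\|^2\,\|\widehat \psi_{lj} - s_{lj}\psi_{lj}\|^2\Bigr),
\]
and average over $t$. The $\|\widehat \mu_j - \mu_j\|^2$ piece is independent of $t$ and is $\mathcal O_P(T^{-1})$ by Theorem \ref{theo:primitves}(a); the remaining piece factorises into $(T^{-1}\sum_t \|X_{jt} - \mu_j\|^2)\cdot\|\widehat\psi_{lj} - s_{lj}\psi_{lj}\|^2$, whose first factor is $\mathcal O_P(1)$ by the law of large numbers for $\alpha$-mixing sequences (Assumption \ref{as:estimation}(a)) and whose second factor is $\mathcal O_P(T^{-1})$ by Theorem \ref{theo:primitves}(d).

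The second step is to note that $S$ is an isometry ($S=S'=S^{-1}$, with diagonal entries in $\{-1,+1\}$), so $\|Sz_t^*\|_2 = \|z_t^*\|_2$, and therefore $T^{-1}\sum_t\|Sz_t^*\|_2^2 = \mathcal O_P(1)$ by the same $\alpha$-mixing law of large numbers, using the bounded second (in fact fourth) moments of $w_t$ and of $f_{ljt}^* = \langle X_{jt} - \mu_j, \psi_{lj}\rangle$. Combined with the previous step, the elementary inequality $\|\widehat z_t\|_2^2 \leq 2\|Sz_t^*\|_2^2 + 2\|\widehat z_t - Sz_t^*\|_2^2$ yields $T^{-1}\sum_t \|\widehat z_t\|_2^2 = \mathcal O_P(1)$. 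Plugging the three bounds into the display above gives $\|\widehat Q - SQ^*S\|_F = \mathcal O_P(T^{-1/2})\cdot\mathcal O_P(1)$, which is the claim.

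I do not anticipate any serious obstacle; the calculation is essentially a careful bookkeeping exercise. The only minor subtlety is that the pointwise rate of Lemma \ref{lem:aux1} carries through to the time-average only because the $t$-dependent factor in its proof, $\|X_{jt} - \mu_j\|^2$, can be separated from the $\mathcal O_P(T^{-1})$ primitive-estimation terms and absorbed into a convergent sample average.
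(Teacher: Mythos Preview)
Your proposal is correct and follows essentially the same approach as the paper: the same telescoping identity, the same triangle inequality plus Cauchy--Schwarz bound, and the same appeal to Lemma~\ref{lem:aux1} together with bounded moments. Your treatment is in fact slightly more careful than the paper's, since you explicitly justify why the pointwise bound from Lemma~\ref{lem:aux1} upgrades to the time-average $T^{-1}\sum_t\|\widehat z_t - Sz_t^*\|_2^2 = \mathcal O_P(T^{-1})$ by separating the $t$-dependent factor $\|X_{jt}-\mu_j\|^2$ from the primitive-estimation terms---the paper leaves this implicit in its citation of ``Lemma~\ref{lem:aux1}, and bounded fourth moments.''
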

\begin{proof}
	By the triangle inequality, the relation of Frobenius and Euclidean norm, Lemma \ref{lem:aux1}, and bounded fourth moments,
\begin{align*}
	&\|\widehat Q - SQ^*S \|_F \\
	 &= \bigg\| \frac{1}{T} \sum_{t=1}^T  \widehat z_t \widehat z_t' - (S z_t^*)(S z_t^*)' \bigg\|_F  \\
	&= \bigg\| \frac{1}{T} \sum_{t=1}^T  (\widehat z_t - S z_t^*)' \widehat z_t' + (S z_t^*)(\widehat z_t - S z_t^*)' \bigg\|_F \\
    &\leq \bigg\| \frac{1}{T} \sum_{t=1}^T  (\widehat z_t - S z_t^*)' \widehat z_t' \bigg\|_F + \bigg\| \frac{1}{T} \sum_{t=1}^T  (S z_t^*)(\widehat z_t - S z_t^*)' \bigg\|_F \\
    &= \bigg( \frac{1}{T} \sum_{t=1}^T \|\widehat z_t - S z_t^*\|_2^2 \bigg)^{1/2} \bigg( \frac{1}{T} \sum_{t=1}^T \|\widehat z_t\|_2^2 \bigg)^{1/2} + \bigg( \frac{1}{T} \sum_{t=1}^T \|z_t^*\|_2^2 \bigg)^{1/2} \bigg( \frac{1}{T} \sum_{t=1}^T \|\widehat z_t - S z_t^*\|_2^2 \bigg)^{1/2} \\
    &= \mathcal O_P(T^{-1/2}).
\end{align*}	
\end{proof}

\begin{lemma} \label{lem:aux3}
$\|Q^* - Q\|_F = o_P(1)$.
\end{lemma}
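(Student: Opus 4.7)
The plan is to reduce the Frobenius-norm convergence to entrywise convergence and then apply a law of large numbers for alpha-mixing sequences to each entry. Since $z_t^*$ has fixed finite dimension $M = N + \sum_{j=1}^J K_j$, the matrix $Q^* - Q$ has finitely many entries, so it suffices to show that every entry $\frac{1}{T}\sum_{t=1}^T (z_{it}^* z_{kt}^* - E[z_{it}^* z_{kt}^*])$ is $o_P(1)$.

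First, I would verify that $z_{it}^* z_{kt}^*$ is a zero-mean (after centering) alpha-mixing sequence of size $-\nu/(\nu-2)$. Each component of $z_t^*$ is either a coordinate of $w_t$ or a projection coefficient $f_{ljt}^* = \langle X_{jt} - \mu_j, \psi_{lj} \rangle$, which is a fixed (non-random) measurable functional of $X_{jt}$. By Assumption \ref{as:estimation}(a), the process $(u_t, w_t, X_{1t}, \ldots, X_{Jt})_t$ is alpha-mixing of size $-\nu/(\nu-2)$ with $\nu \geq 4$, and measurable functions of an alpha-mixing process preserve the mixing rate, so $z_t^*$ inherits this mixing property; products of its coordinates do as well.

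Next, I would check the moment condition required by the LLN. By the Cauchy-Schwarz inequality in $H$ and Jensen's inequality,
\begin{equation*}
	E[|f_{ljt}^*|^\nu] = E[|\langle X_{jt} - \mu_j, \psi_{lj}\rangle|^\nu] \leq \|\psi_{lj}\|^\nu E[\|X_{jt} - \mu_j\|^\nu] \leq C \sup_t E[(X_{jt}(r))^\nu] < \infty,
\end{equation*}
where the last bound follows from Assumption \ref{as:estimation}(a). Combined with the moment bound on $w_{it}$ and another application of Cauchy-Schwarz, each $z_{it}^* z_{kt}^*$ has uniformly bounded $\nu/2$-th moments with $\nu/2 \geq 2$, which is more than enough to apply the LLN for alpha-mixing sequences (Theorem 3.47 in \cite{White.2001}). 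Hence each entry of $Q^* - Q$ converges to zero in probability.

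Finally, since $\|Q^* - Q\|_F^2 = \sum_{i,k=1}^M (Q^*_{ik} - Q_{ik})^2$ is a finite sum of $o_P(1)$ terms, we conclude $\|Q^* - Q\|_F = o_P(1)$. No real obstacle arises here; the statement is essentially a multivariate LLN applied component-wise, the only point worth a line of care being the verification that the projection coefficients $f_{ljt}^*$ inherit enough moments and mixing structure from the underlying functional process.
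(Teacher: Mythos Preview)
Your proof is correct and takes essentially the same approach as the paper, which simply invokes the law of large numbers for alpha-mixing sequences (Theorem 3.47 in \cite{White.2001}) in a single line. You spell out the entrywise reduction and the moment/mixing verification that the paper leaves implicit, but the underlying argument is identical.
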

\begin{proof}
Since $z_t^*$ alpha-mixing of size $-\nu/(\nu-2)$, the result follows from the law of large numbers for alpha-mixing sequences.
\end{proof}

\begin{lemma} \label{lem:aux4}
	$\|\widehat Q - SQS\|_F = o_P(1)$ and $\|\widehat Q^{-1} - SQ^{-1}S\|_F = o_P(1)$.
\end{lemma}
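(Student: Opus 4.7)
The plan is to derive the first statement as a direct consequence of Lemmas \ref{lem:aux2} and \ref{lem:aux3} via the triangle inequality, and then to upgrade it to the inverse statement through the matrix identity $A^{-1}-B^{-1}=A^{-1}(B-A)B^{-1}$ together with an argument that $\widehat Q^{-1}$ is bounded in probability.

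For the first claim, I would write
\begin{equation*}
\|\widehat Q - SQS\|_F \le \|\widehat Q - SQ^*S\|_F + \|S(Q^* - Q)S\|_F.
\end{equation*}
The first summand is $\mathcal O_P(T^{-1/2})$ by Lemma \ref{lem:aux2}. For the second, I use that $S$ is block diagonal with entries $\pm 1$ and satisfies $SS=I$, so left- and right-multiplication by $S$ preserves the Frobenius norm, giving $\|S(Q^* - Q)S\|_F = \|Q^* - Q\|_F = o_P(1)$ by Lemma \ref{lem:aux3}. Summing yields the stated $o_P(1)$ bound.

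For the second claim, I use the identity
\begin{equation*}
\widehat Q^{-1} - SQ^{-1}S = \widehat Q^{-1}\bigl(SQS - \widehat Q\bigr)SQ^{-1}S,
\end{equation*}
valid because $(SQS)^{-1} = SQ^{-1}S$ (since $SS=I$). Submultiplicativity of the Frobenius norm gives
\begin{equation*}
\|\widehat Q^{-1} - SQ^{-1}S\|_F \le \|\widehat Q^{-1}\|_F\,\|SQS - \widehat Q\|_F\,\|SQ^{-1}S\|_F.
\end{equation*}
The middle factor is $o_P(1)$ by the first part, and $\|SQ^{-1}S\|_F = \|Q^{-1}\|_F$ is a finite deterministic constant because $Q$ is positive definite by Assumption \ref{as:identifiability2}(c) (recall $Q = E[z_t z_t'] + E[e_t e_t']$ with $E[z_t z_t']$ positive definite).

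The main obstacle, and the step I would spend the most care on, is establishing that $\|\widehat Q^{-1}\|_F = \mathcal O_P(1)$, since it requires invertibility of $\widehat Q$ with high probability. For this I would argue via Weyl's eigenvalue perturbation inequality: because $SQS$ has the same eigenvalues as $Q$ and the smallest eigenvalue $\lambda_{\min}(Q) > 0$ is a fixed positive constant, and because $\|\widehat Q - SQS\|_F = o_P(1)$ implies $|\lambda_{\min}(\widehat Q) - \lambda_{\min}(SQS)| = o_P(1)$, we have $\lambda_{\min}(\widehat Q) \ge \lambda_{\min}(Q)/2$ with probability tending to one. Hence $\widehat Q$ is invertible with probability approaching one and $\|\widehat Q^{-1}\|_F$ is bounded in probability, which closes the argument.
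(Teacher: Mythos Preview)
Your proof is correct and follows essentially the same line as the paper. The paper's own proof is a one-liner invoking Lemmas \ref{lem:aux2}--\ref{lem:aux3}, the triangle inequality, $S=S^{-1}$, and the continuous mapping theorem for the inverse; your argument unpacks the continuous mapping step explicitly via the identity $A^{-1}-B^{-1}=A^{-1}(B-A)B^{-1}$ and a Weyl-type eigenvalue bound to control $\|\widehat Q^{-1}\|_F$, which is simply a more detailed way of saying the same thing.
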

\begin{proof}
Follows by Lemmas \ref{lem:aux2}--\ref{lem:aux3}, the triangle inequality, the fact that $S = S^{-1}$ and the continuous mapping theorem.
\end{proof}

\begin{lemma} \label{lem:aux5}
	$\|\widehat B(r)- SB(r)\|_2 = \mathcal O_P(T^{-1/2})$.
\end{lemma}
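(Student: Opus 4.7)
The plan is to start from the explicit identity
$\sqrt T (\widehat B(r) - SB(r)) = \widehat Q^{-1} \widetilde A(r)$
with $\widetilde A(r) = T^{-1/2}\sum_{t=1}^T \widehat z_t (Y_t(r) - \widehat z_t' SB(r))$, which was already derived in the main proof of Theorem \ref{theo:ols.normality} (equation \eqref{eq:thm3Atilde}). Since $\widehat Q^{-1} - S Q^{-1} S = o_P(1)$ by Lemma \ref{lem:aux4} and $Q^{-1}$ is bounded, it suffices to show $\|\widetilde A(r)\|_2 = \mathcal O_P(1)$; then the result follows by the submultiplicativity $\|\widehat Q^{-1} \widetilde A(r)\|_2 \leq \|\widehat Q^{-1}\|_F \|\widetilde A(r)\|_2$.

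The first key step is to rewrite $Y_t(r)$ in terms of $Sz_t^*$. Recall that $z_t^* = z_t + e_t$ where $e_t$ stacks the noise projections $\langle \varepsilon_{jt},\psi_{lj}\rangle$. Plugging into the true model and using Lemma \ref{lem:ident}(a),
$
(z_t^*)' B(r) = z_t' B(r) + \sum_{j=1}^J \int_a^b \beta_j(r,s)\varepsilon_{jt}(s)\,ds = z_t' B(r)
$
almost surely, so $Y_t(r) = (Sz_t^*)'(SB(r)) + u_t(r)$. Substituting yields the decomposition
\begin{equation*}
\widetilde A(r) = \frac{1}{\sqrt T}\sum_{t=1}^T \widehat z_t u_t(r) + \frac{1}{\sqrt T}\sum_{t=1}^T \widehat z_t (Sz_t^* - \widehat z_t)' SB(r).
\end{equation*}

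For the first summand I would split further as $T^{-1/2}\sum_t Sz_t^* u_t(r) + T^{-1/2}\sum_t (\widehat z_t - Sz_t^*)u_t(r)$. The leading piece is $\mathcal O_P(1)$ by the alpha-mixing CLT, since $E[z_t^* u_t(r)] = 0$ (by Assumption \ref{as:exo} for the $z_t$-part and by Assumption \ref{as:identifiability2}(a) for the $e_t$-part, which gives $E[\langle \varepsilon_{jt},\psi_{lj}\rangle u_t(r)] = 0$). The remainder piece is handled by Cauchy–Schwarz:
\begin{equation*}
\Big\|\frac{1}{\sqrt T}\sum_{t=1}^T(\widehat z_t - Sz_t^*)u_t(r)\Big\|_2 \leq \sqrt T \Big(\frac{1}{T}\sum_{t=1}^T \|\widehat z_t - Sz_t^*\|_2^2\Big)^{1/2}\Big(\frac{1}{T}\sum_{t=1}^T u_t(r)^2\Big)^{1/2}.
\end{equation*}
Following the bound proved inside Lemma \ref{lem:aux1}, one has $T^{-1}\sum_t \|\widehat z_t - Sz_t^*\|_2^2 = \mathcal O_P(T^{-1})$ (the $\mathcal O_P(T^{-1})$ rates for $\|\widehat \mu_j - \mu_j\|$ and $\|s_{lj}\widehat \psi_{lj} - \psi_{lj}\|$ combine with the sample-averaged fourth-moment bound on $\|X_{jt}-\mu_j\|^2$), while the second factor is $\mathcal O_P(1)$ by the alpha-mixing LLN and bounded fourth moments from Assumption \ref{as:estimation}(a); together the remainder is $\mathcal O_P(1)$. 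For the second summand of $\widetilde A(r)$ the same Cauchy–Schwarz argument, combined with $T^{-1}\sum_t\|\widehat z_t\|_2^2 = \mathcal O_P(1)$ and the boundedness of $SB(r)$ on $[a,b]$, gives an $\mathcal O_P(1)$ bound. Hence $\|\widetilde A(r)\|_2 = \mathcal O_P(1)$ and the conclusion follows.

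The main subtlety — and the only step that requires care — is the identity $(z_t^*)'B(r) = z_t' B(r)$ a.s. obtained from Lemma \ref{lem:ident}(a); without it, one would be left with a residual bias term of order $\mathcal O_P(1)$ rather than $o_P(1)$. Everything else reduces to Cauchy–Schwarz applied to the sample averages, leveraging the $\sqrt T$-rates already established in Theorem \ref{theo:primitves} for the mean and eigenfunction estimators.
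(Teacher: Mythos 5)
Your proposal is correct and follows essentially the same route as the paper's proof: the same representation $Y_t(r) = (S z_t^*)' S B(r) + u_t(r)$ (which the paper obtains via Lemma \ref{lem:ident}(c), equivalent to your use of Lemma \ref{lem:ident}(a)), the same three-term decomposition bounded by Cauchy--Schwarz and Lemma \ref{lem:aux1}, the martingale-difference/mixing argument for $T^{-1/2}\sum_t S z_t^* u_t(r)$, and Lemma \ref{lem:aux4} to control $\widehat Q^{-1}$; working with the scaled quantity $\widetilde A(r)=\mathcal O_P(1)$ rather than the unscaled sums at rate $\mathcal O_P(T^{-1/2})$ is only a cosmetic difference. (One wording slip: the $\mathcal O_P(T^{-1})$ rates you cite are for the squared norms $\|\widehat\mu_j-\mu_j\|^2$ and $\|s_{lj}\widehat\psi_{lj}-\psi_{lj}\|^2$, the norms themselves being $\mathcal O_P(T^{-1/2})$, but your averaged bound $T^{-1}\sum_t\|\widehat z_t - S z_t^*\|_2^2=\mathcal O_P(T^{-1})$ is right.)
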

\begin{proof}
Since $\beta_j(r) = (B_j(r))'\Psi_j(s)$, we have $\int_a^b \beta_j(r,s) X_{jt}(s) \dd s = (B_j(r))'F_{jt}^*$ and the regression equation can be represented as $Y_t(r) = (S z_t^*)'S B(r) + u_t(r)$.
Therefore,
$$
	\widehat Q \widehat B(r) = \frac{1}{T} \sum_{t=1}^T \widehat z_t Y_t(r) 
	= \widehat Q S B(r) + \frac{1}{T} \sum_{t=1}^T \widehat z_t (S z_t^* - \widehat z_t)' S B(r)  + \frac{1}{T} \sum_{t=1}^T \widehat z_t u_t(r)
$$
and
$$
	\widehat Q^{-1} (\widehat B(r) - S B(r)) =  \frac{1}{T} \sum_{t=1}^T \widehat z_t (S z_t^* - \widehat z_t)' S B(r) + \frac{1}{T} \sum_{t=1}^T (S z_t^* - \widehat z_t) u_t(r) + \frac{1}{T} \sum_{t=1}^T S z_t^* u_t(r).
$$
By Lemma \ref{lem:aux1} and bounded fourth moments,
\begin{align*}
	&\bigg\|\frac{1}{T} \sum_{t=1}^T \widehat z_t (S z_t^* - \widehat z_t)' S B(r)\bigg\|_2 \\
	&\leq \bigg( \frac{1}{T} \sum_{t=1}^T \|\widehat z_t\|_2^2 \bigg)^{1/2} \bigg( \frac{1}{T} \sum_{t=1}^T \|S z_t^* - \widehat z_t\|_2^2 \bigg)^{1/2} \|B(r)\|_2 = \mathcal O_P(T^{-1/2})
\end{align*}
and 
$$
	\bigg\|\frac{1}{T} \sum_{t=1}^T (S z_t^* - \widehat z_t) u_t(r)\bigg\|_2 \leq \bigg( \frac{1}{T} \sum_{t=1}^T \|S z_t^* - \widehat z_t\|_2^2 \bigg)^{1/2} \bigg( \frac{1}{T} \sum_{t=1}^T |u_t(r)|^2 \bigg)^{1/2}  = \mathcal O_P(T^{-1/2}).
$$
The result follows from Lemma \ref{lem:aux4} and
$$
	\bigg\| \frac{1}{T} \sum_{t=1}^T S z_t^* u_t(r) \bigg\|_2 = \mathcal O_P(T^{-1/2}),
$$
which holds because $z_t^* u_t(r)$ is a zero mean martingale difference sequence for all $r \in [a,b]$ by Assumption \ref{as:exo}.
\end{proof}

\begin{lemma} \label{lem:aux6}
$\|\sum_{t=1}^T (\widehat z_t - S z_t^*) u_t(r)\|_2 = \mathcal O_P(1)$.
\end{lemma}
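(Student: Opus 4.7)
The plan is to leverage the exogeneity condition from Assumption \ref{as:exo}---which makes $u_t(r)$ a martingale difference with respect to the filtration generated by contemporaneous and past regressors and past errors---together with the $\sqrt T$-consistency of $\widehat\mu_j$ and $\widehat\psi_{lj}$ from Theorem \ref{theo:primitves}. Because the first $N$ coordinates of $\widehat z_t - S z_t^*$ are identically zero (the $w_t$ block is unperturbed by estimation), I only need to control the $K_j$ factor coordinates in each block $j=1,\ldots,J$.

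The first step is to decompose, for the $l$-th coordinate of the $j$-th block,
\[
\widehat f_{ljt} - s_{lj} f_{ljt}^{*} = \langle X_{jt} - \mu_j,\, \widehat\psi_{lj} - s_{lj}\psi_{lj}\rangle + \langle \mu_j - \widehat\mu_j,\, \widehat\psi_{lj}\rangle,
\]
by inserting $\pm\langle X_{jt}-\mu_j,\widehat\psi_{lj}\rangle$. Multiplying by $u_t(r)$ and summing over $t$, the first summand becomes
\[
\sum_{t=1}^T \langle X_{jt}-\mu_j,\, \widehat\psi_{lj} - s_{lj}\psi_{lj}\rangle u_t(r) = \Bigl\langle \sum_{t=1}^T u_t(r)(X_{jt}-\mu_j),\; \widehat\psi_{lj} - s_{lj}\psi_{lj} \Bigr\rangle,
\]
which Cauchy-Schwarz bounds by $\bigl\|\sum_t u_t(r)(X_{jt}-\mu_j)\bigr\| \cdot \|\widehat\psi_{lj} - s_{lj}\psi_{lj}\|$. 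The second factor is $\mathcal O_P(T^{-1/2})$ by Theorem \ref{theo:primitves}(d), while the first is $\mathcal O_P(\sqrt T)$: under Assumption \ref{as:exo} the cross-time covariances of $(X_{jt}(s)-\mu_j(s))u_t(r)$ vanish via iterated expectations (for $t<u$, condition on $\{\mathcal F_{u-i}:i\geq 0\}$, which contains $X_{jt}$, $X_{ju}$, and $u_t$), leaving $E\|\sum_t u_t(r)(X_{jt}-\mu_j)\|^2 = \sum_t \int_a^b E[u_t(r)^2(X_{jt}(s)-\mu_j(s))^2]\,ds = O(T)$ under the bounded fourth moments of Assumption \ref{as:estimation}(a). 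Markov's inequality then delivers the $\sqrt T$ rate, and the product is $\mathcal O_P(1)$.

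The second summand is easier: since $\langle \mu_j-\widehat\mu_j,\widehat\psi_{lj}\rangle$ does not depend on $t$, it factors out of the sum, giving
\[
\sum_{t=1}^T \langle \mu_j - \widehat\mu_j,\, \widehat\psi_{lj}\rangle u_t(r) = \langle \mu_j - \widehat\mu_j,\, \widehat\psi_{lj}\rangle \sum_{t=1}^T u_t(r),
\]
and the coefficient is bounded by $\|\mu_j-\widehat\mu_j\|\,\|\widehat\psi_{lj}\| = \mathcal O_P(T^{-1/2})$ by Theorem \ref{theo:primitves}(a) and the unit normalization, while $|\sum_t u_t(r)| = \mathcal O_P(\sqrt T)$ by the scalar MDS variance calculation applied to $u_t(r)$. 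Hence this piece is $\mathcal O_P(1)$ as well. Combining finitely many $\mathcal O_P(1)$ contributions across the blocks $(j,l)$ yields the claim. There is no serious obstacle here; the only subtlety is recognizing that the MDS property in Assumption \ref{as:exo} is strong enough to handle the $H$-valued products $(X_{jt}-\mu_j)u_t(r)$, not merely scalar products, and this works because $X_{jt}$ itself lies in the conditioning $\sigma$-field used in the exogeneity assumption.
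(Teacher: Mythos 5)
Your proof is correct and follows essentially the same route as the paper: the same decomposition $\widehat f_{ljt}-s_{lj}f_{ljt}^{*}=\langle \mu_j-\widehat\mu_j,\widehat\psi_{lj}\rangle+\langle X_{jt}-\mu_j,\widehat\psi_{lj}-s_{lj}\psi_{lj}\rangle$, Cauchy-Schwarz against the $\mathcal O_P(T^{-1/2})$ primitives, and the observation that $u_t(r)$ and $(X_{jt}(s)-\mu_j(s))u_t(r)$ are zero-mean martingale difference sequences under Assumption \ref{as:exo}, giving $\mathcal O_P(\sqrt T)$ partial sums. Your explicit iterated-expectations justification of the MDS property simply spells out what the paper invokes implicitly.
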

\begin{proof}
Similarly to the proof of Lemma \ref{lem:aux1},
	\begin{align*}
	&\bigg\|\sum_{t=1}^T (\widehat z_t - S z_t^*) u_t(r) \bigg\|_2^2 \\
	&= \sum_{j=1}^J \sum_{l=1}^{K_j} \bigg(\sum_{t=1}^T \Big( \langle \mu_j - \widehat \mu_j, \widehat \psi_{lj} \rangle + \langle X_{jt} - \mu_j, \widehat \psi_{lj} - s_{lj} \psi_{lj} \rangle \Big) u_t(r) \bigg)^2 \\
	&\leq 2 \sum_{j=1}^J \sum_{l=1}^{K_j} \bigg( \sum_{t=1}^T  \langle \mu_j - \widehat \mu_j, \widehat \psi_{lj} \rangle u_t(r) \bigg)^2 + \bigg( \sum_{t=1}^T  \langle X_{jt} - \mu_j, \widehat \psi_{lj} - s_{lj} \psi_{lj} \rangle u_t(r) \bigg)^2 \\
	&\leq 2 \sum_{j=1}^J \sum_{l=1}^{K_j} \| \mu_j - \widehat \mu_j \|^2 \bigg(\sum_{t=1}^T u_t(r)\bigg)^2
+ \|\widehat \psi_{lj} - s_{lj} \psi_{lj} \|^2 \bigg\| \sum_{t=1}^T (X_{jt} - \mu_j) u_t(r) \bigg\|^2	
	 = \mathcal O_P(1)
	\end{align*}
by the consistency of the primitives, bounded fourth moments, Cauchy-Schwarz, and because $u_t(r)$ and $(X_{jt}(s) - \mu_j(s)) u_t(r)$ are zero mean martingale difference sequences for all $r,s \in [a,b]$ by Assumption \ref{as:exo}.
\end{proof}

\begin{lemma} \label{lem:aux7}
$\langle \widehat \mu_j - \mu_j, \widehat \psi_{lj} \rangle = \frac{1}{T} \sum_{t=1}^T s_{lj} f_{ljt}^* + \mathcal O_P(T^{-1})$.
\end{lemma}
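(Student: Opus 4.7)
The statement reduces to a straightforward decomposition, and the plan is essentially to split the estimated eigenfunction into its signed population version plus the small estimation error, then control the cross term by Cauchy--Schwarz using the $\sqrt T$ rates already established in Theorem~\ref{theo:primitves}.

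First I would write $\widehat\mu_j-\mu_j = \frac{1}{T}\sum_{t=1}^{T}(X_{jt}-\mu_j)$ and then split the inner product with $\widehat\psi_{lj}$ by inserting $\pm s_{lj}\psi_{lj}$:
\begin{align*}
\langle \widehat\mu_j-\mu_j,\widehat\psi_{lj}\rangle
&= \frac{1}{T}\sum_{t=1}^{T}\langle X_{jt}-\mu_j, s_{lj}\psi_{lj}\rangle + \frac{1}{T}\sum_{t=1}^{T}\langle X_{jt}-\mu_j, \widehat\psi_{lj}-s_{lj}\psi_{lj}\rangle \\
&= \frac{1}{T}\sum_{t=1}^{T} s_{lj} f_{ljt}^* + \langle \widehat\mu_j-\mu_j, \widehat\psi_{lj}-s_{lj}\psi_{lj}\rangle,
\end{align*}
using $f_{ljt}^*=\langle X_{jt}-\mu_j,\psi_{lj}\rangle$ in the first sum and linearity (pulling the sample average inside the inner product) in the second.

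The remainder to bound is $\langle \widehat\mu_j-\mu_j,\widehat\psi_{lj}-s_{lj}\psi_{lj}\rangle$. By Cauchy--Schwarz in $H$,
\[
\big|\langle \widehat\mu_j-\mu_j,\widehat\psi_{lj}-s_{lj}\psi_{lj}\rangle\big| \leq \|\widehat\mu_j-\mu_j\|\cdot \|\widehat\psi_{lj}-s_{lj}\psi_{lj}\|.
\]
Both factors are $\mathcal{O}_P(T^{-1/2})$ by Theorem~\ref{theo:primitves}(a) and~(d) (equivalently Lemmas~\ref{lem:mu} and~\ref{lem:eigenf}), so the product is $\mathcal{O}_P(T^{-1})$. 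Combining with the leading term gives the claim.

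There is no real obstacle here: the only subtlety is keeping track of the sign $s_{lj}$ and using the fact that both $\|\widehat\mu_j-\mu_j\|$ and $\|\widehat\psi_{lj}-s_{lj}\psi_{lj}\|$ are $\sqrt T$-consistent, which yields the desired $\mathcal{O}_P(T^{-1})$ remainder (one order faster than the $\mathcal{O}_P(T^{-1/2})$ one would get from either factor alone). This lemma will presumably be used in later auxiliary lemmas to replace the projection of $\widehat\mu_j-\mu_j$ onto $\widehat\psi_{lj}$ by a simple average of the (sign-rotated) latent factors, up to an asymptotically negligible term.
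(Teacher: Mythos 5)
Your proposal is correct and follows essentially the same route as the paper: split the inner product by inserting $\pm s_{lj}\psi_{lj}$, identify the leading term as $\frac{1}{T}\sum_t s_{lj}f_{ljt}^*$, and bound the remainder via Cauchy--Schwarz as $\|\widehat\mu_j-\mu_j\|\,\|\widehat\psi_{lj}-s_{lj}\psi_{lj}\| = \mathcal{O}_P(T^{-1})$. The only cosmetic difference is that the paper invokes the central limit theorem for $\alpha$-mixing sequences for $\|\frac{1}{T}\sum_t(X_{jt}-\mu_j)\|=\mathcal{O}_P(T^{-1/2})$, whereas you cite Theorem~\ref{theo:primitves}(a), which gives the same bound.
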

\begin{proof}
We have $\langle \widehat \mu_j - \mu_j, \widehat \psi_{lj} \rangle 
	= \frac{1}{T} \sum_{t=1}^T \langle X_{jt} - \mu_j, \widehat \psi_{lj} \rangle$ and $f_{ljt}^* = \langle X_{jt} - \mu_j,  \psi_{lj} \rangle$, which implies
\begin{align*}
	\langle \widehat \mu_j - \mu_j, \widehat \psi_{lj} \rangle 
	- \frac{1}{T} \sum_{t=1}^T s_{lj} f_{ljt}^* 
	&= \frac{1}{T} \sum_{t=1}^T \langle X_{jt} - \mu_j,  \widehat \psi_{lj} - s_{lj} \psi_{lj}\rangle \\
	&= \Big\langle \frac{1}{T} \sum_{t=1}^T (X_{jt} - \mu_j), \psi_{lj} - s_{lj} \psi_{lj} \Big\rangle \\
	&\leq \bigg\| \frac{1}{T} \sum_{t=1}^T (X_{jt} - \mu_j) \bigg\|\|\widehat \psi_{lj} - s_{lj} \psi_{lj}\|,
\end{align*}
because $\|\widehat \psi_{lj} - s_{lj} \psi_{lj}\| = \mathcal O_P(T^{-1/2})$ and $\| \frac{1}{\sqrt T} \sum_{t=1}^T (X_t - \mu)\| = \mathcal O_P(1)$ by the central limit theorem for alpha-mixing sequences.
\end{proof}

\begin{lemma} \label{lem:aux8}
$\langle s_{lj} \psi_{lj}, \widehat \psi_{lj} - s_{lj} \psi_{lj}  \rangle = \mathcal O_P(T^{-1})$.
\end{lemma}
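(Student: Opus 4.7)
The plan is to exploit a standard Pythagorean-type identity for unit-norm vectors. Recall that both $\psi_{lj}$ (normalized by Assumption \ref{as:identifiability2}(b)) and $\widehat\psi_{lj}$ (as an empirical eigenfunction of $\widehat D_j$) lie on the unit sphere in $H$, and that $s_{lj} = \mathrm{sign}(\langle\widehat\psi_{lj},\psi_{lj}\rangle)\in\{-1,+1\}$, so in particular $s_{lj}^2 = 1$.

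The key algebraic step is to expand the inner product using bilinearity and the unit-norm condition on $\psi_{lj}$:
\begin{equation*}
    \langle s_{lj}\psi_{lj},\; \widehat\psi_{lj} - s_{lj}\psi_{lj}\rangle
    = s_{lj}\langle\psi_{lj},\widehat\psi_{lj}\rangle - s_{lj}^2\|\psi_{lj}\|^2
    = s_{lj}\langle\psi_{lj},\widehat\psi_{lj}\rangle - 1.
\end{equation*}
Next I would use the fact that, for any two unit-norm functions,
\begin{equation*}
    \|\widehat\psi_{lj} - s_{lj}\psi_{lj}\|^2
    = \|\widehat\psi_{lj}\|^2 - 2s_{lj}\langle\widehat\psi_{lj},\psi_{lj}\rangle + s_{lj}^2\|\psi_{lj}\|^2
    = 2 - 2 s_{lj}\langle\widehat\psi_{lj},\psi_{lj}\rangle,
\end{equation*}
which rearranges to $s_{lj}\langle\psi_{lj},\widehat\psi_{lj}\rangle = 1 - \tfrac{1}{2}\|\widehat\psi_{lj} - s_{lj}\psi_{lj}\|^2$. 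Substituting this into the previous display yields the clean identity
\begin{equation*}
    \langle s_{lj}\psi_{lj},\; \widehat\psi_{lj} - s_{lj}\psi_{lj}\rangle
    = -\tfrac{1}{2}\|\widehat\psi_{lj} - s_{lj}\psi_{lj}\|^2.
\end{equation*}

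Finally, since $s_{lj}^2 = 1$ implies $\|\widehat\psi_{lj} - s_{lj}\psi_{lj}\| = \|s_{lj}\widehat\psi_{lj} - \psi_{lj}\|$, Theorem \ref{theo:primitves}(d) gives this quantity as $\mathcal O_P(T^{-1/2})$, so its square is $\mathcal O_P(T^{-1})$ and the result follows. There is no real obstacle here: the whole argument is a two-line algebraic manipulation combined with the eigenfunction rate from Theorem \ref{theo:primitves}. The only subtlety worth flagging is the cancellation of the first-order term in the Taylor-like expansion of $\langle\widehat\psi_{lj},\psi_{lj}\rangle$ around $1$, which is precisely the source of the $T^{-1}$ rate (as opposed to the naive $T^{-1/2}$ bound one would get from Cauchy--Schwarz).
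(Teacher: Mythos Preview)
Your proof is correct and follows essentially the same approach as the paper: both derive the identity $\langle s_{lj}\psi_{lj},\widehat\psi_{lj}-s_{lj}\psi_{lj}\rangle = -\tfrac{1}{2}\|\widehat\psi_{lj}-s_{lj}\psi_{lj}\|^2$ from the unit-norm property of the eigenfunctions and then invoke the $\mathcal O_P(T^{-1/2})$ rate from Theorem~\ref{theo:primitves}(d). Your presentation is marginally more direct, but the underlying argument is identical.
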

\begin{proof}
We have
\begin{align*}
	&2 \langle s_{lj} \psi_{lj},  \widehat \psi_{lj} - s_{lj} \psi_{lj} \rangle + \|\widehat \psi_{lj} - s_{lj} \psi_{lj}\|^2  \\
	&= \langle 2 s_{lj} \psi_{lj}, \widehat \psi_{lj} - s_{lj} \psi_{lj} \rangle + \langle \widehat \psi_{lj} - s_{lj}  \psi_{lj}, \widehat \psi_{lj} - s_{lj} \psi_{lj} \rangle \\
	&= \langle \widehat \psi_{lj} + s_{lj}  \psi_{lj}, \widehat \psi_{lj} - s_{lj}  \psi_{lj} \rangle \\
	&= \langle \widehat \psi_{lj}, \widehat \psi_{lj} \rangle
	- \langle \widehat \psi_{lj}, s_{lj} \psi_{lj} \rangle
	+ \langle s_{lj} \psi_{lj}, \widehat \psi_{lj} \rangle
	- \langle s_{lj} \psi_{lj}, s_{lj} \psi_{lj} \rangle = 0
\end{align*}
by the orthonormality of the eigenfunctions.
Therefore,
\begin{align*}
	2 \langle s_{lj} \psi_{lj}, \widehat \psi_{lj} - s_{lj} \psi_{lj} \rangle = - \|\widehat \psi_{lj} - s_{lj}\psi_{lj}\|^2 = \mathcal O_P(T^{-1})
\end{align*}
by the $\sqrt T$-consistency of the primitives.
This trick is inspired by the approach in \textcite{Kokoszka.2013}.
\end{proof}

\begin{lemma} \label{lem:aux9}
$$
\langle \psi_{mj}, (\widehat D_j - D_j)(\psi_{lj}) \rangle = 
\frac{1}{T} \sum_{t=1}^T \Big( (f_{mjt}^* y_{ljt} - E[f_{mjt}^* y_{ljt}]) + (f_{ljt}^* y_{mjt} - E[f_{ljt}^* y_{mjt}]) \Big) + \mathcal O_P(T^{-1}).
$$
\end{lemma}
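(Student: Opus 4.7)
The plan is to exploit the factorisation $D_j = C_j C_j^*$ and its sample analogue $\widehat D_j = \widehat C_j \widehat C_j^*$, which reduces the Hilbert-Schmidt inner product to an inner product of functions on $[a,b]$. Writing $\gamma_{mj} := C_j^* \psi_{mj}$, one immediately gets $\langle \psi_{mj}, D_j \psi_{lj}\rangle = \langle \gamma_{mj}, \gamma_{lj}\rangle$ and $\langle \psi_{mj}, \widehat D_j \psi_{lj}\rangle = \langle \widehat C_j^* \psi_{mj}, \widehat C_j^* \psi_{lj}\rangle$. Setting $\Delta_{mj} := \widehat C_j^* \psi_{mj} - \gamma_{mj}$ and expanding the product yields the three-term decomposition
\begin{equation*}
\langle \psi_{mj}, (\widehat D_j - D_j)\psi_{lj}\rangle = \langle \gamma_{mj}, \Delta_{lj}\rangle + \langle \Delta_{mj}, \gamma_{lj}\rangle + \langle \Delta_{mj}, \Delta_{lj}\rangle.
\end{equation*}

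Next I would compute $\widehat C_j^*\psi_{mj}$ explicitly. Substituting the definition of $\widehat c_j(r,s)$, writing $Y_t(q) - \overline Y(q) = \widetilde Y_t(q) - \overline{\widetilde Y}(q)$ with $\widetilde Y_t(q) = Y_t(q) - E[Y_t(q)]$, and using $\langle \widehat\mu_j - \mu_j, \psi_{mj}\rangle = \bar f_{mj}$, where $\bar f_{mj} := T^{-1}\sum_t f_{mjt}^*$, produces
\begin{equation*}
\Delta_{mj}(q) = \frac{1}{T}\sum_{t=1}^T \bigl(f_{mjt}^* \widetilde Y_t(q) - E[f_{mjt}^* \widetilde Y_t(q)]\bigr) - \bar f_{mj}\,\overline{\widetilde Y}(q).
\end{equation*}
The first piece is $\mathcal O_P(T^{-1/2})$ in $L^2$ by the same $\alpha$-mixing argument already used in Lemma \ref{lem:cov} (since $f_{mjt}^*\widetilde Y_t$ inherits bounded fourth moments and mixing from $X_{jt}$ and $Y_t$), while $\bar f_{mj}\,\overline{\widetilde Y}(q)$ is $\mathcal O_P(T^{-1})$ in $L^2$ as the product of two $\mathcal O_P(T^{-1/2})$ scalars/functions, invoking Lemma \ref{lem:mu}. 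Hence $\|\Delta_{mj}\| = \mathcal O_P(T^{-1/2})$, and Cauchy-Schwarz already places the quadratic remainder $\langle \Delta_{mj}, \Delta_{lj}\rangle$ in $\mathcal O_P(T^{-1})$.

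For the two cross terms, I would insert the explicit form of $\Delta_{lj}$ into $\langle \gamma_{mj}, \Delta_{lj}\rangle$ and exchange the finite sum with the $q$-integration to get
\begin{equation*}
\langle \gamma_{mj}, \Delta_{lj}\rangle = \frac{1}{T}\sum_{t=1}^T \bigl(f_{ljt}^* \langle \gamma_{mj}, \widetilde Y_t\rangle - E[f_{ljt}^* \langle \gamma_{mj}, \widetilde Y_t\rangle]\bigr) - \bar f_{lj}\,\langle \gamma_{mj}, \overline{\widetilde Y}\rangle.
\end{equation*}
Boundedness of $\|\gamma_{mj}\|$ together with Lemma \ref{lem:mu} and the $\sqrt T$-rate for $\bar f_{lj}$ pushes the last summand into $\mathcal O_P(T^{-1})$. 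Identifying $y_{mjt}$ as the projection $\langle \gamma_{mj}, \widetilde Y_t\rangle$ (consistent with the sample analogue $\widehat y_{ljt} = \langle Y_t-\overline Y,\widehat\gamma_{lj}\rangle$ used elsewhere in the proof) turns the leading term into $T^{-1}\sum_t(f_{ljt}^* y_{mjt} - E[f_{ljt}^* y_{mjt}])$, and a symmetric calculation handles $\langle \Delta_{mj}, \gamma_{lj}\rangle$, yielding the remaining $f_{mjt}^* y_{ljt}$ summand.

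The main obstacle is purely bookkeeping: one has to verify carefully that each product of two empirical centering terms—$\bar f_{mj}\,\overline{\widetilde Y}$, $\bar f_{lj}\,\langle\gamma_{mj},\overline{\widetilde Y}\rangle$, and the $\Delta_{mj}\Delta_{lj}$ cross product—actually lives in $\mathcal O_P(T^{-1})$ in the correct norm once one combines the $L^2$ rates from Lemma \ref{lem:mu} with Cauchy-Schwarz and bounded fourth moments. No new probabilistic input beyond what is already established in Lemmas \ref{lem:mu}--\ref{lem:cross-cov} is needed; the argument is algebraic expansion plus routine order-of-magnitude bounds.
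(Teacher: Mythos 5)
Your proposal is correct and follows essentially the same route as the paper's proof: both exploit the factorisation $\widehat D_j=\widehat C_j\widehat C_j^*$, split $\langle\psi_{mj},(\widehat D_j-D_j)\psi_{lj}\rangle$ into two linear terms plus a quadratic remainder that is $\mathcal O_P(T^{-1})$ via Cauchy--Schwarz and the $\sqrt T$-consistency of $\widehat C_j$, show that the empirical demeaning contributes only $\mathcal O_P(T^{-1})$ (the paper routes this through the auxiliary operator $\widetilde C_j$ with true means, you carry it inside $\Delta_{mj}$ — a purely cosmetic difference), and identify the leading terms as centered sample averages of $f_{mjt}^*\langle \gamma_{lj},Y_t-E[Y_t]\rangle$. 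Your reading of $y_{ljt}$ as the projection $\langle Y_t-E[Y_t],\gamma_{lj}\rangle$ is exactly what the paper's own computation (and the sample analogue $\widehat y_{ljt}=\langle Y_t-\overline Y,\widehat\gamma_{lj}\rangle$) delivers, so the two arguments agree in substance.
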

\begin{proof}
	We have $\widehat D_j = \widehat C_j \widehat C_j^*$ and $D_j = C_j C_j^*$. Therefore,
\begin{align*}
	\widehat D_j - D_j &= (\widehat C_j - C_j)(\widehat C_j^* - C_j^*) + (\widehat C_j - C_j) C_j^* + C_j(\widehat C_j^* - C_j^*) \\
	&= (\widehat C_j - C_j) C_j^* + C_j(\widehat C_j^* - C_j^*) + \mathcal O_P(T^{-1})
\end{align*}
because $\|(\widehat C_j - C_j)(\widehat C_j^* - C_j^*)\|_{\mathcal S} \leq \|\widehat C_j - C_j\|_{\mathcal S}^2 = \mathcal O_P(T^{-1})$.
Furthermore, let $\widetilde C_j$ be the sample cross-covariance operator with true mean, which is the integral operator with the kernel function $\widetilde c_j(r,s) = \frac{1}{T} \sum_{t=1}^T (X_{jt}(r) - \mu_j(r))(Y_t(s) - E[Y_t(s)])$.
Then, 
$$\widehat c_j(r,s) - \widetilde c_j(r,s) = (\widehat \mu_j(r) - \mu_j(r))(E[Y_t(s)] - \overline Y(s)),$$
and $\|\widehat C_j - \widetilde C_j\|_{\mathcal S} = \mathcal O_P(T^{-1})$ by the $\sqrt T$-consistency of the primitives, which implies
$$
\widehat D_j - D_j =  (\widetilde C_j - C_j) C_j^* + C_j(\widetilde C_j^* - C_j^*) + \mathcal O_P(T^{-1}).
$$
Therefore,
\begin{align*}
	\langle \psi_{mj}, (\widehat D_j - D_j)(\psi_{lj}) \rangle 
	&= \langle \psi_{mj}, (\widetilde C_j - C_j) C_j^*(\psi_{lj}) \rangle + \langle \psi_{mj}, C_j(\widetilde C_j^* - C_j^*)(\psi_{lj}) \rangle + \mathcal O_P(T^{-1}) \\
	&= \langle (\widetilde C_j^* - C_j^*)(\psi_{mj}),  C_j^*(\psi_{lj}) \rangle + \langle C_j^*(\psi_{mj}), (\widetilde C_j^* - C_j^*)(\psi_{lj}) \rangle \\
	&=  \langle \psi_{mj}, (\widetilde C_j - C_j)(\gamma_{lj}) \rangle + \langle \psi_{lj}, (\widetilde C_j - C_j)(\gamma_{mj}) \rangle + \mathcal O_P(T^{-1}),
\end{align*}
where the last step follows with $\gamma_{lj}(q) = E[f_{ljt}^* Y_t(q)]$ because $\psi_{1j}, \ldots, \psi_{K_j j}$ are the left-singular functions of $C_j$ with $c_j(r,s) = (\Psi_j(s))'E[F_{jt}^* Y_t(r)]$ and $C_j^*(\psi_{lj})(q) = E[f_{ljt}^* Y_t(q)]$.
The kernel function of the integral operator $\widetilde C_j - C_j$ is
$$
	\widetilde c_j(r,s) - c_j(r,s) = \frac{1}{T} \sum_{t=1}^T w_{jt}(r,s)
$$
with 
$$
w_{jt}(r,s) = (X_{jt}(r) - \mu_j(r))(Y_t(s) - E[Y_t(s)]) - E[(X_{jt}(r) - \mu_j(r))(Y_t(s) - E[Y_t(s)])].
$$
Therefore, with $f_{mjt}^* = \langle X_{jt} - \mu_j, \psi_{mj} \rangle$ and $y_{ljt} = \langle Y_t - E[Y_t], \psi_{lj} \rangle$,
$$
	\langle \psi_{mj}, (\widetilde C_j - C_j)(\gamma_{lj}) \rangle = \frac{1}{T} \sum_{t=1}^T \Big( f_{mjt}^* y_{ljt} - E[f_{mjt}^* y_{ljt}] \Big),
$$
and the result follows.
\end{proof}

\begin{lemma} \label{lem:aux10}
For $m \neq l$, 
$$\langle s_{mj} \psi_{mj}, \widehat \psi_{lj} - s_{lj} \psi_{lj}  \rangle = \frac{1}{T} \sum_{t=1}^T \frac{ g_{lmjt}}{\lambda_{lj} - \lambda_{mj}} + \mathcal O_P(T^{-1}).$$
\end{lemma}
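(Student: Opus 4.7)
The plan is to derive the expansion by exploiting the eigenequation of $\widehat D_j$ projected onto the population eigenfunction $\psi_{mj}$, and then substitute Lemma \ref{lem:aux9} for the leading term.

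Concretely, I would start from $\widehat D_j(\widehat \psi_{lj}) = \widehat \lambda_{lj} \widehat \psi_{lj}$ and take the inner product with $\psi_{mj}$. Using self-adjointness of $D_j$ together with $D_j(\psi_{mj}) = \lambda_{mj} \psi_{mj}$, the identity
$$(\widehat \lambda_{lj} - \lambda_{mj}) \langle \psi_{mj}, \widehat \psi_{lj} \rangle = \langle \psi_{mj}, (\widehat D_j - D_j)(\widehat \psi_{lj}) \rangle$$
emerges. I would next write $\widehat \psi_{lj} = s_{lj} \psi_{lj} + (\widehat \psi_{lj} - s_{lj} \psi_{lj})$ on the right-hand side. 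The contribution of the second summand is $\mathcal O_P(T^{-1})$ because it is bounded by $\|\widehat D_j - D_j\|_{\mathcal S} \|\widehat \psi_{lj} - s_{lj} \psi_{lj}\|$, and both factors are $\mathcal O_P(T^{-1/2})$ by Theorem \ref{theo:primitves}.

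Then I would invert the scalar $\widehat \lambda_{lj} - \lambda_{mj} = (\lambda_{lj} - \lambda_{mj}) + \mathcal O_P(T^{-1/2})$, which is legitimate since $\lambda_{lj} \neq \lambda_{mj}$ is bounded away from zero by the strict ordering in Assumption \ref{as:identifiability2}(b). Using $(a + \mathcal O_P(T^{-1/2}))^{-1} = a^{-1} + \mathcal O_P(T^{-1/2})$ and the fact that the numerator $s_{lj}\langle \psi_{mj}, (\widehat D_j - D_j)(\psi_{lj}) \rangle$ is itself $\mathcal O_P(T^{-1/2})$ (by Lemma \ref{lem:aux9}), the cross term contributes $\mathcal O_P(T^{-1})$. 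This leaves
$$\langle \psi_{mj}, \widehat \psi_{lj} \rangle = \frac{s_{lj}}{\lambda_{lj} - \lambda_{mj}} \langle \psi_{mj}, (\widehat D_j - D_j)(\psi_{lj}) \rangle + \mathcal O_P(T^{-1}).$$
Finally, since $m \neq l$ implies $\langle \psi_{mj}, s_{lj} \psi_{lj} \rangle = 0$ by orthonormality, multiplying by $s_{mj}$ gives $\langle s_{mj} \psi_{mj}, \widehat \psi_{lj} - s_{lj} \psi_{lj} \rangle = s_{mj}\langle \psi_{mj}, \widehat \psi_{lj} \rangle$, and Lemma \ref{lem:aux9} converts the operator-perturbation inner product into the sample average of $(f_{mjt}^* y_{ljt} - E[f_{mjt}^* y_{ljt}]) + (f_{ljt}^* y_{mjt} - E[f_{ljt}^* y_{mjt}])$. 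The $s_{mj} s_{lj}$ prefactor combines with this sum to yield exactly $g_{lmjt}$.

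The only subtle step is the order bookkeeping around the inverted denominator: one must verify that the perturbation of $\widehat \lambda_{lj} - \lambda_{mj}$ does not inflate the remainder beyond $\mathcal O_P(T^{-1})$. This works precisely because the numerator is already $\mathcal O_P(T^{-1/2})$, so the $\mathcal O_P(T^{-1/2})$ denominator perturbation multiplies into a $\mathcal O_P(T^{-1})$ remainder, matching the claimed rate.
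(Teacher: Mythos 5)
Your proposal is correct and follows essentially the same route as the paper's proof: project the sample eigenequation onto $\psi_{mj}$, use self-adjointness of $D_j$ to obtain $(\widehat \lambda_{lj} - \lambda_{mj})\langle \psi_{mj}, \widehat\psi_{lj}\rangle = \langle \psi_{mj}, (\widehat D_j - D_j)(\widehat\psi_{lj})\rangle$, bound the $\widehat\psi_{lj} - s_{lj}\psi_{lj}$ contribution by $\|\widehat D_j - D_j\|_{\mathcal S}\,\|\widehat\psi_{lj} - s_{lj}\psi_{lj}\| = \mathcal O_P(T^{-1})$, replace $\widehat\lambda_{lj}$ by $\lambda_{lj}$ at cost $\mathcal O_P(T^{-1})$ since the leading term is $\mathcal O_P(T^{-1/2})$, and invoke Lemma \ref{lem:aux9}. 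Your order bookkeeping for the inverted denominator is exactly the paper's argument phrased multiplicatively, so no gap remains.
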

\begin{proof}
The eigenpair $(\lambda_{mj}, \psi_{mj})$ is defined through the equation $D_j(\psi_{mj})(r) = \lambda_{mj} \psi_{mj}(r)$, where the operator $D_j$ is symmetric with the property that $\langle D_j(x), y \rangle = \langle x, D_j(y) \rangle$ for all $x,y \in H$.
Therefore, 
$$
\lambda_{mj} \langle \psi_{mj}, \widehat \psi_{lj}\rangle
	= \langle D_j(\psi_{mj}), \widehat \psi_{lj}  \rangle
	= \langle \psi_{mj}, D_j(\widehat \psi_{lj}) \rangle.
$$
The sample eigenpair $(\widehat \lambda_{lj}, \widehat \psi_{lj})$ is defined through the equation $\widehat D_j(\widehat \psi_{lj})(r) = \widehat \lambda_{lj} \widehat \psi_{lj}(r)$.
Then,
\begin{align*}
	&(\widehat \lambda_{lj} - \lambda_{mj}) \langle s_{mj} \psi_{mj}, \widehat \psi_{lj} - s_{lj} \psi_{lj}  \rangle \\
	&= (\widehat \lambda_{lj} - \lambda_{mj}) \langle s_{mj} \psi_{mj}, \widehat \psi_{lj} \rangle \\
	&= \langle s_{mj} \psi_{mj}, \widehat \lambda_{lj} \widehat \psi_{lj} \rangle - \lambda_{mj} \langle s_{mj} \psi_{mj}, \widehat \psi_{lj} \rangle \\
	&= \langle s_{mj} \psi_{mj}, \widehat D_j(\widehat \psi_{lj}) \rangle - \langle s_{mj} \psi_{mj}, D_j(\widehat \psi_{lj}) \rangle \\
	&= \langle s_{mj} \psi_{mj}, (\widehat D_j - D_j)(\widehat \psi_{lj}) \rangle \\
	&= \langle s_{mj} \psi_{mj}, (\widehat D_j - D_j)(s_{lj} \psi_{lj}) \rangle  + \langle s_{mj} \psi_{mj}, (\widehat D_j - D_j)(\widehat \psi_{lj} - s_{lj} \psi_{lj}) \rangle,
\end{align*}
where
$$
|\langle s_{mj} \psi_{mj}, (\widehat D_j - D_j)(\widehat \psi_{lj} - s_{lj} \psi_{lj}) \rangle| \leq \|(\widehat D_j - D_j)(\widehat \psi_{lj} - s_{lj} \psi_{lj})\| \leq \|\widehat D_j - D_j\|_{\mathcal S} \|\widehat \psi_{lj} - s_{lj} \psi_{lj}\|
$$
which is $\mathcal O_P(T^{-1})$ by the consistency of the primitives.
Therefore, with Lemma \ref{lem:aux9},
\begin{align*}
	&(\widehat \lambda_{lj} - \lambda_{mj}) \langle s_{mj} \psi_{mj}, \widehat \psi_{lj} - s_{lj} \psi_{lj}  \rangle \\
	&= \langle s_{mj} \psi_{mj}, (\widehat D_j - D_j)(s_{lj} \psi_{lj}) \rangle  + \mathcal O_P(T^{-1}) \\
	&= \frac{1}{T} \sum_{t=1}^T s_{mj} s_{lj}\Big( (f_{mjt}^* y_{ljt} - E[f_{mjt}^* y_{ljt}]) + (f_{ljt}^* y_{mjt} - E[f_{ljt}^* y_{mjt}]) \Big) + \mathcal O_P(T^{-1}).
\end{align*}
This term is $\mathcal O_P(T^{-1/2})$ because $(f_{mjt}^* y_{ljt} - E[f_{mjt}^* y_{ljt}]) + (f_{ljt}^* y_{mjt} - E[f_{ljt}^* y_{mjt}])$ is mean zero $\alpha$-mixing of size $-\nu/(\nu-2)$ and satisfies the conditions for the central limit theorem for alpha-mixing sequences. Therefore, with $|\widehat \lambda_{lj} - \lambda_{lj}| = \mathcal O_P(T^{-1/2})$, 
$$
(\widehat \lambda_{lj} - \lambda_{lj}) \langle s_{mj} \psi_{mj}, \widehat \psi_{lj} - s_{lj} \psi_{lj} \rangle = \mathcal O_P(T^{-1}),
$$
and we get
\begin{align*}
	&(\lambda_{lj} - \lambda_{mj}) \langle s_{mj} \psi_{mj}, \widehat \psi_{lj} - s_{lj} \psi_{lj}  \rangle \\
	&= \frac{1}{T} \sum_{t=1}^T s_{mj} s_{lj}\Big( (f_{mjt}^* y_{ljt} - E[f_{mjt}^* y_{ljt}]) + (f_{ljt}^* y_{mjt} - E[f_{ljt}^* y_{mjt}]) \Big) + \mathcal O_P(T^{-1}).
\end{align*}
Then, the result follows.
\end{proof}

\begin{lemma} \label{lem:aux11}
	$\sum_{t=1}^T \widehat z_t \langle \varepsilon_{jt}^*, s_{lj} \psi_{lj} - \widehat \psi_{lj} \rangle = \mathcal O_P(1)$.
\end{lemma}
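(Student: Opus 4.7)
The plan is first to exploit the orthogonality $\langle\varepsilon_{jt}^*,\psi_{mj}\rangle=0$ for every $m=1,\ldots,K_j$, which holds by construction of $\varepsilon_{jt}^*=X_{jt}-\mu_j-(F_{jt}^*)'\Psi_j$ as the residual from projecting $X_{jt}-\mu_j$ onto $\mathrm{span}\{\psi_{1j},\ldots,\psi_{K_j j}\}$. Writing $\delta_{lj}:=\widehat\psi_{lj}-s_{lj}\psi_{lj}$ and letting $P^\perp$ denote the orthogonal projection onto the complement of this span, this immediately gives $\langle\varepsilon_{jt}^*,s_{lj}\psi_{lj}-\widehat\psi_{lj}\rangle=-\langle\varepsilon_{jt}^*,P^\perp\delta_{lj}\rangle$. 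I would then split $\widehat z_t=Sz_t^*+(\widehat z_t-Sz_t^*)$; the piece involving $\widehat z_t-Sz_t^*$ is controlled by Cauchy--Schwarz using $\sum_{t=1}^T\|\widehat z_t-Sz_t^*\|_2^2=\mathcal O_P(1)$ (from summing the per-$t$ bound in the proof of Lemma~\ref{lem:aux1}) together with $\sum_{t=1}^T\langle\varepsilon_{jt}^*,\delta_{lj}\rangle^2\leq\|\delta_{lj}\|^2\sum_{t=1}^T\|\varepsilon_{jt}^*\|^2=\mathcal O_P(T^{-1})\,\mathcal O_P(T)=\mathcal O_P(1)$, so this piece contributes $\mathcal O_P(1)$.

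The leading piece $-\sum_{t=1}^T Sz_t^*\langle\varepsilon_{jt}^*,P^\perp\delta_{lj}\rangle$ requires a sharper argument, since naive Cauchy--Schwarz only delivers $\mathcal O_P(\sqrt T)$. I plan to apply $P^\perp$ to the eigen-equation $\widehat D_j\widehat\psi_{lj}=\widehat\lambda_{lj}\widehat\psi_{lj}$ and use $P^\perp D_j=0$ (because $\mathrm{range}(D_j)\subseteq\mathrm{span}\{\psi_{1j},\ldots,\psi_{K_j j}\}$) to obtain the perturbation identity
\[
\widehat\lambda_{lj}\,P^\perp\delta_{lj}=P^\perp(\widehat D_j-D_j)(s_{lj}\psi_{lj})+P^\perp(\widehat D_j-D_j)(\delta_{lj}).
\]
The second term has $H$-norm $\|\widehat D_j-D_j\|_{\mathcal S}\|\delta_{lj}\|=\mathcal O_P(T^{-1})$ by Theorem~\ref{theo:primitves} and its contribution to the sum is $\mathcal O_P(1)$ after Cauchy--Schwarz with $(\sum_t\|Sz_t^*\|_2^2)^{1/2}(\sum_t\|\varepsilon_{jt}^*\|^2)^{1/2}=\mathcal O_P(T)$. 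For the leading term, I would expand $\widehat D_j=\widehat C_j\widehat C_j^*$, and use once more the orthogonality $\langle\varepsilon_{jt}^*,X_{ju}-\mu_j\rangle=\langle\varepsilon_{jt}^*,\varepsilon_{ju}^*\rangle$ together with $\frac{1}{T}\sum_u(Y_u-\overline Y)=0$ to obtain
\[
\langle\varepsilon_{jt}^*,\widehat D_j(s_{lj}\psi_{lj})\rangle=\frac{s_{lj}}{T}\sum_{u=1}^T \widetilde y_{lju}\,\langle\varepsilon_{jt}^*,\varepsilon_{ju}^*\rangle,
\]
where $\widetilde\gamma_{lj}(q)=\frac{1}{T}\sum_v f_{ljv}^*(Y_v(q)-\overline Y(q))$ and $\widetilde y_{lju}=\langle Y_u-\overline Y,\widetilde\gamma_{lj}\rangle$.

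Substituting back, the main contribution reduces to the double sum $-\widehat\lambda_{lj}^{-1}s_{lj}T^{-1}\sum_{t,u}Sz_t^*\widetilde y_{lju}\langle\varepsilon_{jt}^*,\varepsilon_{ju}^*\rangle$, which I would split into diagonal and off-diagonal parts. The diagonal $T^{-1}\sum_t Sz_t^*\widetilde y_{ljt}\|\varepsilon_{jt}^*\|^2$ is $\mathcal O_P(1)$ by the law of large numbers for $\alpha$-mixing sequences under Assumption~\ref{as:estimation}(a). The main obstacle will be the off-diagonal contribution $T^{-1}\sum_{t\neq u}Sz_t^*\widetilde y_{lju}\langle\varepsilon_{jt}^*,\varepsilon_{ju}^*\rangle$: with $\mathcal O(T^2)$ summands, showing that it is $\mathcal O_P(T)$ requires second-moment computations that exploit the mixing inequality from the proof of Lemma~\ref{lem:mu} to control the decay of fourfold cross-covariances in $|t-u|$, $|t'-u'|$, and $|t-t'|$, leveraging that $\widetilde y_{lju}$ approximates the deterministic quantity $h_{lju}=\lambda_{lj}^{-1}\langle\gamma_{lj},Y_u-E[Y_u]\rangle$ up to an $\mathcal O_P(T^{-1/2})$ error, so that the off-diagonal sum centered at its stationary expectation is controlled by standard second-moment bounds for $\alpha$-mixing processes with bounded fourth moments.
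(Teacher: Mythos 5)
Your handling of the $\widehat z_t - Sz_t^*$ piece is fine, but the premise that drives the rest of your plan --- that ``naive Cauchy--Schwarz only delivers $\mathcal O_P(\sqrt T)$'' for the leading piece, so that a structural eigen-equation expansion of $P^\perp\delta_{lj}$ is needed --- is incorrect, and it causes you to miss the one idea the paper's proof actually rests on. The term-by-term bound is indeed only $\mathcal O_P(\sqrt T)$, but Cauchy--Schwarz applied at the aggregated level, i.e.\ writing $\sum_{t}\widehat z_t\langle\varepsilon_{jt}^*,s_{lj}\psi_{lj}-\widehat\psi_{lj}\rangle$ as the $H$-inner product of the vector-valued function $s\mapsto\sum_t\widehat z_t\varepsilon_{jt}^*(s)$ with the single function $s_{lj}\psi_{lj}-\widehat\psi_{lj}$, gives $\|\sum_t\widehat z_t\langle\varepsilon_{jt}^*,s_{lj}\psi_{lj}-\widehat\psi_{lj}\rangle\|_2^2\leq\|s_{lj}\psi_{lj}-\widehat\psi_{lj}\|^2\int_a^b\|\sum_t\widehat z_t\varepsilon_{jt}^*(s)\|_2^2\,\mathrm ds$. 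The decisive ingredient is the \emph{population} orthogonality $E[z_t^*\varepsilon_{jt}^*(s)]=0$, which follows from Assumption~\ref{as:identifiability2}(a) (in particular $E[f_{mjt}^*\varepsilon_{jt}^*(s)]=0$, since $f_{mjt}^*=f_{mjt}+\langle\varepsilon_{jt},\psi_{mj}\rangle$ and $\varepsilon_{jt}^*=\varepsilon_{jt}-\sum_m\langle\varepsilon_{jt},\psi_{mj}\rangle\psi_{mj}$). Then $z_t^*\varepsilon_{jt}^*(s)$ is a zero-mean mixing sequence, the CLT for $\alpha$-mixing sequences gives $\int_a^b\|\sum_t\widehat z_t\varepsilon_{jt}^*(s)\|_2^2\,\mathrm ds=\mathcal O_P(T)$ (with Lemma~\ref{lem:aux1} to replace $Sz_t^*$ by $\widehat z_t$), and combined with $\|s_{lj}\psi_{lj}-\widehat\psi_{lj}\|=\mathcal O_P(T^{-1/2})$ from Theorem~\ref{theo:primitves} this yields $\mathcal O_P(1)$ in three lines. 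You only invoke the pathwise orthogonality $\langle\varepsilon_{jt}^*,\psi_{mj}\rangle=0$, which by itself does not deliver the required rate.

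Beyond being far more involved than necessary, your alternative route leaves its decisive step unproven and, as sketched, it would not close. The off-diagonal double sum $T^{-1}\sum_{t\neq u}Sz_t^*\widetilde y_{lju}\langle\varepsilon_{jt}^*,\varepsilon_{ju}^*\rangle$ is exactly where all the difficulty sits, and your sketch does not address two problems. First, its expectation is only $\mathcal O(T)$ if one uses the population orthogonalities $E[z_t^*\varepsilon_{jt}^*(s)]=0$ and $E[(Y_u(q)-E[Y_u(q)])\varepsilon_{ju}^*(s)]=0$: without them, the mixing covariance inequality gives no decay of $E\big[z_t^*h_{lju}\langle\varepsilon_{jt}^*,\varepsilon_{ju}^*\rangle\big]$ in $|t-u|$, so centering the sum at its stationary expectation leaves a deterministic term of potential order $T^2$; these orthogonalities never appear in your argument. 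Second, the promised ``standard second-moment bounds'' would require computing the variance of sums of products of four primitive quantities, i.e.\ eighth-order moments, while Assumption~\ref{as:estimation}(a) only guarantees $\nu\geq 4$ finite moments, so the step does not follow from the stated assumptions without additional work (truncation or a different argument). In short, the key missing idea is $E[z_t^*\varepsilon_{jt}^*(s)]=0$ plus the aggregated Cauchy--Schwarz step; with it the lemma is immediate, and without it your substitute argument has a genuine gap at its central step.
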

\begin{proof}
With $\int_a^b \beta_j(r,q) E[X_{jt}(q) \varepsilon_{jt}(s)] \dd q = 0$ it follows that $E[f_{mjt}^* \varepsilon_{jt}(s)] = 0$. Further, because $f_{mjt}^* = f_{mjt} + \langle \varepsilon_{jt}, \psi_{mj} \rangle$ and $\varepsilon_{jt}^*(s) = \varepsilon_{jt}(s) - \sum_{m=1}^{K_j} \langle \varepsilon_{jt}, \psi_{mj} \rangle \psi_{mj}(s)$ we have $E[f_{mjt}^* \varepsilon_{jt}^*(s)] = 0$.
Therefore, $z_t^* \varepsilon_{jt}^*(s)$ is a zero mean alpha-mixing sequence of order $-\nu/(\nu-2)$ with 
$$
		 \int_a^b \bigg\| \frac{1}{\sqrt T} \sum_{t=1}^T z_t^* \varepsilon_{jt}^*(s) \bigg\|_2^2 \dd s = \mathcal O_P(1),
$$
which follows by the central limit theorem for alpha-mixing sequences.
Furthermore, with Lemma \ref{lem:aux1},
$$
		 \int_a^b \bigg\| \frac{1}{\sqrt T} \sum_{t=1}^T \widehat z_t \varepsilon_{jt}^*(s) \bigg\|_2^2 \dd s = \mathcal O_P(1).
$$
Consequently, by the $\sqrt T$-consistency of the primitives,
\begin{align*}
	\bigg\| \sum_{t=1}^T \widehat z_t \langle \varepsilon_{jt}^*, s_{lj} \psi_{lj} - \widehat \psi_{lj} \rangle \bigg\|_2^2 \leq \|s_{lj} \psi_{lj} - \widehat \psi_{lj} \rangle\|^2 \int_a^b \bigg\| \sum_{t=1}^T \widehat z_t \varepsilon_{jt}^*(s) \bigg\|_2^2 \dd s = \mathcal O_P(1).
\end{align*}
\end{proof}

\begin{lemma} \label{lem:aux12}
$\frac{1}{\sqrt T} \sum_{t=1}^T \widehat z_t \langle \widehat \mu_j - \mu_j, \widehat \psi_{lj} \rangle  = \frac{1}{\sqrt T} \sum_{t=1}^T S E[z_t^*] s_{lj} f_{ljt}^* + o_P(1)$.
\end{lemma}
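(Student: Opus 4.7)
The key observation is that the scalar $\langle \widehat \mu_j - \mu_j, \widehat \psi_{lj} \rangle$ does not depend on $t$, so the left-hand side factors as
$$
\frac{1}{\sqrt T} \sum_{t=1}^T \widehat z_t \langle \widehat \mu_j - \mu_j, \widehat \psi_{lj} \rangle = \Big( \frac{1}{T} \sum_{t=1}^T \widehat z_t \Big) \cdot \sqrt T \, \langle \widehat \mu_j - \mu_j, \widehat \psi_{lj} \rangle.
$$
The plan is to analyze the two factors separately and then apply a Slutsky-type argument.

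First, I would handle the scalar factor by invoking Lemma \ref{lem:aux7}, which gives
$$
\sqrt T \, \langle \widehat \mu_j - \mu_j, \widehat \psi_{lj} \rangle = \frac{1}{\sqrt T} \sum_{t=1}^T s_{lj} f_{ljt}^* + \mathcal O_P(T^{-1/2}).
$$
The leading term is $\mathcal O_P(1)$ by the central limit theorem for $\alpha$-mixing sequences, since $f_{ljt}^*$ is zero-mean, and the remainder is $o_P(1)$.

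Second, I would show that $\overline{\widehat z} := T^{-1}\sum_{t=1}^T \widehat z_t \overset{p}{\to} S E[z_t^*]$. The first $N$ components equal $\overline w$, which converges in probability to $E[w_t]$ by the law of large numbers for $\alpha$-mixing sequences. The factor-block components are identically zero because $T^{-1} \sum_{t=1}^T \widehat f_{ljt} = \langle \widehat \mu_j - \widehat \mu_j, \widehat \psi_{lj}\rangle = 0$, and $SE[z_t^*]$ likewise has zero factor blocks since $E[f_{ljt}^*] = \langle E[X_{jt}] - \mu_j, \psi_{lj} \rangle = 0$. Thus $\overline{\widehat z} = S E[z_t^*] + o_P(1)$.

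Combining the two pieces via Slutsky's theorem yields
$$
\Big( SE[z_t^*] + o_P(1) \Big) \Big( \frac{1}{\sqrt T} \sum_{t=1}^T s_{lj} f_{ljt}^* + o_P(1)\Big) = \frac{1}{\sqrt T} \sum_{t=1}^T S E[z_t^*] s_{lj} f_{ljt}^* + o_P(1),
$$
where the cross terms vanish because each factor is either $o_P(1)$ or $\mathcal O_P(1)$. The main (modest) obstacle is the bookkeeping around the sign matrix $S$ and verifying that the factor components of $\overline{\widehat z}$ and $SE[z_t^*]$ match, which boils down to the centering identity for the generated factors. No further delicate arguments are needed.
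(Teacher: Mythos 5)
Your proposal is correct and follows essentially the same route as the paper: factor out the $t$-independent inner product, apply Lemma \ref{lem:aux7} to the scalar term, replace $T^{-1}\sum_{t=1}^T \widehat z_t$ by $SE[z_t^*]$, and combine $\mathcal O_P(1)$ and $o_P(1)$ factors. The only (harmless) difference is that you justify $\overline{\widehat z} = SE[z_t^*] + o_P(1)$ through the exact centering of the generated factors plus the law of large numbers for $w_t$, whereas the paper goes through Lemma \ref{lem:aux1} and the law of large numbers for $z_t^*$; both arguments are valid.
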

\begin{proof}
We have
\begin{align*}
	\frac{1}{\sqrt T} \sum_{t=1}^T \widehat z_t \langle \widehat \mu_j - \mu_j, \widehat \psi_{lj} \rangle 
	&= \bigg( \frac{1}{T} \sum_{t=1}^T \widehat z_t \bigg) \bigg( \frac{1}{\sqrt T} \sum_{t=1}^T s_{lj} f_{ljt}^* \bigg) + \mathcal O_P(T^{-1/2}) \\
	&=\bigg( \frac{1}{T} \sum_{t=1}^T S z_t^* \bigg) \bigg( \frac{1}{\sqrt T} \sum_{t=1}^T s_{lj} f_{ljt}^* \bigg) + \mathcal O_P(T^{-1/2}) \\
	&= \frac{1}{\sqrt T} \sum_{t=1}^T E[S z_t^*] s_{lj} f_{ljt}^* + o_P(1) 
\end{align*}
where we applied Lemma \ref{lem:aux7} in the first line, Lemma \ref{lem:aux1} in the second line, and the law of large numbers for alpha-mixing sequences in the third line.
\end{proof}

\begin{lemma} \label{lem:aux13}
$$\frac{1}{\sqrt T} \sum_{t=1}^T \widehat z_t \langle X_{jt} - \mu_j, s_{lj} \psi_{lj} - \widehat \psi_{lj} \rangle = - \frac{1}{\sqrt T} \sum_{t=1}^T \sum_{\substack{m=1 \\ m\neq l}}^{K_j} S E[z_t^* f_{mjt}^*]  \frac{ s_{mj} g_{lmjt}}{\lambda_{lj} - \lambda_{mj}} + o_P(1).$$
\end{lemma}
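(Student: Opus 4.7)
The plan is to reduce the inner product $\langle X_{jt}-\mu_j, s_{lj}\psi_{lj}-\widehat\psi_{lj}\rangle$ to a finite sum of projections onto the true eigenfunctions of $D_j$ plus an orthogonal remainder, and then apply Lemmas \ref{lem:aux8}, \ref{lem:aux10}, and \ref{lem:aux11} term by term. First I would use the orthogonal decomposition
\begin{equation*}
X_{jt}(s)-\mu_j(s) = \sum_{m=1}^{K_j} f_{mjt}^*\,\psi_{mj}(s) + \varepsilon_{jt}^*(s),
\end{equation*}
which is justified by the definition $f_{mjt}^*=\langle X_{jt}-\mu_j,\psi_{mj}\rangle$ and the induced residual $\varepsilon_{jt}^*$, with $\langle \varepsilon_{jt}^*, \psi_{mj}\rangle=0$ for $m\le K_j$. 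Substituting into the target sum yields
\begin{equation*}
\tfrac{1}{\sqrt T}\sum_{t=1}^T \widehat z_t \langle X_{jt}-\mu_j,\, s_{lj}\psi_{lj}-\widehat\psi_{lj}\rangle
= \sum_{m=1}^{K_j} \Bigl(\tfrac{1}{\sqrt T}\sum_{t=1}^T \widehat z_t f_{mjt}^*\Bigr) \langle \psi_{mj},\, s_{lj}\psi_{lj}-\widehat\psi_{lj}\rangle + R_T,
\end{equation*}
where $R_T := \tfrac{1}{\sqrt T}\sum_t \widehat z_t \langle \varepsilon_{jt}^*, s_{lj}\psi_{lj}-\widehat\psi_{lj}\rangle = o_P(1)$ is immediate from Lemma \ref{lem:aux11}.

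The next step is to handle each $m$-summand. For the diagonal $m=l$ contribution, Lemma \ref{lem:aux8} gives $\langle s_{lj}\psi_{lj},\widehat\psi_{lj}-s_{lj}\psi_{lj}\rangle=\mathcal O_P(T^{-1})$, hence $\langle \psi_{lj},\, s_{lj}\psi_{lj}-\widehat\psi_{lj}\rangle = \mathcal O_P(T^{-1})$ after multiplying by $s_{lj}$. Combined with $\tfrac{1}{\sqrt T}\sum_t \widehat z_t f_{ljt}^* = \mathcal O_P(\sqrt T)$, which follows from Lemma \ref{lem:aux1} together with the central limit theorem applied to the alpha-mixing sequence $z_t^* f_{ljt}^*$, the $m=l$ term is $\mathcal O_P(T^{-1/2})=o_P(1)$. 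For $m\neq l$, Lemma \ref{lem:aux10} supplies the key expansion
\begin{equation*}
\langle \psi_{mj},\, s_{lj}\psi_{lj}-\widehat\psi_{lj}\rangle
= -\frac{s_{mj}}{T}\sum_{t=1}^T \frac{g_{lmjt}}{\lambda_{lj}-\lambda_{mj}} + \mathcal O_P(T^{-1}).
\end{equation*}

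Finally I would combine the two expansions. Using Lemma \ref{lem:aux1}, Cauchy–Schwarz, and the law of large numbers for alpha-mixing sequences, one obtains
\begin{equation*}
\tfrac{1}{T}\sum_{t=1}^T \widehat z_t f_{mjt}^* = S E[z_t^* f_{mjt}^*] + \mathcal O_P(T^{-1/2}),
\end{equation*}
so $\tfrac{1}{\sqrt T}\sum_t \widehat z_t f_{mjt}^* = \sqrt T\, S E[z_t^* f_{mjt}^*] + \mathcal O_P(1)$. Multiplying this $\mathcal O_P(\sqrt T)$ quantity by the $\mathcal O_P(T^{-1/2})$ eigenfunction inner product yields the stated leading contribution $-\tfrac{1}{\sqrt T}\sum_t S E[z_t^* f_{mjt}^*]\,s_{mj} g_{lmjt}/(\lambda_{lj}-\lambda_{mj})$; the cross-terms are all $o_P(1)$ since any remainder carries at least one extra factor of $T^{-1/2}$.

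The main obstacle is the bookkeeping: since $\tfrac{1}{\sqrt T}\sum_t \widehat z_t f_{mjt}^*$ diverges at rate $\sqrt T$, the $\mathcal O_P(T^{-1})$ residuals from Lemma \ref{lem:aux10} only barely vanish after multiplication, so one must be careful to keep the sharper rate whenever simplifying $\widehat z_t$ to $S z_t^*$ or replacing sample averages by their expectations. No additional argument is needed beyond the lemmas already established.
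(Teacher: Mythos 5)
Your proposal is correct and follows essentially the same route as the paper's proof: decompose $X_{jt}-\mu_j$ into its factor component plus $\varepsilon_{jt}^*$, dispose of the $\varepsilon_{jt}^*$ term via Lemma \ref{lem:aux11}, kill the $m=l$ term via Lemma \ref{lem:aux8}, expand the $m\neq l$ inner products via Lemma \ref{lem:aux10}, and replace $T^{-1}\sum_t \widehat z_t f_{mjt}^*$ by $SE[z_t^* f_{mjt}^*]$ using Lemma \ref{lem:aux1} and the law of large numbers. The only cosmetic difference is that you assert the sharper rate $\mathcal O_P(T^{-1/2})$ for this last replacement, whereas the paper only uses $o_P(1)$, which already suffices since the accompanying factor $T^{-1/2}\sum_t g_{lmjt}/(\lambda_{lj}-\lambda_{mj})$ is $\mathcal O_P(1)$.
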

\begin{proof}
The $j$-th regressor function admits the representation $X_{jt}(s) - \mu_j(s) = (F_{lj}^*)'(\Psi_j(s)) + \varepsilon_{jt}^*(s)$, where $\varepsilon_{jt}^*(s) = X_{jt}(s) - \mu_j(s) - (F_{lj}^*)'(\Psi_j(s))$.
Then,
\begin{align*}
&\frac{1}{\sqrt T} \sum_{t=1}^T \widehat z_t \langle X_{jt} - \mu_j, s_{lj} \psi_{lj} - \widehat \psi_{lj} \rangle \\
&= - \frac{1}{\sqrt T} \sum_{t=1}^T \sum_{m=1}^{K_j} \widehat z_t f_{mjt}^* s_{mj} \langle s_{mj} \psi_{mj}, \widehat \psi_{lj} - s_{lj} \psi_{lj} \rangle + \frac{1}{\sqrt T} \sum_{t=1}^T \widehat z_t \langle \varepsilon_{jt}^*, s_{lj} \psi_{lj} - \widehat \psi_{lj} \rangle \\
&= - \frac{1}{\sqrt T} \sum_{t=1}^T \sum_{\substack{m=1 \\ m\neq l}}^{K_j} \widehat z_t f_{mjt}^* s_{mj} \bigg( \frac{1}{T} \sum_{h=1}^T \frac{ g_{lmjh}}{\lambda_{lj} - \lambda_{mj}}  \bigg) + \mathcal O_P(T^{-1/2})
\end{align*}
where the last step follows from Lemmas \ref{lem:aux8}, \ref{lem:aux10}, and \ref{lem:aux11}.
Furthermore,
$$
\frac{1}{T} \sum_{t=1}^T \widehat z_t f_{mjt}^* = \frac{1}{T} \sum_{t=1}^T S z_t^* f_{mjt}^* + \frac{1}{T} \sum_{t=1}^T (\widehat z_t - S z_t^*) f_{mjt}^* = S E[z_t^* f_{mjt}^*] + o_P(1)
$$
by Lemma \ref{lem:aux1} and the law of large numbers for alpha-mixing sequences, and the result follows.
\end{proof}

\begin{lemma} \label{lem:aux14}
	$|\widehat u_t(r) - u_t(r)| = \mathcal O_P(T^{-1/2})$.
\end{lemma}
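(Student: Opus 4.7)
The plan is to write $\widehat u_t(r) - u_t(r)$ as a difference that can be controlled using the already-established consistency rates of the regressor vector and the coefficient vector. Recall that by the argument in the proof of Lemma \ref{lem:aux5}, the regression equation admits the representation $Y_t(r) = (Sz_t^*)'SB(r) + u_t(r)$, because $\int_a^b \beta_j(r,s)X_{jt}(s)\,ds = (B_j(r))'F_{jt}^*$ and $SS = I_M$. Subtracting this from $\widehat u_t(r) = Y_t(r) - \widehat z_t'\widehat B(r)$ yields the decomposition
\begin{align*}
\widehat u_t(r) - u_t(r) &= (Sz_t^*)'SB(r) - \widehat z_t'\widehat B(r) \\
&= (Sz_t^* - \widehat z_t)'SB(r) + \widehat z_t'(SB(r) - \widehat B(r)).
\end{align*}

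Next, I would apply the Cauchy--Schwarz inequality to each term, giving
$$
|\widehat u_t(r) - u_t(r)| \leq \|Sz_t^* - \widehat z_t\|_2 \, \|SB(r)\|_2 + \|\widehat z_t\|_2 \, \|\widehat B(r) - SB(r)\|_2.
$$
The first factor in the first term is $\mathcal O_P(T^{-1/2})$ by Lemma \ref{lem:aux1}, and $\|SB(r)\|_2 = \|B(r)\|_2$ is a finite deterministic constant (the coefficient function $B(r)$ is fixed at the pointwise level $r \in [a,b]$). For the second term, Lemma \ref{lem:aux5} gives $\|\widehat B(r) - SB(r)\|_2 = \mathcal O_P(T^{-1/2})$, while $\|\widehat z_t\|_2 \leq \|Sz_t^*\|_2 + \|\widehat z_t - Sz_t^*\|_2 = \mathcal O_P(1) + \mathcal O_P(T^{-1/2}) = \mathcal O_P(1)$ using Lemma \ref{lem:aux1} together with the bounded fourth moments of $w_t$ and of $f_{ljt}^* = \langle X_{jt} - \mu_j, \psi_{lj}\rangle$ under Assumption \ref{as:estimation}(a). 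Combining the two terms gives $|\widehat u_t(r) - u_t(r)| = \mathcal O_P(T^{-1/2}) + \mathcal O_P(T^{-1/2}) = \mathcal O_P(T^{-1/2})$, as desired.

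I do not anticipate any serious obstacle here: the statement is pointwise in $t$ and $r$, so no uniform control is required, and all ingredients have already been established in the preceding auxiliary lemmas. The only item deserving a short comment is the boundedness of $\|\widehat z_t\|_2$, which, as noted above, follows from Lemma \ref{lem:aux1} combined with finiteness of the moments of $z_t^*$; otherwise the argument is purely a two-line algebraic decomposition followed by Cauchy--Schwarz.
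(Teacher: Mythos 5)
Your proposal is correct and follows essentially the same route as the paper: the identical decomposition $\widehat u_t(r) - u_t(r) = (Sz_t^* - \widehat z_t)'SB(r) + \widehat z_t'(SB(r) - \widehat B(r))$, followed by Cauchy--Schwarz and Lemmas \ref{lem:aux1} and \ref{lem:aux5} with bounded moments. Your extra remark justifying $\|\widehat z_t\|_2 = \mathcal O_P(1)$ is a harmless elaboration of what the paper leaves implicit.
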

\begin{proof}
With $u_t(r) = Y_t(r) - (S z_t^*)'SB(r)$ and $\widehat u_t(r) = Y_t(r) - \widehat z_t'\widehat B(r)$ we have
\begin{align*}
	|\widehat u_t(r) - u_t(r)| &= |(S z_t^*)'SB(r) - \widehat z_t'\widehat B(r)| \\
	&= |(Sz_t^* - \widehat z_t)'SB(r) + \widehat z_t'(SB(r) - \widehat B(r))|  \\
	&\leq \|Sz_t^* - \widehat z_t\|_2 \|B(r)\|_2 + \|\widehat z_t\|_2 \|SB(r) - \widehat B(r)\|_2 = \mathcal O_P(T^{-1/2})
\end{align*}
by the triangle inequality, bounded fourth moments, and Lemmas \ref{lem:aux1} and \ref{lem:aux5}.
\end{proof}

\begin{lemma} \label{lem:aux15}
	$|\widehat f_{ljt} - s_{lj} f_{ljt}^*| = \mathcal O_P(T^{-1/2})$.
\end{lemma}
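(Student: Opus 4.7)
The plan is to decompose $\widehat f_{ljt} - s_{lj} f_{ljt}^*$ into two manageable inner products, each of which involves exactly one of the two $\sqrt T$-consistent sample primitives ($\widehat \mu_j$ or $\widehat \psi_{lj}$), and then bound them by Cauchy--Schwarz.

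First, I would use $s_{lj}^2 = 1$ and the bilinearity of the inner product to write
\begin{align*}
	\widehat f_{ljt} - s_{lj} f_{ljt}^*
	&= \langle X_{jt} - \widehat \mu_j, \widehat \psi_{lj} \rangle - \langle X_{jt} - \mu_j, s_{lj}\psi_{lj}\rangle \\
	&= \langle \mu_j - \widehat \mu_j, \widehat \psi_{lj} \rangle + \langle X_{jt} - \mu_j, \widehat \psi_{lj} - s_{lj}\psi_{lj} \rangle.
\end{align*}
The triangle inequality reduces the task to bounding the two summands separately.

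For the first term, Cauchy--Schwarz gives $|\langle \mu_j - \widehat \mu_j, \widehat \psi_{lj} \rangle| \leq \|\mu_j - \widehat \mu_j\| \|\widehat \psi_{lj}\|$; the right-hand side is $\mathcal O_P(T^{-1/2})$ since $\|\widehat \psi_{lj}\| = 1$ by construction and $\|\widehat \mu_j - \mu_j\| = \mathcal O_P(T^{-1/2})$ by Theorem~\ref{theo:primitves}(a). For the second term, Cauchy--Schwarz again gives $|\langle X_{jt} - \mu_j, \widehat \psi_{lj} - s_{lj}\psi_{lj} \rangle| \leq \|X_{jt} - \mu_j\| \|\widehat \psi_{lj} - s_{lj}\psi_{lj}\|$, and here $\|X_{jt} - \mu_j\| = \mathcal O_P(1)$ by Assumption~\ref{as:estimation}(a) (which guarantees bounded fourth moments, hence bounded second moments of the $L^2$-norm via Fubini and Markov), while $\|\widehat \psi_{lj} - s_{lj}\psi_{lj}\| = \mathcal O_P(T^{-1/2})$ by Theorem~\ref{theo:primitves}(d).

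Combining the two bounds yields $|\widehat f_{ljt} - s_{lj} f_{ljt}^*| = \mathcal O_P(T^{-1/2})$, which is the claim. There is no real obstacle here: the result is a straightforward consequence of the primitive convergence rates already established in Theorem~\ref{theo:primitves}, so the only care needed is keeping track of the sign $s_{lj}$ so that the decomposition isolates the two sources of estimation error cleanly.
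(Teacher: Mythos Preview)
Your proof is correct and essentially identical to the paper's own argument: the same two-term decomposition via bilinearity, the same Cauchy--Schwarz bounds, and the same appeal to Theorem~\ref{theo:primitves} and bounded moments. If anything, you spell out the justification for $\|X_{jt}-\mu_j\|=\mathcal O_P(1)$ a bit more carefully than the paper does.
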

\begin{proof}
With Cauchy-Schwarz, bounded fourth moments, and the $\sqrt T$-consistency of the primitives,
\begin{align*}
	|\widehat f_{ljt} - s_{lj} f_{ljt}^*|
	&= |\langle X_{jt} - \widehat \mu_j, \widehat \psi_{lj} \rangle - \langle X_{jt} - \mu_j, s_{lj} \psi_{lj} \rangle| \\
&= |\langle \mu_j - \widehat \mu_j, \widehat \psi_{lj} \rangle + \langle X_{jt} - \mu_j,\widehat \psi_{lj} - s_{lj} \psi_{lj} \rangle| \\
&\leq \|\mu_j - \widehat \mu_j \| + \|X_{jt} - \mu_j\| \|\widehat \psi_{lj} - s_{lj} \psi_{lj} \| = \mathcal O_P(T^{-1/2}).
\end{align*}
\end{proof}

\begin{lemma} \label{lem:aux16}
	$|\widehat g_{lmjt} - g_{lmjt}| = o_P(1)$.
\end{lemma}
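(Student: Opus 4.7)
The plan is to show that $\widehat g_{lmjt} - g_{lmjt} = o_P(1)$ by handling the four product terms that make up $\widehat g_{lmjt}$ separately and combining them through the triangle inequality.

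As a preliminary step, I would establish the $L^2$-rate $\|\widehat \gamma_{lj} - s_{lj} \gamma_{lj}\| = \mathcal O_P(T^{-1/2})$. Substituting the decomposition $\widehat f_{ljt} = s_{lj} f_{ljt}^* + (\widehat f_{ljt} - s_{lj} f_{ljt}^*)$ into $\widehat \gamma_{lj}(r) = T^{-1} \sum_t \widehat f_{ljt}(Y_t(r) - \overline Y(r))$, the leading sample average is $\mathcal O_P(T^{-1/2})$-close to $s_{lj} \gamma_{lj}(r)$ by the central limit theorem for $\alpha$-mixing sequences, while the remainder is controlled via Cauchy--Schwarz, Lemma \ref{lem:aux15}, the $\sqrt T$-consistency of $\widehat \mu_j$ and $\widehat \psi_{lj}$ from Theorem \ref{theo:primitves}, and the bounded fourth moment assumption. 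Using the identity $\widehat y_{ljt} = \langle Y_t - \overline Y, \widehat \gamma_{lj}\rangle$, an analogous expansion together with Lemma \ref{lem:mu} then yields $|\widehat y_{ljt} - s_{lj} y_{ljt}| = \mathcal O_P(T^{-1/2})$ pointwise in $t$.

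Next, I would apply the elementary identity $ab - cd = (a-c)b + c(b-d)$ with $a = \widehat f_{mjt}$, $c = s_{mj} f_{mjt}^*$, $b = \widehat y_{ljt}$, $d = s_{lj} y_{ljt}$. Combined with Lemma \ref{lem:aux15}, the bound just derived for $\widehat y_{ljt}$, and the boundedness in probability of $f_{mjt}^*$ and $y_{ljt}$ under the bounded moment assumption, this gives
$$
\widehat f_{mjt} \widehat y_{ljt} - s_{mj} s_{lj} f_{mjt}^* y_{ljt} = o_P(1),
$$
and symmetrically after swapping the roles of $l$ and $m$. For the sample-average terms $\overline{f_{mj} y_{lj}} = T^{-1} \sum_t \widehat f_{mjt} \widehat y_{ljt}$, I would substitute the fine expansion $\widehat f_{mjt} = s_{mj} f_{mjt}^* + \langle \mu_j - \widehat \mu_j, \widehat \psi_{mj}\rangle + \langle X_{jt} - \mu_j, \widehat \psi_{mj} - s_{mj} \psi_{mj}\rangle$ (and the analogue for $\widehat y_{ljt}$) into the sum. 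The $t$-independent mean-correction factors multiply centered averages of the form $T^{-1} \sum_t \widehat y_{ljt} = 0$ and hence vanish identically, while the $t$-dependent remainder is $\mathcal O_P(T^{-1/2})$ after a single application of Cauchy--Schwarz across the sum. The leading piece $T^{-1} \sum_t s_{mj} s_{lj} f_{mjt}^* y_{ljt}$ then converges to $s_{mj} s_{lj} E[f_{mjt}^* y_{ljt}]$ by the law of large numbers for $\alpha$-mixing processes.

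Assembling the four estimates with the triangle inequality delivers $\widehat g_{lmjt} - g_{lmjt} = o_P(1)$. The main obstacle I anticipate is the bookkeeping for $\|\widehat \gamma_{lj} - s_{lj} \gamma_{lj}\|$, and through it $\widehat y_{ljt}$, since $\widehat \gamma_{lj}$ depends simultaneously on the sample mean $\overline Y$, the estimated loadings $\widehat \psi_{lj}$, and the estimated factors $\widehat f_{ljt}$, so each source of error must be tracked separately. Once this preliminary consistency statement is in place, the remainder of the argument reduces to routine applications of Lemmas \ref{lem:mu} and \ref{lem:aux15} together with the $\alpha$-mixing law of large numbers.
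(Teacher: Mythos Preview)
Your proposal is correct and follows essentially the same route as the paper: establish $|\widehat y_{ljt}-s_{lj}y_{ljt}|=\mathcal O_P(T^{-1/2})$, apply the product identity $ab-cd=(a-c)b+c(b-d)$ together with Lemma~\ref{lem:aux15} for the individual cross terms, and invoke the $\alpha$-mixing law of large numbers for the sample averages $\overline{f_{mj}y_{lj}}$. The paper dispatches the first step in a single line analogous to Lemma~\ref{lem:aux15} rather than first developing an $L^2$-rate for $\widehat\gamma_{lj}$, but the remaining structure is identical.
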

\begin{proof}
Analogously to the proof of Lemma \ref{lem:aux15},
$$
|\widehat y_{ljt} - s_{lj} y_{ljt}| \leq \|E[Y_t] - \overline Y\| + \|Y_t - E[Y_t]\| \|\widehat \psi_{lj} - s_{lj} \psi_{lj} \| = \mathcal O_P(T^{-1/2}),
$$
and, therefore,
\begin{align*}
	\widehat f_{mjt} \widehat y_{ljt} - s_{mj} s_{lj} f_{mjt}^* y_{ljt}
	&= (\widehat f_{mjt} - s_{mj}  f_{mjt}^*)\widehat y_{ljt} + s_{mj} f_{mjt}^*(\widehat y_{ljt} - s_{lj} y_{ljt}) = \mathcal O_P(T^{-1/2})
\end{align*}
by Lemma \ref{lem:aux15}, Cauchy-Schwarz, and bounded fourth moments.
By similar arguments,
\begin{align*}
	\overline{f_{mj} y_{lj}} = \frac{1}{T} \sum_{t=1}^T \widehat f_{mjt} \widehat y_{ljt} = \frac{1}{T} \sum_{t=1}^T s_{mj} f_{mjt}^* s_{lj} y_{ljt} + \mathcal O_P(T^{-1/2}),
\end{align*}
and by the law of large numbers for alpha-mixing sequences,
$$
	\overline{f_{mj} y_{lj}} = s_{mj} s_{lj} E[f_{mjt}^* y_{ljt}] + o_P(1).
$$
Therefore,
\begin{align*}
	\widehat g_{lmjt} - g_{lmjt} &= (\widehat f_{mjt} \widehat y_{ljt} - s_{mj} s_{lj} f_{mjt}^* y_{ljt}) - (\overline{f_{mj} y_{lj}} - s_{mj} s_{lj} E[f_{mjt}^* y_{ljt}]) \\
	&\quad + (\widehat f_{ljt} \widehat y_{mjt} - s_{mj} s_{lj} f_{mjt}^* y_{mjt}) - (\overline{f_{lj} y_{mj}} - s_{mj} s_{lj} E[f_{ljt}^* y_{mjt}]) \\
	&= o_P(1).
\end{align*}
\end{proof}

\begin{lemma} \label{lem:aux17}
$$
\widetilde A(r) = \frac{1}{\sqrt T} \sum_{t=1}^T S z_t^* u_t(r) + \sum_{j=1}^J \bigg( S E[z_t^*] (F_{jt}^*)'S_j - S E[z_t^* (F_{jt}^*)'] G_{jt} \bigg) B_j(r) + o_P(1),
$$
and $\widetilde A(r) = \mathcal O_P(1)$,
where $\widetilde A(r)$ is defined in \eqref{eq:thm3Atilde} and $G_{jt}$ is defined in \eqref{eq:thm3G}
\end{lemma}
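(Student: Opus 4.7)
The plan is to expand $\widetilde A(r)$ into a mean-zero stochastic part driven by the regression error $u_t(r)$ plus a generated-regressor correction coming from the estimation of the factors and loadings, and then to match the two pieces of the target decomposition using Lemmas \ref{lem:aux6}, \ref{lem:aux12}, and \ref{lem:aux13}.

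First, I would substitute the regression identity $Y_t(r) = (z_t^*)' B(r) + u_t(r) = (Sz_t^*)' S B(r) + u_t(r)$, which is valid because $S^{2} = I_M$, into the definition of $\widetilde A(r)$ to obtain
$$
\widetilde A(r) = \frac{1}{\sqrt T} \sum_{t=1}^T \widehat z_t u_t(r) + \frac{1}{\sqrt T} \sum_{t=1}^T \widehat z_t (S z_t^* - \widehat z_t)' S B(r).
$$
The first sum reduces at once: writing $\widehat z_t u_t(r) = S z_t^* u_t(r) + (\widehat z_t - S z_t^*) u_t(r)$ and invoking Lemma \ref{lem:aux6}, the residual piece is $o_P(1)$, leaving the leading martingale-type term $\frac{1}{\sqrt T} \sum_t S z_t^* u_t(r)$ from the target expansion.

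For the second sum I would exploit the block structure $S = \mathrm{diag}(I_N, S_1, \ldots, S_J)$ so that the scalar $w_t$-block cancels and
$$
(S z_t^* - \widehat z_t)' S B(r) = - \sum_{j=1}^J (\widehat F_{jt} - S_j F_{jt}^*)' S_j B_j(r).
$$
Each factor-error coordinate decomposes as
$$
\widehat f_{ljt} - s_{lj} f_{ljt}^* = \langle \mu_j - \widehat \mu_j, \widehat \psi_{lj} \rangle + \langle X_{jt} - \mu_j, \widehat \psi_{lj} - s_{lj} \psi_{lj} \rangle,
$$
and the two summands are treated in turn. The mean-error part is asymptotically linear in $s_{lj} f_{ljt}^*$ multiplied by $SE[z_t^*]$ via Lemma \ref{lem:aux12}, and after multiplying by $s_{lj} \beta_{lj}(r)$ and summing over $l$ and $j$ it produces the first correction term $\frac{1}{\sqrt T} \sum_t SE[z_t^*](F_{jt}^*)' S_j B_j(r)$ of the claimed expansion. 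The loading-error part, via Lemma \ref{lem:aux13}, produces a double sum weighted by $s_{mj} g_{lmjt}/(\lambda_{lj} - \lambda_{mj})$.

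The main obstacle I anticipate is collapsing this double sum into the compact matrix form involving $G_{jt}$. The key bookkeeping identity is $s_{lj} s_{mj}\, g_{lmjt}/(\lambda_{lj} - \lambda_{mj}) = [G_{jt}]_{lm}$ for $l \neq m$, which follows directly from the definitions of $g_{lmjt}$ and $[G_{jt}]_{lm}$; combined with the antisymmetry $G_{jt}^{\top} = - G_{jt}$ (inherited from the prefactor $(\lambda_{lj} - \lambda_{mj})^{-1}$ paired with a bracketed $(f^*, y)$-expression symmetric in $l,m$), the index swap rearranges the double sum into $SE[z_t^{*}(F_{jt}^{*})']\, G_{jt}\, B_j(r)$ (up to sign, which is absorbed by the $-S_j B_j(r)$ prefix in the block decomposition above). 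Combining everything yields the stated expansion. To conclude that $\widetilde A(r) = \mathcal O_P(1)$, note that each building block of the decomposition is a $\frac{1}{\sqrt T}$-normalized partial sum of a zero-mean $\alpha$-mixing sequence of size $-\nu/(\nu-2)$ with uniformly bounded moments by Assumption \ref{as:estimation}(a), so the central limit theorem for $\alpha$-mixing processes (e.g.\ Theorem 5.20 in \textcite{White.2001}) applies termwise.
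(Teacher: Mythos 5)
Your proposal follows essentially the same route as the paper's proof: substitute $Y_t(r) = (Sz_t^*)'SB(r) + u_t(r)$ into $\widetilde A(r)$, split the factor-estimation error coordinatewise into the mean-error and loading-error pieces, and apply Lemmas \ref{lem:aux6}, \ref{lem:aux12}, and \ref{lem:aux13}, concluding $\widetilde A(r) = \mathcal O_P(1)$ via the CLT for $\alpha$-mixing sequences. The only difference is cosmetic: the paper stops at the component form with weights $s_{mj}s_{lj}\,g_{lmjt}/(\lambda_{lj}-\lambda_{mj})$ rather than carrying out your repackaging into the matrix $G_{jt}$, and your remark that the sign from the index swap is ``absorbed by the $-S_jB_j(r)$ prefix'' is the one loosely argued bookkeeping step (it mirrors a transpose/sign ambiguity already present between the lemma's display and the paper's own final line).
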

\begin{proof}
Since $\beta_j(r) = (B_j(r))'\Psi_j(s)$, we have $\int_a^b \beta_j(r,s) X_{jt}(s) \dd s = (B_j(r))'F_{jt}^*$ and the regression equation can be represented as $Y_t(r) = (S z_t^*)'S B(r) + u_t(r)$.
Therefore,
\begin{align*}
	&\widetilde A(r) =  \frac{1}{\sqrt T} \sum_{t=1}^T \widehat z_t u_t(r) + \frac{1}{\sqrt T} \sum_{t=1}^T \widehat z_t (S z_t^* - \widehat z_t)' S B(r) \\
	&= \frac{1}{\sqrt T} \sum_{t=1}^T \widehat z_t u_t(r) + \frac{1}{\sqrt T} \sum_{t=1}^T \widehat z_t \sum_{j=1}^J \sum_{l=1}^{K_j} \Big( \langle \widehat \mu_j - \mu_j, \widehat \psi_{lj} \rangle + \langle X_{jt} - \mu_j, s_{lj} \psi_{lj} - \widehat \psi_{lj} \rangle \Big) s_{lj} \beta_{lj}(r) \\
	&= \frac{1}{\sqrt T} \sum_{t=1}^T \bigg( S z_t^* u_t(r) + \sum_{j=1}^J \sum_{l=1}^{K_j} \Big( S E[z_t^*] s_{lj} f_{ljt}^* - \sum_{\substack{m=1 \\ m\neq l}}^{K_j} S E[z_t^* f_{mjt}^*]s_{mj}  \frac{ g_{lmjt}}{\lambda_{lj} - \lambda_{mj}} \Big) s_{lj} \beta_{lj}(r) \bigg)  \\
	& \quad + o_P(1) 
\end{align*}
where we applied Lemmas \ref{lem:aux6}, \ref{lem:aux12}, and \ref{lem:aux13}. 
We have $\widetilde A(r) = \mathcal O_P(1)$ by the central limit theorem for alpha-mixing sequences because $z_t^* u_t(r)$, $f_{ljt}^*$, and $g_{lmjt}$ have mean zero and are alpha-mixing of size $-\nu/(\nu-2)$.
\end{proof}

\begin{lemma} \label{lem:aux18}
$$
	\widetilde A(r) = \frac{1}{\sqrt T} \sum_{t=1}^T \bigg( \widehat z_t \widehat u_t(r) + \sum_{j=1}^J  \overline{z} \widehat F_{jt}' \widehat B_j(r) - \overline{z F_j'} \widehat G_{jt} \widehat B_j(r) \bigg) + o_P(1)
$$
\end{lemma}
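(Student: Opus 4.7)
The starting point is the representation of $\widetilde A(r)$ provided by Lemma \ref{lem:aux17},
$$
\widetilde A(r) = \frac{1}{\sqrt T}\sum_{t=1}^T \Big[ Sz_t^* u_t(r) + \sum_{j=1}^J \Big( SE[z_t^*](F_{jt}^*)'S_j - SE[z_t^*(F_{jt}^*)']G_{jt} \Big) B_j(r) \Big] + o_P(1).
$$
My plan is to take each population quantity appearing here and swap it for its sample analogue, showing that the cumulative replacement error is $o_P(1)$. The main ingredients are already in place: Lemmas \ref{lem:aux1}, \ref{lem:aux5}, and \ref{lem:aux14}--\ref{lem:aux16} give the required $\sqrt T$-consistent approximations of $\widehat z_t$, $\widehat B(r)$, $\widehat u_t(r)$, $\widehat f_{ljt}$, and $\widehat g_{lmjt}$, while Lemma \ref{lem:aux6} provides the crucial martingale-type bound on $\sum_t (\widehat z_t - S z_t^*)u_t(r)$.

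For the leading stochastic term I would use the OLS identity to re-express $\frac{1}{\sqrt T}\sum_t S z_t^* u_t(r)$ in sample form. Concretely, because $\sum_t \widehat z_t \widehat u_t(r) = 0$ by the normal equations, I can write $\frac{1}{\sqrt T}\sum_t \widehat z_t \widehat u_t(r) + R_t^{(1)}(r)$ where $R_t^{(1)}(r)$ compensates for the difference between $u_t(r)$ and $\widehat u_t(r)$ through the identity $\widehat u_t(r) - u_t(r) = (Sz_t^* - \widehat z_t)' SB(r) + \widehat z_t'(SB(r) - \widehat B(r))$, and Lemma \ref{lem:aux6} shows that the resulting difference is $o_P(1)$. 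For the remaining factor-correction terms I would substitute $SE[z_t^*]$ by $\overline z$ (via Lemma \ref{lem:aux1} and the law of large numbers), $(F_{jt}^*)'S_j$ by $\widehat F_{jt}'$ (via Lemma \ref{lem:aux15} and $S_j^2 = I$), $S E[z_t^*(F_{jt}^*)']S_j$ by $\overline{z F_j'}$ (via Lemmas \ref{lem:aux1} and \ref{lem:aux15}), $G_{jt}$ by $S_j \widehat G_{jt} S_j$ (via Lemma \ref{lem:aux16}, using that $\widehat g_{lmjt}$ converges to $g_{lmjt} = s_{lj}s_{mj}(\lambda_{lj}-\lambda_{mj})[G_{jt}]_{lm}$ and that $\widehat\lambda_{lj} - \lambda_{lj} = \mathcal O_P(T^{-1/2})$), and $S_j B_j(r)$ by $\widehat B_j(r)$ (via Lemma \ref{lem:aux5}). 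Each individual substitution contributes an error of order $\mathcal O_P(T^{-1/2})$ times a sample average of bounded-moment variables; Cauchy--Schwarz together with Assumption \ref{as:estimation}(a) ensures the cumulative remainder is $o_P(1)$ after $\sqrt T$-normalization.

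The main obstacle is the $G_{jt}$ substitution. The entries of $G_{jt}$ involve $(\lambda_{lj} - \lambda_{mj})^{-1}$, so replacing $\lambda$'s with $\widehat \lambda$'s requires controlling both the numerator deviation from Lemma \ref{lem:aux16} and the inverse eigenvalue-gap deviation via a Taylor-type expansion $(\widehat \lambda_{lj} - \widehat \lambda_{mj})^{-1} - (\lambda_{lj} - \lambda_{mj})^{-1} = -(\lambda_{lj}-\lambda_{mj})^{-2}[(\widehat\lambda_{lj}-\lambda_{lj}) - (\widehat\lambda_{mj}-\lambda_{mj})] + \mathcal O_P(T^{-1})$; the eigenvalue gap is bounded away from zero by Assumption \ref{as:identifiability2}(b), so these errors are uniformly controlled. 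A secondary subtlety is that the naive bound $|\widehat u_t(r) - u_t(r)| = \mathcal O_P(T^{-1/2})$ is too crude to handle $\frac{1}{\sqrt T}\sum_t \widehat z_t (\widehat u_t(r) - u_t(r))$ term-by-term; this is exactly where the OLS first-order condition must be invoked, producing the exact cancellation that pairs the residual-substitution error with the generated-regressor correction and leaves only the lower-order remainders controlled by the auxiliary lemmas.
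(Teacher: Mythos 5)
Your high-level plan---start from Lemma \ref{lem:aux17} and replace each population quantity by its sample counterpart using Lemmas \ref{lem:aux1}, \ref{lem:aux5}, \ref{lem:aux14}--\ref{lem:aux16} and the law of large numbers---is the same route the paper sketches, but the two steps on which you hang the substitutions do not hold. First, Lemma \ref{lem:aux6} only bounds $\sum_t(\widehat z_t - Sz_t^*)u_t(r)$; it says nothing about the residual swap. Writing $\widehat u_t(r)-u_t(r)=(Sz_t^*-\widehat z_t)'SB(r)-\widehat z_t'(\widehat B(r)-SB(r))$ gives exactly $\frac{1}{\sqrt T}\sum_t \widehat z_t\widehat u_t(r)=\frac{1}{\sqrt T}\sum_t \widehat z_t u_t(r)-\widetilde A(r)$, so replacing $u_t$ by $\widehat u_t$ shifts the normalized sum by an $\mathcal O_P(1)$ amount, not $o_P(1)$. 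Second, your blanket claim that each substitution costs ``$\mathcal O_P(T^{-1/2})$ times a sample average'' and hence vanishes after $\sqrt T$-normalization is only valid when the remaining $t$-dependent factor is mean zero; it fails, e.g., for the swap of $S_jF_{jt}^*$ by $\widehat F_{jt}$, since $\frac{1}{\sqrt T}\sum_t \overline z\,(\widehat F_{jt}-S_jF_{jt}^*)'\widehat B_j(r)$ contains the term $-\sqrt T\langle\widehat\mu_j-\mu_j,\widehat\psi_{lj}\rangle$, which is $\mathcal O_P(1)$ by Lemma \ref{lem:aux7}.

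You sense the first problem and appeal to an ``exact cancellation'' produced by the OLS normal equations, but you never exhibit it, and at the level of the $T^{-1/2}$-normalized sums no such cancellation can close the argument: the normal equations give $\sum_t\widehat z_t\widehat u_t(r)=0$, the centering $\widehat\mu_j=\frac{1}{T}\sum_t X_{jt}$ gives $\sum_t\widehat F_{jt}=0$, and since $\overline{f_{mj}y_{lj}}$ is by definition the sample mean of $\widehat f_{mjt}\widehat y_{ljt}$ we also have $\sum_t\widehat G_{jt}=0$; hence the entire displayed right-hand side (apart from the $o_P(1)$) sums to zero identically, and a cancellation argument for the display would amount to showing $\widetilde A(r)=o_P(1)$, which contradicts the fact that $\widetilde A(r)=\widehat Q\,\sqrt T(\widehat B(r)-SB(r))$ has a nondegenerate Gaussian limit. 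What the result is actually used for in the proof of Theorem \ref{theo:ols.normality}---consistency of $\widehat\Omega_j(r,s)$---requires a summand-by-summand (mean-square over $t$) approximation of the feasible influence terms $\widehat z_t\widehat u_t(r)+\sum_j(\overline z\widehat F_{jt}'-\overline{zF_j'}\widehat G_{jt})\widehat B_j(r)$ by their infeasible counterparts from Lemma \ref{lem:aux17}, which is precisely what Lemmas \ref{lem:aux14}--\ref{lem:aux16} are designed to support; your proposal, which targets only closeness of the aggregated sums, does not establish that per-summand approximation, so the key step is missing.
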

\begin{proof}
This follows from Lemma \ref{lem:aux17} together with the $\sqrt T$-consistency of the primitives, the facts that
$$
	\overline{z} - E[\widehat z_t] = o_P(1), \quad
	\overline{z F_{j}'} - E[\widehat z_t \widehat F_{jt}'] = o_P(1)
$$
by the law of large numbers for alpha-mixing sequences, and Lemmas \ref{lem:aux1}, \ref{lem:aux5}, \ref{lem:aux14}, \ref{lem:aux15}, \ref{lem:aux16}.
\end{proof}

\begin{lemma} \label{lem:aux19}
Let $\{v_{ij}\}_{i \in \mathbb N}$ be an orthonormal basis of $span(\psi_{1j}, \ldots, \psi_{K_j j})^\perp$. Then,
$$
	\sqrt T \langle \widehat \psi_{lj}, v_{ij} \rangle  = \frac{1}{\sqrt T} \sum_{t=1}^T \frac{1}{\lambda_{lj}} \langle \widehat \varepsilon_{jt}, v_{ij} \rangle \langle s_{lj} \gamma_{lj}, Y_t - E[Y_t] \rangle + o_P(1).
$$
\end{lemma}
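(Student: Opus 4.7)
The plan is to mimic the strategy of Lemma \ref{lem:aux10}, but now project the sample eigenequation onto $v_{ij}$ rather than onto another eigenfunction $\psi_{mj}$. Because $v_{ij}$ is orthogonal to every $\psi_{mj}$, $m=1,\dots,K_j$, and $D_j$ has rank $K_j$ with eigenfunctions $\psi_{1j},\dots,\psi_{K_jj}$, it follows that $D_j v_{ij}=0$. Combined with self-adjointness, this gives $\langle v_{ij},D_j(\widehat\psi_{lj})\rangle=\langle D_j(v_{ij}),\widehat\psi_{lj}\rangle=0$. The sample eigenequation $\widehat D_j\widehat\psi_{lj}=\widehat\lambda_{lj}\widehat\psi_{lj}$ therefore yields the identity
\[
\widehat\lambda_{lj}\langle\widehat\psi_{lj},v_{ij}\rangle=\langle(\widehat D_j-D_j)\widehat\psi_{lj},v_{ij}\rangle.
\]

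Next I would split $\widehat\psi_{lj}=s_{lj}\psi_{lj}+(\widehat\psi_{lj}-s_{lj}\psi_{lj})$; the contribution of the remainder is bounded by $\|\widehat D_j-D_j\|_{\mathcal S}\,\|\widehat\psi_{lj}-s_{lj}\psi_{lj}\|=\mathcal O_P(T^{-1})$ by Theorem \ref{theo:primitves}. For the leading term, I would reuse the expansion developed in the proof of Lemma \ref{lem:aux9},
\[
\widehat D_j-D_j=(\widetilde C_j-C_j)C_j^*+C_j(\widetilde C_j^*-C_j^*)+\mathcal O_P(T^{-1}),
\]
where $\widetilde C_j$ is the oracle cross-covariance with true means. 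The second summand evaluated at $v_{ij}$ vanishes: $\langle C_j(\widetilde C_j^*-C_j^*)(s_{lj}\psi_{lj}),v_{ij}\rangle=\langle(\widetilde C_j^*-C_j^*)(s_{lj}\psi_{lj}),C_j^*v_{ij}\rangle$ and $C_j^*v_{ij}=0$, since the kernel of $C_j^*$ factors through $\Psi_j$ and $v_{ij}\perp\psi_{mj}$ for every $m$. Using $C_j^*\psi_{lj}=\gamma_{lj}$ (immediate from $c_j(r,s)=(\Psi_j(r))'E[F_{jt}^*Y_t(s)]$ and orthonormality of the $\psi_{mj}$), the first summand becomes
\[
s_{lj}\langle(\widetilde C_j-C_j)(\gamma_{lj}),v_{ij}\rangle
=\frac{s_{lj}}{T}\sum_{t=1}^T\langle v_{ij},\varepsilon_{jt}^*\rangle\langle\gamma_{lj},Y_t-E[Y_t]\rangle,
\]
where I use that $\langle v_{ij},X_{jt}-\mu_j\rangle=\langle v_{ij},\varepsilon_{jt}^*\rangle$ by orthogonality, and that the centering term has zero expectation thanks to the non-predictiveness condition $E[\varepsilon_{jt}(r)Y_t(s)]=0$ from Assumption \ref{as:identifiability2}(a).

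To finish, I would divide by $\widehat\lambda_{lj}$, multiply by $\sqrt T$, and swap $\widehat\lambda_{lj}\mapsto\lambda_{lj}$ using $\widehat\lambda_{lj}-\lambda_{lj}=\mathcal O_P(T^{-1/2})$ together with $\lambda_{lj}>0$, absorbing the error into $o_P(1)$ since the sum is $\mathcal O_P(1)$ by a CLT for the $\alpha$-mixing sequence $\langle v_{ij},\varepsilon_{jt}^*\rangle\langle\gamma_{lj},Y_t-E[Y_t]\rangle$. Finally, I would replace $\langle v_{ij},\varepsilon_{jt}^*\rangle$ by $\langle v_{ij},\widehat\varepsilon_{jt}\rangle$ at an $o_P(1)$ cost: expanding $\widehat\varepsilon_{jt}=X_{jt}-\widehat\mu_j-\sum_m\widehat f_{mjt}\widehat\psi_{mj}$ and using $v_{ij}\perp\psi_{mj}$, the difference is $\langle\mu_j-\widehat\mu_j,v_{ij}\rangle-\sum_m\widehat f_{mjt}\langle\widehat\psi_{mj},v_{ij}\rangle=\mathcal O_P(T^{-1/2})$, which transmits to a negligible correction once averaged and weighted. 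Pulling the sign $s_{lj}$ into the inner product $\langle s_{lj}\gamma_{lj},Y_t-E[Y_t]\rangle$ produces the stated expansion.

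The main obstacle is the careful bookkeeping of the two linearization steps: first that the quadratic remainder in the $\widehat D_j-D_j$ expansion, evaluated against $\widehat\psi_{lj}$ and $v_{ij}$, is genuinely $o_P(T^{-1/2})$ even though $v_{ij}$ is an arbitrary direction in an infinite-dimensional complement, and second that the substitution $\varepsilon_{jt}^*\mapsto\widehat\varepsilon_{jt}$ inside the sum is uniformly controlled via the non-predictiveness of $\varepsilon_{jt}^*$ and the consistency of $\widehat\psi_{mj}$, $\widehat\mu_j$, and $\widehat f_{mjt}$ established earlier.
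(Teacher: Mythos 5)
Your derivation up to the starred-residual expansion is sound, and it is a genuinely different route from the paper's: the paper does not linearize $\widehat D_j-D_j$ at all, but starts from the exact sample identity $\langle \widehat\psi_{lj},v_{ij}\rangle=\widehat\lambda_{lj}^{-1}\langle \widehat C_j^*(\widehat\psi_{lj}),\widehat C_j^*(v_{ij})\rangle$, keeps the estimated residual $\widehat\varepsilon_{jt}$ from the very first line, and only replaces $\widehat C_j^*(\widehat\psi_{lj})$, $\overline Y$ and $\widehat\lambda_{lj}$ by their limits. Your combination of the Lemma \ref{lem:aux9} decomposition with $C_j^*(v_{ij})=0$ and the non-predictiveness of $\varepsilon_{jt}$ correctly gives $\sqrt T\langle\widehat\psi_{lj},v_{ij}\rangle=\frac{1}{\sqrt T}\sum_{t=1}^T\lambda_{lj}^{-1}\langle\varepsilon_{jt}^*,v_{ij}\rangle\langle s_{lj}\gamma_{lj},Y_t-E[Y_t]\rangle+o_P(1)$, i.e.\ the expansion in terms of the population residual $\varepsilon_{jt}^*$, which is the form that ultimately feeds into Lemma \ref{lem:aux22}.

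The genuine gap is your final step, the substitution $\langle\varepsilon_{jt}^*,v_{ij}\rangle\mapsto\langle\widehat\varepsilon_{jt},v_{ij}\rangle$ ``at an $o_P(1)$ cost''. The difference is $\langle\widehat\varepsilon_{jt},v_{ij}\rangle-\langle\varepsilon_{jt}^*,v_{ij}\rangle=\langle\mu_j-\widehat\mu_j,v_{ij}\rangle-\sum_{m=1}^{K_j}\widehat f_{mjt}\langle\widehat\psi_{mj},v_{ij}\rangle$, and each summand is indeed $\mathcal O_P(T^{-1/2})$, but this is a pointwise-in-$t$ rate: after multiplying by $\lambda_{lj}^{-1}\langle s_{lj}\gamma_{lj},Y_t-E[Y_t]\rangle$ and summing over $t$ with the $T^{-1/2}$ scaling you have $T$ terms of size $T^{-1/2}$, which is $\mathcal O_P(1)$ unless the sum is itself centered. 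It is not: the coefficient $\langle\widehat\psi_{mj},v_{ij}\rangle$ is common to all $t$ and multiplies $\frac{1}{\sqrt T}\sum_{t=1}^T\widehat f_{mjt}\langle s_{lj}\gamma_{lj},Y_t-E[Y_t]\rangle$, and for $m=l$ this scaled sum diverges like $\sqrt T\,\lambda_{lj}$ because $\frac{1}{T}\sum_{t=1}^T\widehat f_{ljt}\langle s_{lj}\gamma_{lj},Y_t-E[Y_t]\rangle\overset{p}{\to}\|\gamma_{lj}\|^2=\lambda_{lj}$ by Assumption \ref{as:identifiability2}(b). The net effect of your substitution is therefore $-\sqrt T\langle\widehat\psi_{lj},v_{ij}\rangle\,(1+o_P(1))$, a term of exactly the same order as the left-hand side, so ``averaged and weighted'' does not make it negligible and the argument does not reach the stated $\widehat\varepsilon_{jt}$-form. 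That this interface is delicate can be seen from the exact algebraic identity $\frac{1}{T}\sum_{t=1}^T\langle\widehat\varepsilon_{jt},v\rangle\langle\widehat\gamma_{lj},Y_t-\overline Y\rangle=0$ for every $v\in H$ (it follows from $\widehat C_j(\widehat\gamma_{lj})=\widehat\lambda_{lj}\widehat\psi_{lj}$ and $\langle\widehat\gamma_{lj},\widehat\gamma_{mj}\rangle=\widehat\lambda_{lj}1_{\{l=m\}}$): with fully hatted weights the residual-weighted sum is degenerate, so the hatted and starred versions of the expansion differ by terms of the target's own order, and passing from one to the other requires the exact sample-side bookkeeping that the paper's proof performs through $\widehat C_j^*(v_{ij})$ rather than a crude $\mathcal O_P(T^{-1/2})$-per-term bound.
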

\begin{proof}
With $\widehat D_j = \widehat C_j \widehat C_j^*$ we have
$$
\langle \widehat \psi_{lj}, v_{ij} \rangle = \frac{1}{\widehat \lambda_{lj}} \langle \widehat D_j(\widehat \psi_{lj}), v_{ij} \rangle 
= \frac{1}{\widehat \lambda_{lj}} \langle \widehat C_j^*(\widehat \psi_{lj}), \widehat C_j^*(v_{ij}) \rangle,
$$
where
$$
\widehat C_j^*(v_{mj})(r) = \frac{1}{T} \sum_{t=1}^T \langle \widehat \varepsilon_{jt}, v_{ij} \rangle (Y_t(r) - \overline Y(r)), \quad \widehat \varepsilon_{jt}(s) = X_{jt}(s) - \widehat \mu_j(s) - \widehat F_{jt}'\widehat \Psi_j(s),
$$
and, by Lemma \ref{lem:aux15} and the consistency of $\overline Y$,
\begin{align*}
	\widehat C_j^*(\widehat \psi_{lj})(r) &= \frac{1}{T} \sum_{t=1}^T \widehat f_{ljt} (Y_t(r) - \overline Y(r)) \\
	&= \frac{1}{T} \sum_{t=1}^T s_{lj} f_{ljt}^*  (Y_t(r) - E[Y_t(r)]) + \mathcal O_P(T^{-1/2}) \\
	&= s_{lj} \gamma_{lj}(r) + o_P(1)
\end{align*}
where the last step follows from the law of large numbers for alpha-mixing sequences implying
$$
	\frac{1}{T} \sum_{t=1}^T f_{ljt}^*  (Y_t(r) - E[Y_t(r)]) \overset{p}{\to} \gamma_{lj}(r) = E[f_{ljt}^* (Y_t(r) - E[Y_t(r)])].
$$
Therefore,
$$
	\sqrt T \langle \widehat \psi_{lj}, v_{ij} \rangle  = \frac{1}{\sqrt T} \sum_{t=1}^T \frac{1}{\widehat \lambda_{lj}} \langle \widehat \varepsilon_{jt}, v_{ij} \rangle \langle s_{lj} \gamma_{lj}, Y_t - \overline Y \rangle + o_P(1),
$$
and the results follows by the $\sqrt T$-consistency of the primitives, Cauchy-Schwarz, and bounded fourth moments.
\end{proof}

\begin{lemma} \label{lem:aux20}
$|\widehat \varepsilon_{jt}(s) - \varepsilon_{jt}^*(s)| = \mathcal O_P(T^{-1/2})$.
\end{lemma}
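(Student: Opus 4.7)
The plan is to use the orthogonality-free decomposition
$$
\widehat \varepsilon_{jt}(s) - \varepsilon_{jt}^*(s) = \bigl(\mu_j(s) - \widehat \mu_j(s)\bigr) + \sum_{l=1}^{K_j} \Bigl\{ (s_{lj} f_{ljt}^* - \widehat f_{ljt}) \, s_{lj} \psi_{lj}(s) + \widehat f_{ljt} \bigl(s_{lj} \psi_{lj}(s) - \widehat \psi_{lj}(s)\bigr) \Bigr\},
$$
obtained by inserting $s_{lj}^2 = 1$ and grouping terms so each summand isolates a single primitive estimator. Then I would bound each piece by $\mathcal O_P(T^{-1/2})$ and close with the triangle inequality.

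For the mean term, at any fixed $s \in [a,b]$, $X_{jt}(s) - \mu_j(s)$ is a zero-mean $\alpha$-mixing sequence of size $-\nu/(\nu-2)$ with bounded $\nu$th moments by Assumption \ref{as:estimation}(a), so the CLT for $\alpha$-mixing processes gives $|\widehat \mu_j(s) - \mu_j(s)| = \mathcal O_P(T^{-1/2})$. For the factor-difference term, Lemma \ref{lem:aux15} supplies $|s_{lj} f_{ljt}^* - \widehat f_{ljt}| = \mathcal O_P(T^{-1/2})$, and $\psi_{lj}$ is continuous on the compact interval $[a,b]$ by Assumption \ref{as:estimation}(b), hence bounded. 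Finally, $|\widehat f_{ljt}|$ is $\mathcal O_P(1)$ by combining Lemma \ref{lem:aux15} with the bounded-moment assumption on $X_{jt}$.

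The main obstacle is the eigenfunction-difference term $|s_{lj} \psi_{lj}(s) - \widehat \psi_{lj}(s)|$: Theorem \ref{theo:primitves}(d) only provides an $L^2$-rate, whereas pointwise control is required. This gap is closed by invoking Assumption \ref{as:estimation}(b): since $X_{jt}$ and $Y_t$ have differentiable sample paths, the kernel $\widehat d_j(r,s)$ inherits differentiability, and hence so does $\widehat \psi_{lj}$ through the eigenequation $\widehat D_j(\widehat \psi_{lj}) = \widehat \lambda_{lj} \widehat \psi_{lj}$; together with the differentiability of $\psi_{lj}$, the derivative of $s_{lj} \widehat \psi_{lj} - \psi_{lj}$ is $\mathcal O_P(1)$ uniformly. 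A Sobolev-type interpolation inequality on $[a,b]$ then upgrades the $L^2$-rate to a sup-norm rate $\sup_s |s_{lj} \widehat \psi_{lj}(s) - \psi_{lj}(s)| = \mathcal O_P(T^{-1/2})$. Assembling the three pieces via the triangle inequality (with $K_j$ finite) yields the claimed bound.
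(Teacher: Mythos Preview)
Your decomposition is correct and matches the paper's implicit approach; the paper's own proof is a one-liner that appeals to ``$\sqrt T$-consistency of the primitives, Lemma~\ref{lem:aux15}, Cauchy--Schwarz, and bounded fourth moments.'' The subtlety you flag --- that Theorem~\ref{theo:primitves}(d) supplies only an $L^2$-rate for $\widehat\psi_{lj}$, whereas the statement is pointwise in $s$ --- is real and not spelled out in the paper.

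However, your resolution via Sobolev interpolation does not deliver the claimed rate. With $\|s_{lj}\widehat\psi_{lj}-\psi_{lj}\|=\mathcal O_P(T^{-1/2})$ and only $\|(s_{lj}\widehat\psi_{lj}-\psi_{lj})'\|=\mathcal O_P(1)$, the one-dimensional interpolation inequality
\[
\|g\|_\infty^2 \;\le\; \tfrac{1}{b-a}\|g\|^2 + 2\,\|g\|\,\|g'\|
\]
yields $\|s_{lj}\widehat\psi_{lj}-\psi_{lj}\|_\infty=\mathcal O_P(T^{-1/4})$, not $\mathcal O_P(T^{-1/2})$. To recover the correct rate you need one more ingredient. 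The cleanest route is to exploit the eigenequation directly: since $\widehat\psi_{lj}(s)=\widehat\lambda_{lj}^{-1}\int_a^b \widehat d_j(s,q)\widehat\psi_{lj}(q)\,dq$ and similarly for $\psi_{lj}$, a telescoping difference together with Cauchy--Schwarz in $q$ bounds $|s_{lj}\widehat\psi_{lj}(s)-\psi_{lj}(s)|$ by quantities of the form $\|\widehat d_j(s,\cdot)-d_j(s,\cdot)\|$, $\|s_{lj}\widehat\psi_{lj}-\psi_{lj}\|$, and $|\widehat\lambda_{lj}-\lambda_{lj}|$, each of which is $\mathcal O_P(T^{-1/2})$ at a fixed $s$ under Assumption~\ref{as:estimation}. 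Alternatively, you may strengthen your derivative bound to $\|(s_{lj}\widehat\psi_{lj}-\psi_{lj})'\|=\mathcal O_P(T^{-1/2})$ by differentiating the eigenequation and repeating the Theorem~\ref{theo:primitves} arguments for the differentiated kernel; then the Sobolev embedding $\|g\|_\infty\le C\|g\|_{H^1}$ gives the desired sup-norm rate. As written, though, the interpolation step falls short by a factor of $T^{-1/4}$.
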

\begin{proof}
	This follows by the $\sqrt T$-consistency of the primitives, Lemma \ref{lem:aux15}, Cauchy-Schwarz, and bounded fourth moments.
\end{proof}

\begin{lemma} \label{lem:aux21}
$|\widehat \gamma_{lj}(s) - s_{lj} \gamma_{lj}(s)| = o_P(1)$.
\end{lemma}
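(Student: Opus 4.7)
The plan is to start from the definitions
$$
\widehat \gamma_{lj}(s) = \frac{1}{T} \sum_{t=1}^T \widehat f_{ljt} (Y_t(s) - \overline Y(s)), \qquad \gamma_{lj}(s) = E[f_{ljt}^* (Y_t(s) - E[Y_t(s)])],
$$
and write the difference as a sum of three easier-to-control pieces. Concretely, I would insert and subtract $s_{lj} f_{ljt}^*(Y_t(s)-E[Y_t(s)])$ to obtain
\begin{align*}
\widehat \gamma_{lj}(s) - s_{lj} \gamma_{lj}(s)
&= \frac{1}{T}\sum_{t=1}^T \big(\widehat f_{ljt} - s_{lj} f_{ljt}^*\big)(Y_t(s) - \overline Y(s)) \\
&\quad + \frac{1}{T}\sum_{t=1}^T s_{lj} f_{ljt}^* \big(E[Y_t(s)] - \overline Y(s)\big) \\
&\quad + \bigg(\frac{1}{T}\sum_{t=1}^T s_{lj} f_{ljt}^*(Y_t(s) - E[Y_t(s)]) - s_{lj} \gamma_{lj}(s)\bigg).
\end{align*}

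The first two terms I would bound using Cauchy--Schwarz, the uniformly bounded fourth moments from Assumption \ref{as:estimation}(a), and the previously established $\sqrt T$-rates. For the first term, $\max_t |\widehat f_{ljt} - s_{lj} f_{ljt}^*| = \mathcal O_P(T^{-1/2})$ by Lemma \ref{lem:aux15}, while $\frac{1}{T}\sum_t |Y_t(s) - \overline Y(s)| = \mathcal O_P(1)$ by bounded second moments, giving $\mathcal O_P(T^{-1/2})$. For the second term, Theorem \ref{theo:primitves}(a) gives $\|\overline Y - E[Y_t]\| = \mathcal O_P(T^{-1/2})$ pointwise (or at least in the $L^2$ sense, which suffices at fixed $s$ after a standard smoothness/continuity argument), and $\frac{1}{T}\sum_t |f_{ljt}^*| = \mathcal O_P(1)$, again yielding $\mathcal O_P(T^{-1/2})$.

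For the third, main, term, I would note that $f_{ljt}^*(Y_t(s) - E[Y_t(s)])$ is a zero-mean, strictly stationary, $\alpha$-mixing sequence of size $-\nu/(\nu-2)$ (as a measurable function of the underlying mixing process) with bounded second moments, so the law of large numbers for $\alpha$-mixing processes (Theorem 3.47 in \cite{White.2001}) delivers convergence in probability to $s_{lj}\gamma_{lj}(s)$. Summing the three $o_P(1)$ contributions yields the claim.

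The only mildly delicate point is ensuring the bound on the second term at a fixed $s \in [a,b]$ rather than only in mean square; this follows because $\overline Y(s) - E[Y_t(s)]$ has $\sqrt T$-rate pointwise under Assumption \ref{as:estimation}(a), and no uniformity in $s$ is required for Lemma \ref{lem:aux21} as stated. Everything else is a direct recycling of the tools already used throughout the auxiliary lemmas, so I do not anticipate any genuine obstacle.
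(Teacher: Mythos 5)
Your decomposition and tools are essentially the paper's own proof: the paper likewise replaces $\widehat f_{ljt}$ by $s_{lj}f_{ljt}^*$ and $\overline Y$ by $E[Y_t]$ at rate $\mathcal O_P(T^{-1/2})$ using Lemma \ref{lem:aux15}, the $\sqrt T$-consistency of the primitives, Cauchy--Schwarz, and bounded fourth moments, and then invokes the law of large numbers for $\alpha$-mixing sequences for the remaining centered average. One small correction: Lemma \ref{lem:aux15} gives $|\widehat f_{ljt}-s_{lj}f_{ljt}^*|=\mathcal O_P(T^{-1/2})$ for each fixed $t$, not $\max_{1\le t\le T}$, since its bound involves $\|X_{jt}-\mu_j\|$, whose maximum over $t$ grows with $T$; the clean fix, consistent with the paper's style (cf.\ Lemma \ref{lem:aux1}), is to bound your first term by Cauchy--Schwarz across $t$, using $\frac{1}{T}\sum_{t=1}^T(\widehat f_{ljt}-s_{lj}f_{ljt}^*)^2\le 2\|\widehat\mu_j-\mu_j\|^2+2\|\widehat\psi_{lj}-s_{lj}\psi_{lj}\|^2\,\frac{1}{T}\sum_{t=1}^T\|X_{jt}-\mu_j\|^2=\mathcal O_P(T^{-1})$, which yields the same $o_P(1)$ conclusion without the unjustified maximal claim.
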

\begin{proof}
	By the $\sqrt T$-consistency of the primitives, Lemma \ref{lem:aux15}, Cauchy-Schwarz, and bounded fourth moments,
	$$
		\widehat \gamma_{lj}(s) = \frac{1}{T} \sum_{t=1}^T s_{lj} f_{ljt}^* (Y_t(r) - E[Y_t(r)] ) + \mathcal O_P(T^{-1/2}),
	$$
	and the result follows by the law of large numbers for alpha-mixing sequences.
\end{proof}

\begin{lemma} \label{lem:aux22}
\begin{align*}
	&\sqrt T (\widehat \psi_{lj}(s) - s_{lj} \psi_{lj}(s)) \\
	&= \frac{1}{\sqrt T} \sum_{t=1}^T \bigg( \sum_{\substack{m=1\\ m \neq l}}^{K_j} \frac{ g_{lmjt}}{\lambda_{lj} - \lambda_{mj}} s_{mj} \psi_{mj}(s) +  \frac{1}{\lambda_{lj}} \varepsilon_{jt}^*(s) \langle s_{lj} \gamma_{lj}, Y_t - E[Y_t] \rangle \rangle \bigg) + o_P(1)
\end{align*}
\end{lemma}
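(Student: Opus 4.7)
My plan is to expand $\widehat\psi_{lj}-s_{lj}\psi_{lj}$ in the complete orthonormal basis $\{\psi_{1j},\dots,\psi_{K_j j}\}\cup\{v_{ij}\}_{i\in\mathbb N}$ of $H$, where $\{v_{ij}\}_i$ is an orthonormal basis of $\mathrm{span}(\psi_{1j},\dots,\psi_{K_j j})^\perp$. Writing $P_\perp$ for the orthogonal projection onto that complement, this gives
\begin{equation*}
	\widehat\psi_{lj}(s)-s_{lj}\psi_{lj}(s)
	= \sum_{m=1}^{K_j}\langle\widehat\psi_{lj}-s_{lj}\psi_{lj},\psi_{mj}\rangle\psi_{mj}(s)
	+ P_\perp\widehat\psi_{lj}(s).
\end{equation*}
For $m=l$, Lemma \ref{lem:aux8} yields $\langle s_{lj}\psi_{lj},\widehat\psi_{lj}-s_{lj}\psi_{lj}\rangle=\mathcal O_P(T^{-1})$, so this component contributes $o_P(T^{-1/2})$ after scaling by $\sqrt T$. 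For $m\ne l$, I rewrite $\langle\widehat\psi_{lj}-s_{lj}\psi_{lj},\psi_{mj}\rangle\psi_{mj}(s)=\langle s_{mj}\psi_{mj},\widehat\psi_{lj}-s_{lj}\psi_{lj}\rangle s_{mj}\psi_{mj}(s)$ and apply Lemma \ref{lem:aux10} to produce $\frac{1}{\sqrt T}\sum_t g_{lmjt}(\lambda_{lj}-\lambda_{mj})^{-1}s_{mj}\psi_{mj}(s)+o_P(1)$ after scaling and summing over $m\ne l$, giving the first term of the claim.

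The bulk of the work concerns $\sqrt T\,P_\perp\widehat\psi_{lj}(s)$. I would start from the eigenequation $\widehat\psi_{lj}=\widehat\lambda_{lj}^{-1}\widehat C_j(\widehat C_j^*(\widehat\psi_{lj}))$ and write $\widehat C_j^*(\widehat\psi_{lj})=s_{lj}\gamma_{lj}+R_T$, where the argument in the proof of Lemma \ref{lem:aux19} establishes $\|R_T\|=\mathcal O_P(T^{-1/2})$. The key observation for controlling the remainder contribution is that $C_j^* P_\perp=0$: since $c_j(r,s)=(\Psi_j(r))'E[F_{jt}Y_t(s)]$ under Assumption \ref{as:identifiability2}, the range of $C_j^*$ is $\mathrm{span}(\gamma_{1j},\dots,\gamma_{K_j j})$ and any $v$ in $\mathrm{span}(\psi_{1j},\dots,\psi_{K_j j})^\perp$ maps to zero. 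Hence $\|\widehat C_j^* P_\perp\|_{\mathcal S}=\|(\widehat C_j^*-C_j^*)P_\perp\|_{\mathcal S}\le\|\widehat C_j-C_j\|_{\mathcal S}=\mathcal O_P(T^{-1/2})$ by Theorem \ref{theo:primitves}, so $\|P_\perp\widehat C_j(R_T)\|\le\|R_T\|\cdot\|\widehat C_j^* P_\perp\|_{\mathcal S}=\mathcal O_P(T^{-1})$. Multiplied by $\widehat\lambda_{lj}^{-1}\sqrt T$, this remainder is $o_P(1)$.

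It remains to evaluate the leading term $\widehat\lambda_{lj}^{-1}s_{lj}\sqrt T\,P_\perp\widehat C_j(\gamma_{lj})(s)$. Writing out the operator gives $P_\perp\widehat C_j(\gamma_{lj})(s)=\frac{1}{T}\sum_t P_\perp(X_{jt}-\widehat\mu_j)(s)\langle Y_t-\overline Y,\gamma_{lj}\rangle$, and the identity $P_\perp(X_{jt}-\widehat\mu_j)=\varepsilon_{jt}^*+P_\perp(\mu_j-\widehat\mu_j)$ splits the expression into two pieces. The $P_\perp(\mu_j-\widehat\mu_j)$-piece is nonrandom in $t$ and is annihilated exactly because $\sum_t(Y_t-\overline Y)=0$. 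Replacing $\overline Y$ by $E[Y_t]$ in the surviving sum costs only $\mathcal O_P(T^{-1})$ via Lemma \ref{lem:mu} and the fact that $\frac{1}{T}\sum_t\varepsilon_{jt}^*(s)=\mathcal O_P(T^{-1/2})$, and combining with $\widehat\lambda_{lj}^{-1}=\lambda_{lj}^{-1}+o_P(1)$ from Theorem \ref{theo:primitves} yields $\frac{1}{\lambda_{lj}\sqrt T}\sum_t\varepsilon_{jt}^*(s)\langle s_{lj}\gamma_{lj},Y_t-E[Y_t]\rangle+o_P(1)$, which is precisely the second term of the claim.

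The main technical hurdle is the operator-norm rate $\|\widehat C_j^* P_\perp\|_{\mathcal S}=\mathcal O_P(T^{-1/2})$, which crucially exploits the finite-rank structure of the population cross-covariance operator $C_j$ guaranteed by the approximate factor model, i.e., the exact identity $C_j^* P_\perp=0$. Without this rank deficiency the naive Cauchy--Schwarz bound on the remainder would be only $\mathcal O_P(1)$ after scaling by $\sqrt T$, and the orthogonal complement part of the expansion would not collapse to the clean closed form stated in the lemma.
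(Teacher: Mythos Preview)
Your proof is correct. The overall architecture matches the paper's: expand $\widehat\psi_{lj}-s_{lj}\psi_{lj}$ in the basis $\{\psi_{mj}\}\cup\{v_{ij}\}$, dispose of the $m=l$ term via Lemma~\ref{lem:aux8}, and handle the $m\neq l$ terms via Lemma~\ref{lem:aux10}. The genuine difference lies in the treatment of the orthogonal-complement piece $P_\perp\widehat\psi_{lj}$. The paper proceeds coefficient by coefficient, applying Lemma~\ref{lem:aux19} to each $\langle\widehat\psi_{lj},v_{ij}\rangle$ and then resumming over~$i$, finally invoking Lemma~\ref{lem:aux20} to replace $\widehat\varepsilon_{jt}$ by $\varepsilon_{jt}^*$. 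You instead work directly at the operator level: from $\widehat\psi_{lj}=\widehat\lambda_{lj}^{-1}\widehat C_j\widehat C_j^*\widehat\psi_{lj}$ you isolate the leading part $s_{lj}\gamma_{lj}$ of $\widehat C_j^*\widehat\psi_{lj}$ and kill the remainder via the structural identity $C_j^*P_\perp=0$, which turns the naive $\mathcal O_P(T^{-1/2})$ bound on $P_\perp\widehat C_j(R_T)$ into $\mathcal O_P(T^{-1})$. This is arguably cleaner: it sidesteps the question of whether the $o_P(1)$ remainders in Lemma~\ref{lem:aux19} are summable over infinitely many~$i$, and it lands on $\varepsilon_{jt}^*$ directly through $P_\perp(X_{jt}-\mu_j)=\varepsilon_{jt}^*$ rather than via a second replacement step. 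Your observation that the $P_\perp(\mu_j-\widehat\mu_j)$ contribution vanishes exactly because $\sum_t(Y_t-\overline Y)=0$ is a nice touch. Both routes ultimately exploit the same finite-rank structure of $C_j$; yours does so once, globally, whereas the paper does so implicitly inside Lemma~\ref{lem:aux19} for each basis direction.
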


\begin{proof}
Let $\{v_{ij}\}_{i \in \mathbb N}$ be an orthonormal basis of $span(\psi_{1j}, \ldots, \psi_{K_j j})^\perp$. Then, $\{\widehat \psi_{lj}\}_{l=1}^{K_j} \cup \{v_{il}\}_{l \in \mathbb N}$ forms an orthonormal basis of $L^2[a,b]$. Using the  representation of $\sqrt T (\widehat \psi_{lj}(s) - s_{lj} \psi_{lj}(s))$ with respect to this basis, we get
\begin{align*}
	&\sqrt T (\widehat \psi_{lj}(s) - s_{lj} \psi_{lj}(s)) \\
	&= \sqrt T \sum_{m=1}^{K_j} \langle \widehat \psi_{lj} - s_{lj} \psi_{lj}, \psi_{mj} \rangle \psi_{mj}(s) + \sqrt T \sum_{i=1}^\infty \langle \widehat \psi_{lj} - s_{lj} \psi_{lj}, v_{ij} \rangle v_{ij}(s) \\
	&= \sqrt T \sum_{\substack{m=1\\ m \neq l}}^{K_j} \langle \widehat \psi_{lj} - s_{lj} \psi_{lj}, \psi_{mj} \rangle \psi_{mj}(s) + \sqrt T \sum_{i=1}^\infty \langle \widehat \psi_{lj}, v_{ij} \rangle v_{ij}(s) + \mathcal O_P(T^{-1/2}) \\
	&= \frac{1}{\sqrt T} \sum_{t=1}^T \bigg( \sum_{\substack{m=1\\ m \neq l}}^{K_j} \frac{ s_{mj} g_{lmjt}}{\lambda_{lj} - \lambda_{mj}} s_{mj} \psi_{mj}(s) +   \sum_{i=1}^\infty \frac{1}{\lambda_{lj}} \langle \widehat \varepsilon_{jt}, v_{ij} \rangle \langle s_{lj} \gamma_{lj}, Y_t - E[Y_t] \rangle v_{ij}(s)  \bigg)+ o_P(1) \\
	&= \frac{1}{\sqrt T} \sum_{t=1}^T \bigg( \sum_{\substack{m=1\\ m \neq l}}^{K_j} \frac{  g_{lmjt}}{\lambda_{lj} - \lambda_{mj}} s_{mj} \psi_{mj}(s) +   \frac{1}{\lambda_{lj}} \widehat \varepsilon_{jt}(s) \langle s_{lj} \gamma_{lj}, Y_t - E[Y_t] \rangle \rangle \bigg) + o_P(1).
\end{align*}
Here, we applied Lemma \ref{lem:aux8} in the third line, Lemmas \ref{lem:aux10} and \ref{lem:aux19} in the fourth line, and the fact that $\langle \widehat \varepsilon_{jt}, \widehat \psi_{lj} \rangle = 0$ by construction implying $\widehat \varepsilon_{jt}(s) = \sum_{i=1}^\infty \langle \widehat \varepsilon_{jt}, v_{ij} \rangle v_{ij}(s)$ in the fifth line.
Finally, the result follows by Lemma \ref{lem:aux20}. 
\end{proof}

\begin{lemma} \label{lem:aux23}
\begin{align*}
	&\sqrt T (\widehat \psi_{lj}(s) - s_{lj} \psi_{lj}(s)) \\
	&= \frac{1}{\sqrt T} \sum_{t=1}^T \bigg( \sum_{\substack{m=1\\ m \neq l}}^{K_j} \frac{ \widehat g_{lmjt}}{\widehat \lambda_{lj} - \widehat \lambda_{mj}} \widehat \psi_{mj}(s) +  \frac{1}{\widehat \lambda_{lj}} \widehat \varepsilon_{jt}(s) \langle \widehat \gamma_{lj}, Y_t - \overline{Y} \rangle \rangle \bigg) + o_P(1)
\end{align*}
\end{lemma}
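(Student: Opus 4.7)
The plan is to mirror the derivation of Lemma \ref{lem:aux22}, but to expand $\widehat{\psi}_{lj}(s)-s_{lj}\psi_{lj}(s)$ in the \emph{sample} basis $\{\widehat{\psi}_{mj}\}_{m=1}^{K_j}\cup\{\widehat v_{ij}\}_i$, where $\{\widehat v_{ij}\}_i$ is any orthonormal basis of the orthogonal complement of $\mathrm{span}\{\widehat{\psi}_{1j},\ldots,\widehat{\psi}_{K_j j}\}$, and then to replace every remaining population object by its sample counterpart using the consistency results from Theorem \ref{theo:primitves} and Lemmas \ref{lem:aux15}, \ref{lem:aux16}, \ref{lem:aux20}, and \ref{lem:aux21}.

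First, the diagonal $m=l$ coefficient equals $\langle\widehat{\psi}_{lj}-s_{lj}\psi_{lj},\widehat{\psi}_{lj}\rangle=1-s_{lj}\langle\psi_{lj},\widehat{\psi}_{lj}\rangle$, which is $O_P(T^{-1})$ by the Gram--Schmidt identity underlying Lemma \ref{lem:aux8}, hence contributes $o_P(1)$ after multiplication by $\sqrt T$. For the off-diagonal coefficients $\langle\widehat{\psi}_{lj}-s_{lj}\psi_{lj},\widehat{\psi}_{mj}\rangle$ with $m\neq l$, I would reproduce the computation of Lemma \ref{lem:aux10} in sample form, starting from $\widehat D_j\widehat{\psi}_{mj}=\widehat{\lambda}_{mj}\widehat{\psi}_{mj}$ and $D_j(s_{lj}\psi_{lj})=\lambda_{lj}(s_{lj}\psi_{lj})$ to obtain
$$(\widehat{\lambda}_{mj}-\widehat{\lambda}_{lj})\langle\widehat{\psi}_{mj}, s_{lj}\psi_{lj}\rangle=\langle\widehat{\psi}_{mj},(\widehat D_j-D_j)(s_{lj}\psi_{lj})\rangle+O_P(T^{-1}),$$
and then rewrite $\widehat D_j=\widehat C_j\widehat C_j^*$ through $\widehat C_j^*\widehat{\psi}_{lj}=\widehat{\gamma}_{lj}$ and the sample kernel representation of $\widehat c_j$ so that the leading term equals $T^{-1}\sum_t\widehat g_{lmjt}/(\widehat{\lambda}_{lj}-\widehat{\lambda}_{mj})$. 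For the perpendicular coefficients, I would mimic the proof of Lemma \ref{lem:aux19}: from $\widehat{\psi}_{lj}=\widehat{\lambda}_{lj}^{-1}\widehat C_j\widehat{\gamma}_{lj}$ and the sample kernel of $\widehat c_j$ one gets $\sqrt T\langle\widehat{\psi}_{lj},\widehat v_{ij}\rangle=(\widehat{\lambda}_{lj}\sqrt T)^{-1}\sum_t\langle\widehat{\varepsilon}_{jt},\widehat v_{ij}\rangle\langle\widehat{\gamma}_{lj},Y_t-\overline Y\rangle$, and summing $\widehat v_{ij}(s)\langle\widehat{\psi}_{lj},\widehat v_{ij}\rangle$ reassembles $\widehat{\varepsilon}_{jt}(s)$ via $\widehat{\varepsilon}_{jt}=\sum_i\langle\widehat{\varepsilon}_{jt},\widehat v_{ij}\rangle\widehat v_{ij}$, which holds because $\widehat{\varepsilon}_{jt}\perp\widehat{\psi}_{mj}$ by construction.

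The main obstacle is to verify that the residuals from the replacements $\psi_{mj}\to\widehat{\psi}_{mj}$, $\lambda_{lj}-\lambda_{mj}\to\widehat{\lambda}_{lj}-\widehat{\lambda}_{mj}$, $\varepsilon_{jt}^*\to\widehat{\varepsilon}_{jt}$, $s_{lj}\gamma_{lj}\to\widehat{\gamma}_{lj}$, and $Y_t-E[Y_t]\to Y_t-\overline Y$ accumulate to $o_P(1)$ rather than $O_P(1)$. Each factor multiplying the $\frac{1}{\sqrt T}\sum_t$ is deterministic in $t$, so the residuals factor cleanly through the $T^{-1/2}$-rates of Theorem \ref{theo:primitves}, while Lemmas \ref{lem:aux16}, \ref{lem:aux20}, and \ref{lem:aux21} supply the pointwise discrepancies and the CLT/LLN for $\alpha$-mixing sequences controls the aggregate sums. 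The delicate algebraic step is matching the sample-centered objects $\widehat g_{lmjt}$ and $\widehat{\varepsilon}_{jt}(s)\langle\widehat{\gamma}_{lj},Y_t-\overline Y\rangle$ back to the population-centered expansion of Lemma \ref{lem:aux22} without the $o_P(1)$ remainders blowing up through the $\sqrt T$ scaling.
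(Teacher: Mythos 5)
Your diagonal and off-diagonal steps are workable (they mirror Lemmas \ref{lem:aux8} and \ref{lem:aux10} with the roles of the hatted and unhatted eigenpairs exchanged), but the perpendicular step fails, and that is where the content of the lemma lives. Since $\widehat v_{ij}$ is, by your own construction, orthogonal to $\mathrm{span}\{\widehat\psi_{1j},\ldots,\widehat\psi_{K_j j}\}$, the inner product $\langle\widehat\psi_{lj},\widehat v_{ij}\rangle$ you propose to expand is identically zero, so your ``sample version of Lemma \ref{lem:aux19}'' is the vacuous identity $0=0$: indeed, because $\widehat C_j\widehat\gamma_{lj}=\widehat\lambda_{lj}\widehat\psi_{lj}$ and $\langle\widehat\gamma_{lj},\widehat\gamma_{mj}\rangle=\widehat\lambda_{lj}\delta_{lm}$ hold exactly in sample, the quantity you reassemble, $T^{-1}\sum_{t}\widehat\varepsilon_{jt}(s)\langle\widehat\gamma_{lj},Y_t-\overline Y\rangle$, vanishes identically, so it cannot be derived as the perpendicular part of anything nontrivial. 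What your sample-basis expansion actually requires is the coefficient $\langle\widehat\psi_{lj}-s_{lj}\psi_{lj},\widehat v_{ij}\rangle=-s_{lj}\langle\psi_{lj},\widehat v_{ij}\rangle$, i.e.\ the projection of the \emph{population} eigenfunction onto the \emph{sample} orthocomplement. This is $O_P(T^{-1/2})$ but not $o_P(T^{-1/2})$, so after the $\sqrt T$ scaling it is an $O_P(1)$ term your argument never produces; and there is no clean sample analogue of Lemma \ref{lem:aux19} for it, because $\psi_{lj}$ satisfies the population eigenequation, not the sample one — the identity in Lemma \ref{lem:aux19} works precisely because there the perpendicular coefficient is $\langle\widehat\psi_{lj},v_{ij}\rangle$ with $v_{ij}$ in the population orthocomplement, so the exact sample eigenequation can be applied. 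To control the missing term you would have to pass back to population objects anyway, i.e.\ essentially redo Lemmas \ref{lem:aux19} and \ref{lem:aux22}.

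This is also where your route diverges from the paper's. The paper does not rework the expansion in the sample basis at all: it takes the population-basis expansion of Lemma \ref{lem:aux22} as given and replaces, summand by summand, $g_{lmjt}$, the eigenvalues, $\psi_{mj}$, $\varepsilon_{jt}^*$, $s_{lj}\gamma_{lj}$ and $Y_t-E[Y_t]$ by their estimates, using the $\sqrt T$-consistency of the primitives (Theorem \ref{theo:primitves}) together with Lemmas \ref{lem:aux16}, \ref{lem:aux20}, and \ref{lem:aux21} to argue that the accumulated substitution error is $o_P(1)$. Your closing paragraph gestures at exactly this substitution argument, so the viable portion of your plan is in effect the paper's proof; the additional machinery of re-deriving the expansion in the sample basis is unnecessary, and in the perpendicular direction it is unsound for the reason above.
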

\begin{proof}
	This follows from Lemma \ref{lem:aux22} and by the $\sqrt T$-consistency of the primitives, Cauchy-Schwarz, bounded fourth moments, and Lemmas \ref{lem:aux16}, \ref{lem:aux20}, and  \ref{lem:aux21}.
\end{proof}

\subsection{Specification of the simulated coefficient functions}\label{app:beta_spec}
The bivariate coefficient functions in the simulation chapter can be expressed in matrix notation as
\begin{equation*}
	\beta_j(r,s)=(V(r))'\mathcal{B}_j V(s),
\end{equation*}
where $V(r) = (v_1(r), v_2(r), \ldots, v_K(r))'$ is a $K \times 1$ vector of Fourier basis functions evaluated at $r$, and $\mathcal{B}_j$ is a $K \times K$ matrix.

In our implementation, we have two slope functions $j=\left\lbrace 1,2\right\rbrace $ with $K=3$ for both of them and we set

$$
\mathcal{B}_1=
\begin{bmatrix}
	-0.03 & 0.09 & 0.15 \\
	0.11 & -0.94 & 0.26 \\
	-0.30 & -0.17 & 0.21
\end{bmatrix}; \quad
\mathcal{B}_2=
\begin{bmatrix}
	-0.41 & -0.29 & -0.28 \\
	-0.42 & -0.26 & 0.09 \\
	-0.73 & -0.12 & 0.15
\end{bmatrix}.
$$
These specific choices of $\mathcal{B}_1$ and $\mathcal{B}_2 $ ensure that $rank(\beta_j(r,s))=K$.

\subsection{Functional regression coefficients and confidence regions}\label{app:func.reg.coefs}
\begin{figure}[H]
	\centering
	\begin{subfigure}[b]{\textwidth}
		\centering
		\includegraphics[width=\textwidth]{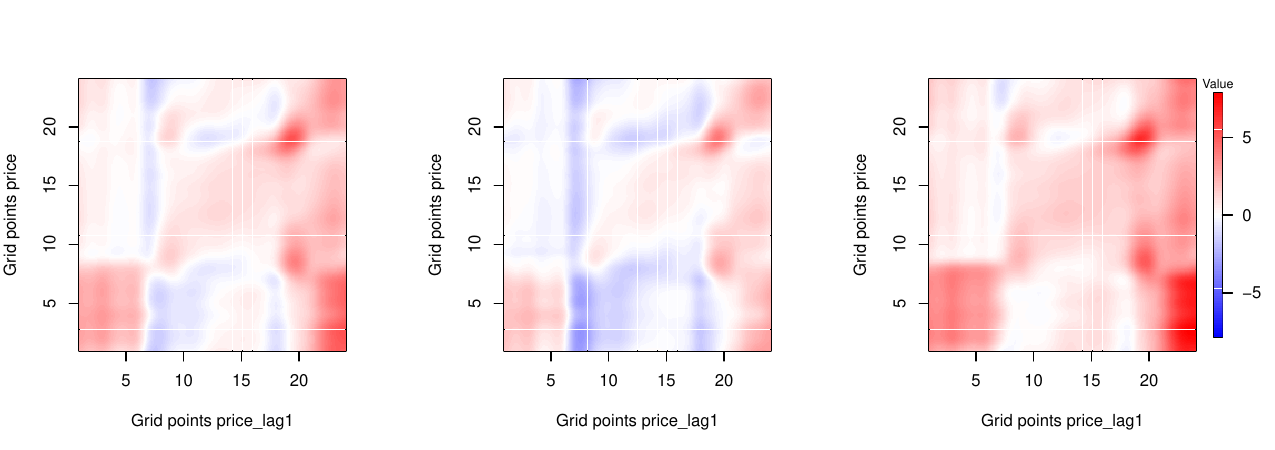}
		\caption{$\widehat{\varphi}_1(r,s)$: impact of $P_{t-1}(s)$ on $P_{t}(r)$.}
	\end{subfigure}
	
	\begin{subfigure}[b]{\textwidth}
		\centering
		\includegraphics[width=\textwidth]{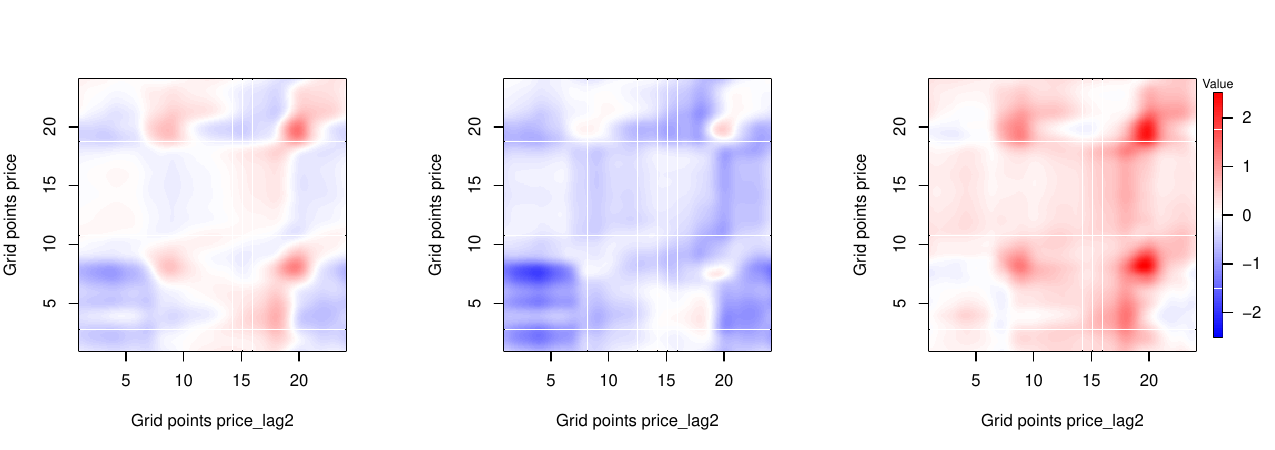}
		\caption{$\widehat{\varphi}_2(r,s)$: impact of $P_{t-2}(s)$ on $P_{t}(r)$.}
	\end{subfigure}
	
	\begin{subfigure}[b]{\textwidth}
		\centering
		\includegraphics[width=\textwidth]{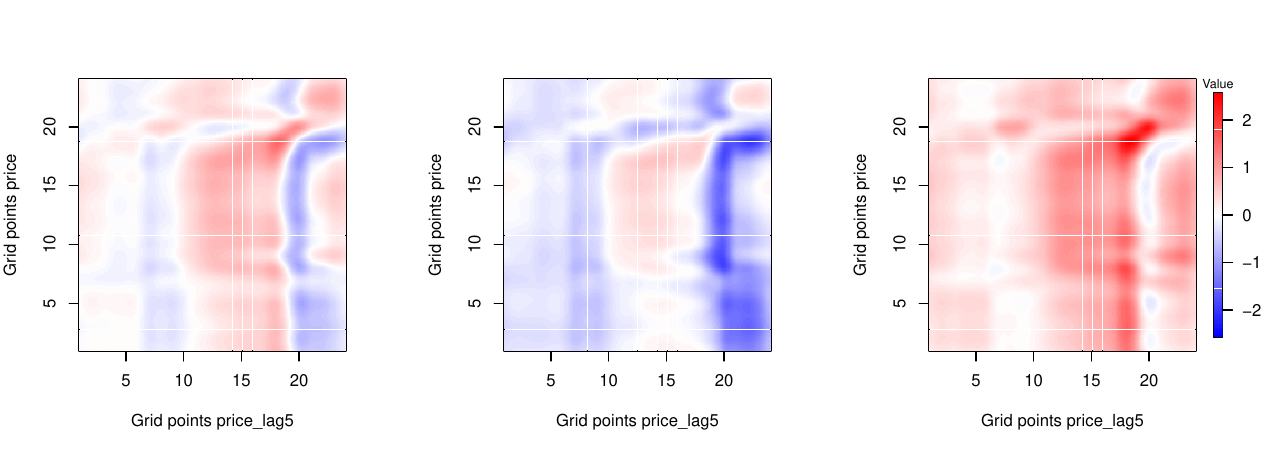}
		\caption{$\widehat{\varphi}_5(r,s)$: impact of $P_{t-5}(s)$ on $P_{t}(r)$.}
	\end{subfigure}
		\caption{First column panels show the estimated functional regression coefficients of model (\ref{eq:elec.FAFR}). Second and third column panels display the lower and upper confidence regions of the $95\%$ confidence interval. The estimated number of factors used in this analysis are shown in Table \ref{ta:epf.K}. The underlying German power market data spans all working days between 2012 and 2017.}
\end{figure}

\begin{figure}[H]\ContinuedFloat
	\centering
	\begin{subfigure}[b]{\textwidth}
		\centering
		\includegraphics[width=\textwidth]{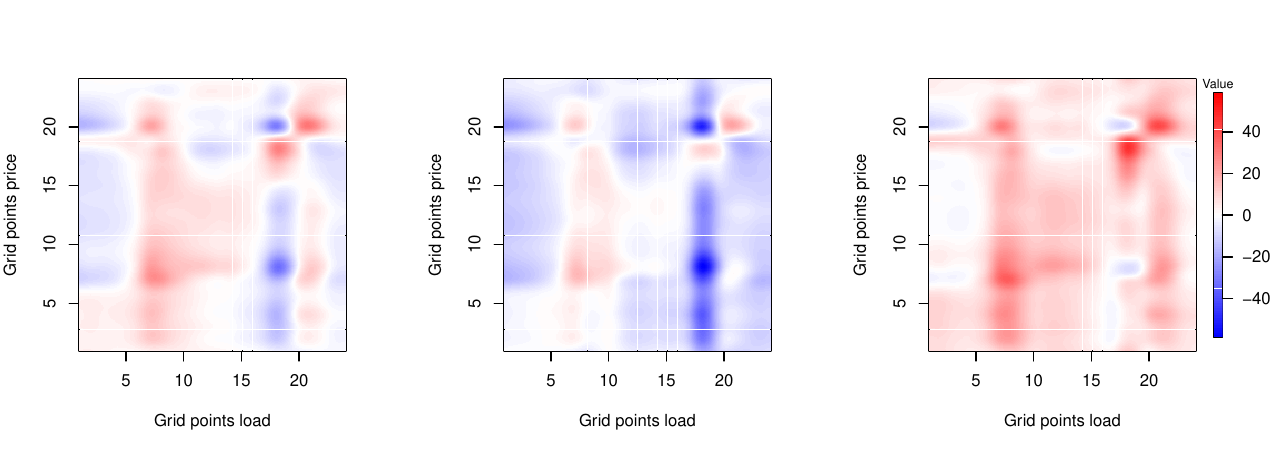}
		\caption{$\widehat{\beta}_{1,1}(r,s)$: impact of $L_{t}(s)$ on $P_{t}(r)$.}
	\end{subfigure}
	
	\begin{subfigure}[b]{\textwidth}
		\centering
		\includegraphics[width=\textwidth]{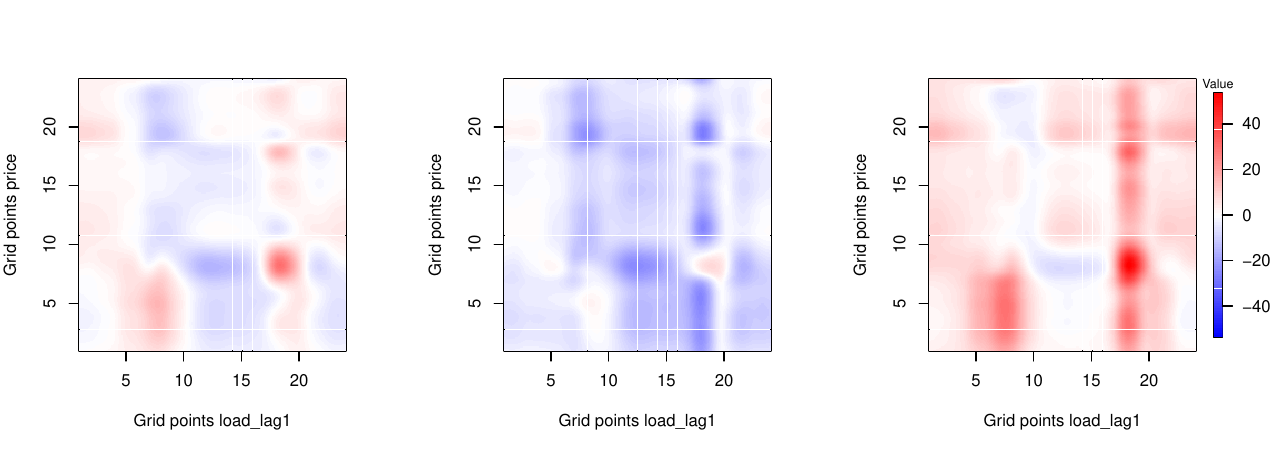}
		\caption{$\widehat{\beta}_{1,2}(r,s)$: impact of $L_{t-1}(s)$ on $P_{t}(r)$.}
	\end{subfigure}
	
	\begin{subfigure}[b]{\textwidth}
		\centering
		\includegraphics[width=\textwidth]{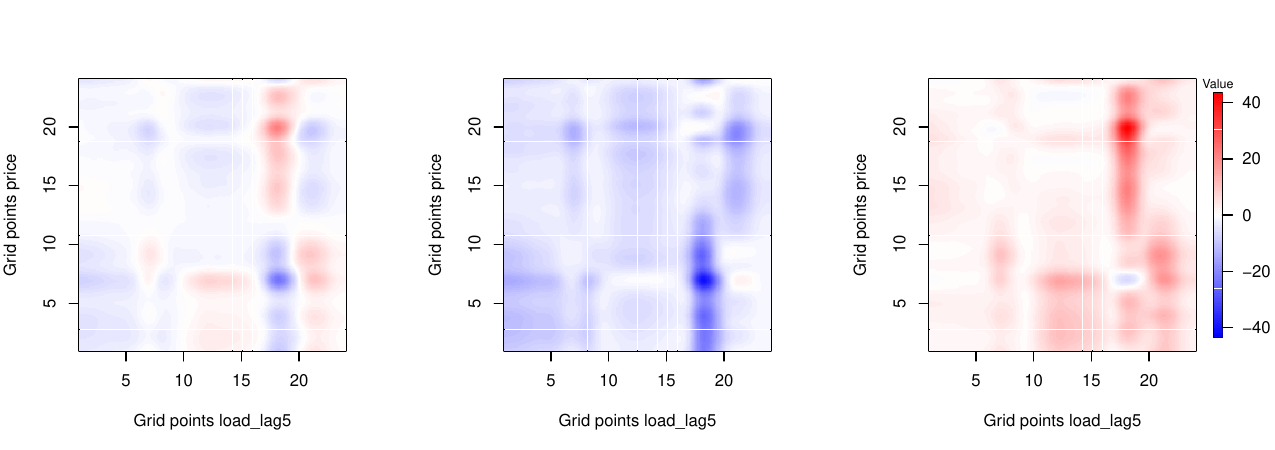}
		\caption{$\widehat{\beta}_{1,5}(r,s)$: impact of $L_{t-5}(s)$ on $P_{t}(r)$.}
	\end{subfigure}
		\caption{First column panels show the estimated functional regression coefficients of model (\ref{eq:elec.FAFR}). Second and third column panels display the lower and upper confidence regions of the $95\%$ confidence interval. The estimated number of factors used in this analysis are shown in Table \ref{ta:epf.K}. The underlying German power market data spans all working days between 2012 and 2017 (cont.).}
\end{figure}

\begin{figure}[H]\ContinuedFloat
	\centering
	\begin{subfigure}[b]{\textwidth}
		\centering
		\includegraphics[width=\textwidth]{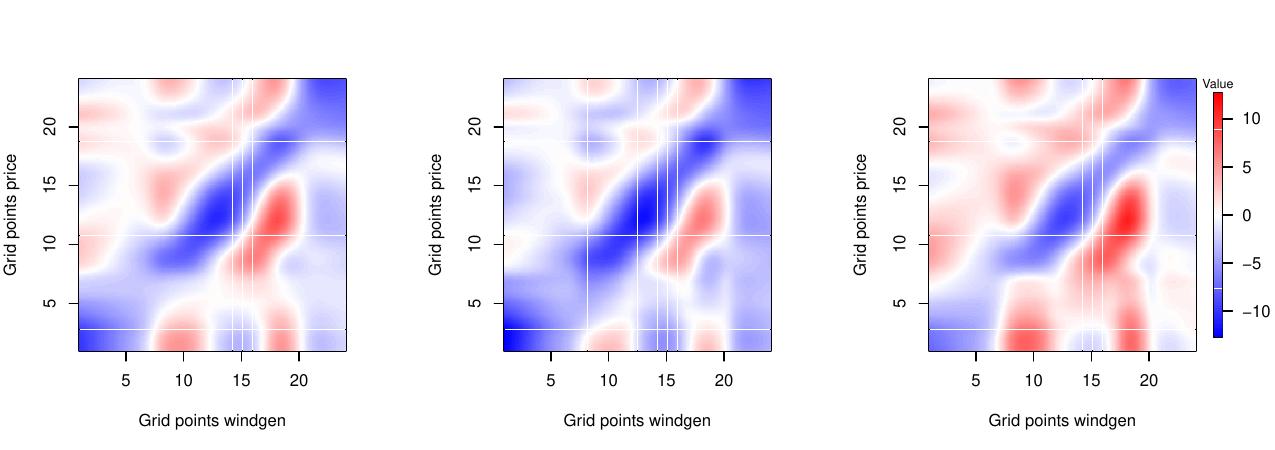}
		\caption{$\widehat{\beta}_{2,1}(r,s)$: impact of $G_{t}(s)$ on $P_{t}(r)$.}
	\end{subfigure}
	
	\begin{subfigure}[b]{\textwidth}
		\centering
		\includegraphics[width=\textwidth]{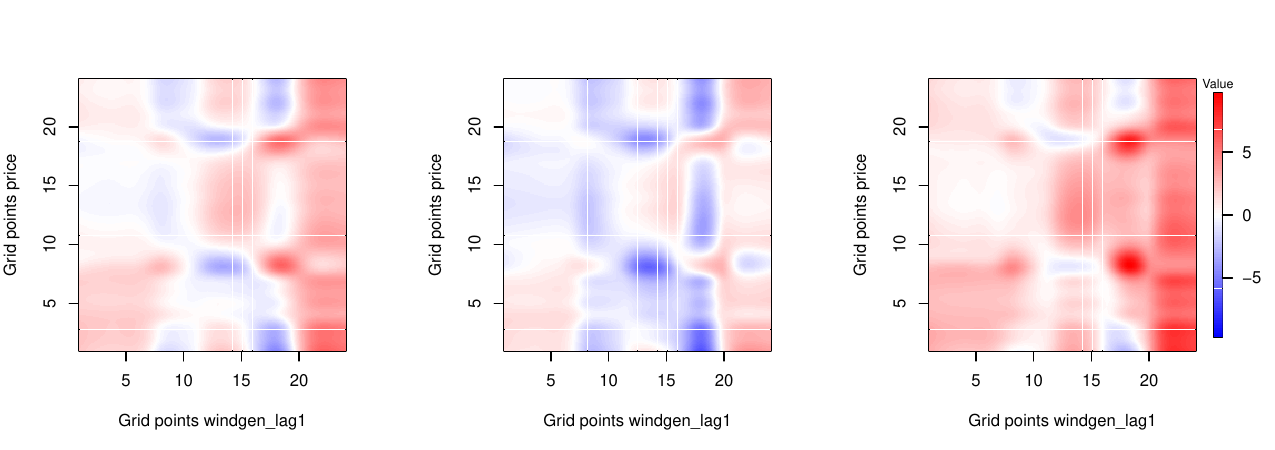}
		\caption{$\widehat{\beta}_{2,2}(r,s)$: impact of $G_{t-1}(s)$ on $P_{t}(r)$.}
	\end{subfigure}
	
	\begin{subfigure}[b]{\textwidth}
		\centering
		\includegraphics[width=\textwidth]{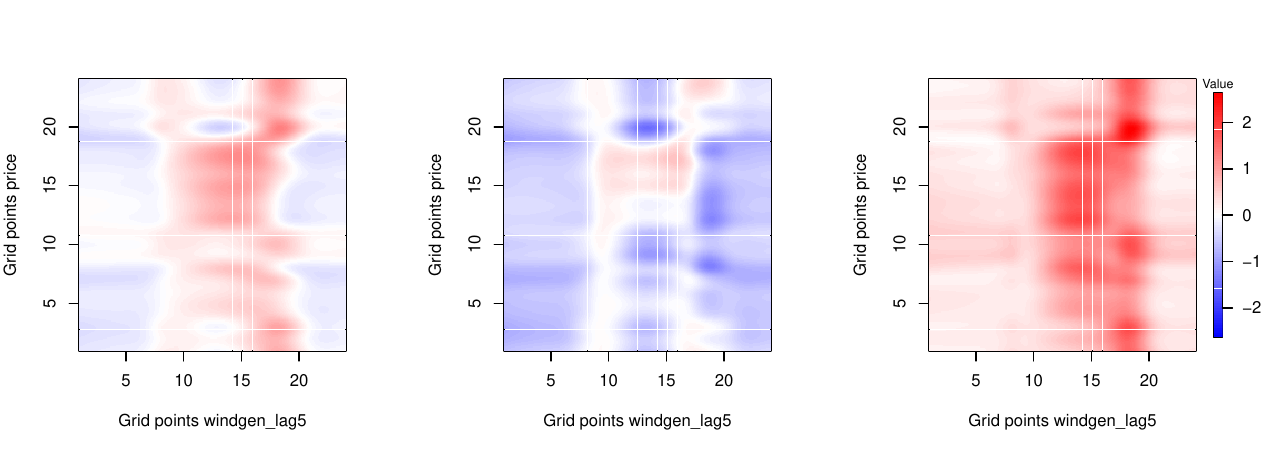}
		\caption{$\widehat{\beta}_{2,5}(r,s)$: impact of $G_{t-5}(s)$ on $P_{t}(r)$.}
	\end{subfigure}
	\caption{First column panels show the estimated functional regression coefficients of model (\ref{eq:elec.FAFR}). Second and third column panels display the lower and upper confidence regions of the $95\%$ confidence interval. The estimated number of factors used in this analysis are shown in Table \ref{ta:epf.K}. The underlying German power market data spans all working days between 2012 and 2017 (cont.).}
\end{figure}

\subsection{Functional pointwise p-values}\label{app:func.p.vals}
\begin{figure}[H]
	\centering
	\begin{subfigure}[b]{\textwidth}
		\centering
		\includegraphics[width=\textwidth]{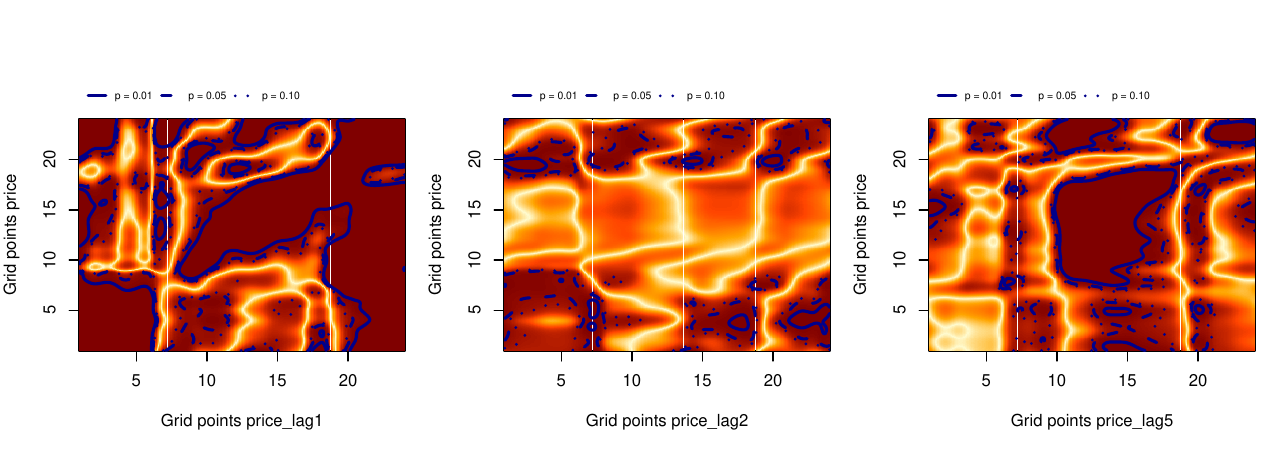}
		\caption{p-value regions for the price regression coefficients $\widehat{\varphi}_1(r,s)$, $\widehat{\varphi}_2(r,s)$  and $\widehat{\varphi}_5(r,s)$.}
	\end{subfigure}
	
	\begin{subfigure}[b]{\textwidth}
		\centering
		\includegraphics[width=\textwidth]{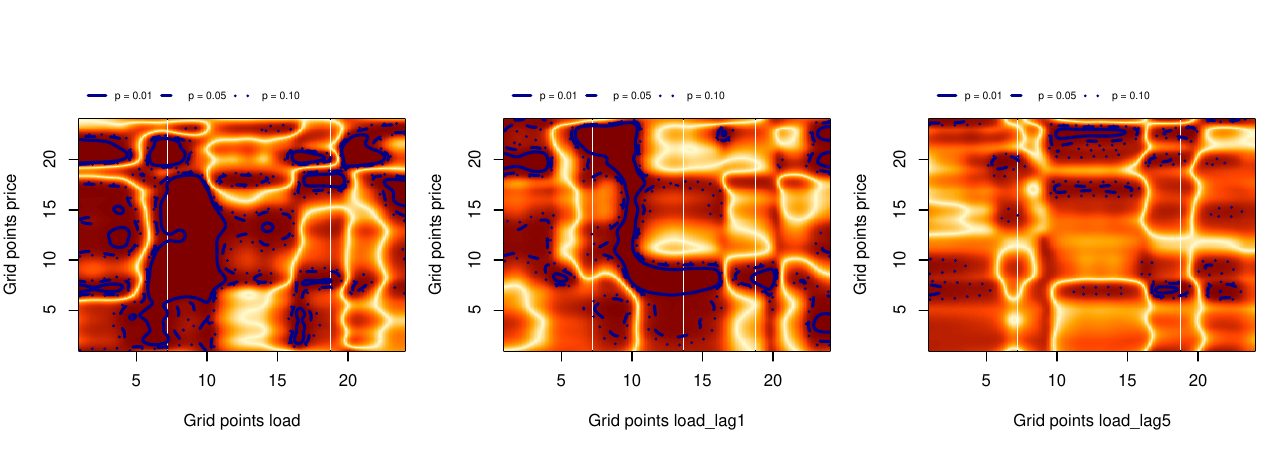}
		\caption{p-value regions for the load regression coefficients $\widehat{\beta}_{1,1}(r,s)$, $\widehat{\beta}_{1,2}(r,s)$ and $\widehat{\beta}_{1,5}(r,s)$.}
	\end{subfigure}
	
	\begin{subfigure}[b]{\textwidth}
		\centering
		\includegraphics[width=\textwidth]{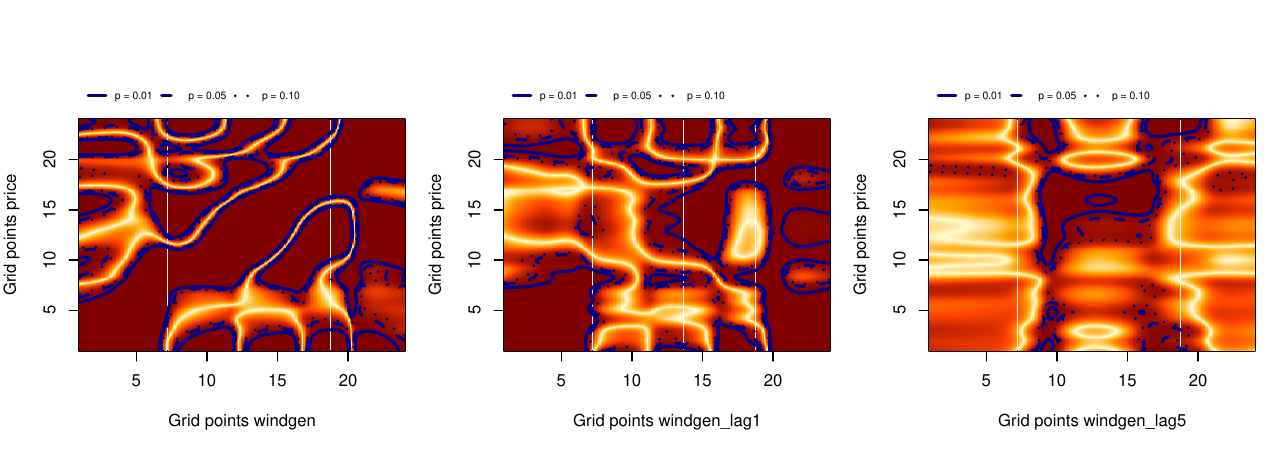}
		\caption{p-value regions for the wind- and solar generation regression coefficients $\widehat{\beta}_{2,1}(r,s)$, $\widehat{\beta}_{2,2}(r,s)$ and $\widehat{\beta}_{2,5}(r,s)$.}
	\end{subfigure}
		\caption{Pointwise p-values from a two-tailed t-test on difference to zero. Dark red color indicates smaller values and the contour lines as defined in the legends indicate significance according to standard alpha levels of 0.01, 0.05 and 0.1.}
\end{figure}

\end{document}